\newcommand{\anote}[1]{\authnote{\text{ András}}{#1}{green}}
\newcommand{\jnote}[1]{\authnote{ J}{#1}{red}}
\newcommand{\shownote}[1]{{\color{blue} {\bf  NOTE:} #1}}
\newcommand{\footref}[1]{\textsuperscript{\textup{#1}}}
\newtheorem{theorem}{Theorem} 
\newtheorem{lemma}[theorem]{Lemma}
\newtheorem{corollary}[theorem]{Corollary}
\newtheorem{definition}[theorem]{Definition}
\mathchardef\mhyphen="2D
\newcommand{\ket}[1]{|#1\rangle}
\newcommand{\bra}[1]{\langle#1|}
\newcommand{\braket}[2]{\langle#1|#2\rangle}
\newcommand{\ketbra}[2]{|#1\rangle\! \langle #2|}
\newcommand{\Tr}{\mbox{\rm Tr}}
\newcommand{\tr}[1]{\Tr\left(#1\right)}
\newcommand{\opt}{\mbox{\rm OPT}}
\newcommand{\polylog}{\mbox{\rm polylog}}
\newcommand{\poly}{\mbox{\rm poly}}
\newcommand{\F}{\mathcal{F}}
\newcommand{\R}{\mathcal{R}}
\newcommand{\eps}{\varepsilon}
\newcommand{\nrm}[1]{\left\lVert#1\right\rVert}
\def\01{\{0,1\}}
\newcommand{\diag}{\mbox{\rm diag}}
\newcommand{\bigO}[1]{\mathcal{O}\left( #1 \right)}
\newcommand{\bOt}[1]{\widetilde{\mathcal O}\left(#1\right)}
\newcommand{\BOT}[2]{\widetilde{\mathcal O}_{#1}\left(#2\right)}
\newcommand{\srho}{{ \varrho}}
\newcommand{\asrho}{{ \tilde{ \varrho}}}
\newcommand{\ssigma}{{ \varsigma}}
\newcommand{\assigma}{{ \tilde{ \varsigma}}}
\begin{document}

\title{Improvements in Quantum SDP-Solving with Applications }
\author{Joran van Apeldoorn\thanks{QuSoft, CWI, the Netherlands. Supported by the Netherlands Organization for Scientific Research, grant number 617.001.351. {\tt apeldoor@cwi.nl}}
  \and
  Andr\'as Gily\'en\thanks{QuSoft, CWI, the Netherlands.
    Supported by ERC Consolidator Grant 615307-QPROGRESS. {\tt gilyen@cwi.nl}}
}
\date{}
\maketitle

\begin{abstract}
\noindent Following the first paper on quantum algorithms for SDP-solving by Brand\~ao and Svore~\cite{brandao:quantumsdp} in 2016, rapid developments has been made on quantum optimization algorithms. In 2017 van Apeldoorn et al.~\cite{AGGW:SDP} improved the quantum algorithm introduced by~\cite{brandao:quantumsdp} and gave stronger lower bounds as well. Recently Brand\~ao et al.~\cite{brandao:expSDP2} improved the quantum SDP-solver in the so-called quantum state input model, where the input matrices of the SDP are given as purified mixed states. They also gave the first non-trivial application of quantum SDP-solving by obtaining a more efficient algorithm for the problem of shadow tomography~\cite{aaronson:shadow}.

In this paper we improve on all previous quantum SDP-solvers. Mainly we construct better Gibbs-samplers for both input models, which directly gives better bounds for SDP-solving. We also combine the Fast Quantum OR lemma of Brand\~ao et al.~\cite{brandao:expSDP2} and the Gentle Quantum Search Lemma of Aaronson~\cite{aaronson:shadow}, and use the techniques from van Apeldoorn et al.~\cite{AGGW:SDP} to give an improved general quantum SDP-solving framework. For an SDP with $m$ constraints involving $n\times n$ matrices, our improvements yield an $\bOt{  \left( \sqrt{m}  +  \sqrt{n}\gamma  \right)s \gamma^4}$ upper bound on SDP-solving in the sparse matrix input model and an $\bOt{ \left(\sqrt{m}+B^{2.5}\gamma^{3.5} \right)B\gamma^4 }$ upper bound in the quantum state input model. Here $\gamma = Rr/\eps$ is the additive error $\eps$ scaled down with bounds $R$ and $r$ on the size of optimal solutions, $s$ is the row-sparsity of the input matrices in the sparse matrix input model and $B$ is a normalization factor for the input states in the quantum state model. We also introduce the quantum operator input model, which generalizes both other input models. In this more general model we give an $\bOt{  \left( \sqrt{m}  +  \sqrt{n}\gamma  \right)\alpha \gamma^4}$-query algorithm, where $\alpha$ is a normalization factor of the input operators. 

We then apply these results to the problem of shadow tomography to simultaneously improve the best known upper bounds on sample complexity~\cite{aaronson:shadow} and complexity~\cite{brandao:expSDP2}. Furthermore, we apply our quantum SDP-solvers to the problems of quantum state discrimination and E-optimal design. In both cases we beat the classical lower bound in terms of some parameters, at the expense of heavy dependence on some other parameters.

Finally we prove two lowers bounds for solving SDPs using quantum algorithms: (1) $\tilde{\Omega}(\sqrt{m}B/\eps)$ in the quantum state input model, and (2) $\tilde{\Omega}(\sqrt{m}\alpha/\eps)$ in the quantum operator input model. These lower bounds show that the $\sqrt{m}$ factor and the polynomial dependence on the parameters $B,\alpha$, and $1/\eps$ are necessary. 
\end{abstract}

\newpage

\section{Introduction} 
\subsection{Semidefinite programs} \label{subsec:sdp} 
In this paper we consider \emph{Semidefinite programs} (SDPs). SDPs have many applications in optimization, notable examples include approximation of NP-hard problems like MAXCUT~\cite{GoemansWilliamson95} and polynomial optimization through the Sum-Of-Squares hierarchy~\cite{lasserre:sos,parrilo:thesis}. SDPs have also found applications in quantum information theory. Examples include POVM measurement design~\cite{eldar:sdp} and finding the winning probability of non-local games~\cite{cleve2004cal}.

We consider the basic (primal) form of an SDP as follows:
\begin{align} \label{eq:SDP}
\opt = \max \quad &\Tr(CX) \\ 
\text{s.t.}\ \ \ &\Tr(A_j X) \leq b_j \quad \text{ for all } j \in [m], \notag \\
&X \succeq 0, \notag
\end{align}
where $[m]:=\{1,\ldots,m\}$.
The input to the problem consists of $n\times n$ Hermitian constraint matrices $A_1,\ldots,A_m$, an objective matrix $C$ and reals $b_1,\ldots,b_m$.  For normalization purposes we assume $\nrm{C},\nrm{A_j} \leq 1$.
The number of constraints is~$m$ (we do not count the standard $X\succeq 0$ constraint for this). The variable~$X$ of this SDP is an $n\times n$ positive semidefinite (psd) matrix. 
We assume that $A_1 = I$ and $b_1 = R$, giving a known bound on the trace of a solution: $\tr{X} \leq R$. 
A primal SDP also has a \emph{dual}. For a primal SDP of the above form~\eqref{eq:SDP} the dual SDP is
\begin{align} \label{eq:SDP2}
\opt =   \min \quad & b^T y \\ \notag
  \text{s.t.}\ \ \ &\sum_{j=1}^m y_j A_j - C \succeq 0,\\ \notag
             &y \geq 0.
\end{align}
We assume that the dual optimum is attained and that an explicit $r\geq 1$ is known such that at least one optimal dual solution $y$ exists $\nrm{y}_1\leq r$. 
These assumptions imply that strong duality holds, justifying the use of $\opt$ for both optimal values. Linear programs (LPs) correspond to the case where all constraint  matrices are diagonal.

In this paper we build on the observation that a normalized psd matrix can  be naturally represented as a quantum state. Since operations on quantum states can sometimes be cheaper to perform on a quantum computer than operations on classical descriptions of matrices, this can give rise to faster algorithms for solving SDPs on a quantum computer~\cite{brandao:quantumsdp}.

We say an algorithm is an $\eps$-approximate \emph{quantum SDP-solver} if for all input numbers $g\in\mathbb{R}$ and $\zeta\in(0,1)$, with success probability $1-\zeta$, all of the following hold:
\begin{itemize}
\item The algorithm determines whether $\opt\leq g -\eps$ or $\opt\geq g +\eps$. If $\opt \in [g-\eps,g+\eps]$ then it may output either.
\item The algorithm finds a $y\in \mathbb{R}^{m+1}$ that is an \emph{$\eps$-feasible solution} to the dual problem with objective value at most $g +\eps$, i.e.,
\[
\sum_{j=1}^m y_jA_j -C \succeq -\eps I,
\]
and $\langle y,b\rangle \leq g+\eps$, or it concludes that no such $y$ exists even if we would set $\eps = 0$. 
\item The algorithm finds a vector $y'\in \mathbb{R}^{m+1}$ and a real number $z$ such that for 
\begin{equation}\label{eq:hardtowrite}
\rho:= \frac{e^{-\sum_{j=1}^m y'_jA_j + y'_0 C} }{ \tr{e^{-\sum_{j=1}^m y'_jA_j + y'_0 C}}}
\end{equation}
 we have that $z\rho$ is an \emph{$\eps$-feasible primal solution} with objective value at least $g- \eps$, i.e.,
\[
  \forall j\in [m]\colon\tr{z\rho A_j } \leq b_j +\eps,
\]
and $\tr{z \rho C} \geq g - \eps$, or concludes that no such $z$ and $y'$ exist even if we would set $\eps = 0$. 
\end{itemize}
 Notice that we can easily find an approximation of $\opt$ using binary search on $g$ if we have an $\eps$-approximate SDP-solver. An algorithm that only satisfies the last of the three points will be called an \emph{$\eps$-approximate SDP primal oracle}. Due to the form of the objective value constraint in this last point, and to simplify statements like~\eqref{eq:hardtowrite}, we write $A_0 := -C$ and $b_0:= -g$.

In Sections~\ref{subsec:Hamsim} and~\ref{subsec:stategibbs} we will work with \emph{subnormalized density operators}:
\begin{definition}[Subnormalized density operators \& Purification]
A \emph{subnormalized density operator} $\srho$ is a psd matrix of trace at most $1$. 

A \emph{purification} of a subnormalized density operator $\srho$ is a pure state consisting of $3$ registers such that tracing out the third register\footnote{\label{notePurification}For simplicity we assume that for a $d$-dimensional density operator a purification has at most polylog$(d)$ qubits.} and projecting on the subspace where the second register is $\ket{0}$ yields $\srho$. 

We write ``$\srho$'' and ``$\ssigma$'' for subnormalized density operators to distinguish them from normalized density operators, for which we write ``$\rho$'' and ``$\sigma$''. 
\end{definition}

\paragraph{Notation.} We use the following definition for $\tilde{\mathcal{O}}$:
\[
\BOT{d,e}{f(a,b,c)} := \bigO{f(a,b,c) \cdot \polylog(f(a,b,c),d,e)}.
\]
We define $\tilde{\Omega}$ in a similar way and $\tilde{\Theta}$ as the intersection of the two.
We write $\delta_{ij}$ for the Kronecker delta function and $e_j$ for the $j$th basis vector in the standard basis when the dimension of the space is clear from context. For a Hermitian matrix $H$ we write $\mathrm{Spec}(H)$ for its spectrum (set of eigenvalues). For a function $f:\mathbb{R}\rightarrow\mathbb{R}$ we write $f(H)$ for the matrix we get by applying $f$ to the eigenvalues of $H$, i.e., 
\[
f(H) = U \begin{bmatrix} f(\lambda_1) & & \\ & \ddots & \\ & & f(\lambda_n)\end{bmatrix} U^{-1} \text{ where } H =  U \begin{bmatrix} \lambda_1 & & \\ & \ddots & \\ & & \lambda_n\end{bmatrix} U^{-1}.
\]

\subsection{Input models \& Subroutines} 
We will consider three input models: the \emph{sparse matrix model}, the \emph{quantum state model}, and the \emph{quantum operator model}. In all models we assume quantum oracle access to the numbers $b_j$ via the input oracle $O_{b}$ satisfying\footnote{\label{noteBits}For simplicity we assume the bitstring representation has at most $\bigO{\log(nm R r/\eps)}$ bits.} for all $j\in[m]\colon$
$$O_{b} \ket{j}\ket{0} = \ket{j}\ket{b_j}.$$
For all input oracles we assume we can apply both the oracle and its inverse\footnote{When we talk about samples, e.g. in Section~\ref{sec:shadow}, then we do not assume we can apply the inverse operation.} in a controlled fashion. 

\paragraph{Sparse matrix model.}
In the \emph{sparse matrix model} the input matrices are assumed to be $s$-row sparse for a known bound $s\in [n]$, meaning that there are at most $s$ non-zero elements per row. Access to the $A_j$ matrices is provided by two oracles, similar to previous work on Hamiltonian simulation in~\cite{BerryChilds:hamsimFOCS}. The first of the two oracles is a unitary $O_{\text{sparse}}$, which serves the purpose of sparse access. This oracle calculates the $\mathbf{index}: [m] \times [n] \times [s] \to [n]$ function, which for input $(j,k,\ell)$ gives the column index of the $\ell$th non-zero element in the $k$th row of $A_j$. We assume this oracle computes the index ``in place":
\begin{equation}
  O_{\text{sparse}} \ket{j,k,\ell} = \ket{j,k,\mathbf{index}(j,k,\ell)}.
  \label{eq:oracleind}
\end{equation}
(In the degenerate case where the  $k$th row has fewer than $\ell$ non-zero entries, $\mathbf{index}(j,k,\ell)$ is defined to be $\ell$ together with some special symbol indicating this case.) 

We also need another oracle $O_A$, returning a bitstring\footref{\ref{noteBits}} representation of $(A_j)_{ki}$ for every $j \in [m]$ and $k,i \in [n]$:
\begin{equation}
  O_A\ket{j,k,i,z} = \ket{j,k,i,z \oplus (A_j)_{ki} }.
  \label{eq:oraclemat}
\end{equation}

This model corresponds directly to a classical way of accessing sparse matrices.
\paragraph{Quantum state model.}
In contrast to the sparse matrix model, the \emph{quantum state model} is inherently quantum and has no classical counterpart for SDPs.\footnote{However, there is a natural classical analogue in the case of LPs, when the constraints $a_j$ are given by random variables that outputs $k$ with probability proportional to $a_{jk}$. In this case a classical algorithm with complexity $\BOT{n}{m\poly(B,\gamma)}$ is possible. In a similar manner, the classical input model can be sped up for classical LP solvers as well using techniques similar to those presented in this paper, which would lead to a $\bOt{(n+m)\poly(\gamma)}$ algorithm.}
In this model we assume that each $A_j$ has a fixed decomposition of the form
\[
A_j = \mu^+_j \srho^+_j - \mu^-_j \srho^-_j + \mu^I_j I 
\] 
for (subnormalized) density operators $\srho^{\pm}_j$, non-negative reals $\mu^{\pm}_j$ and real number $\mu^I_j\in\mathbb{R}$.
We assume access to an oracle $O_{\mu}$ that takes as input an index $j$ and outputs binary representations\footref{\ref{noteBits}} of $\mu^+_j,\mu^-_j$ and $\mu^I_j$.

Furthermore we assume access to a state-preparing oracle $O_{\ket{\cdot }}$ that prepares purifications\footref{\ref{notePurification} }$\ket{\psi^{\pm}_j}$ of $\srho^{\pm}_j$:
\[
 O_{\ket{\cdot }} \ket{j}\ket{\pm} \ket{0}  = \ket{j}\ket{\pm}\ket{\psi^{\pm}_j}.
\]
Finally we assume that a bound $B\in\mathbb{R}_+$ is known such that
\[
\forall j:  \mu^+_j+\mu^-_j + |\mu^I_j|\leq B.
\]
Note that a tight upper bound $B$ can easily be found using $\bigO{\sqrt{m}}$ quantum queries to $O_{\mu}$ by means of maximum finding~\cite{durr&hoyer:minimum}. 
\paragraph{Quantum operator model.}
We propose a new input model that we call the \emph{quantum operator model}. In this model the input matrices are given by a unitary that implements a block-encoding:
\begin{definition}[Block encoding]\label{def:standardForm}
	Suppose that $A$ is a $w$-qubit operator, $\alpha,\eps\in\mathbb{R}_+$ and $k\in \mathbb{N}$, then we say that the $(a+w)$-qubit unitary $U$ is an $(\alpha,a,\eps)$-block-encoding of $A$, if 
	$$ \nrm{A - \alpha(\bra{0}^{\otimes a}\otimes I)U(\ket{0}^{\otimes a}\otimes I)}\leq \eps.$$
\end{definition}
Roughly speaking this means that $A$ is represented by a unitary
$$ U\approx\left(\begin{array}{cc} A/\alpha & . \\ . & .\end{array}\right). $$
In the quantum operator model we assume access to an oracle $O_{U}$ that acts as follows:
\[
O_{U} \ket{j}\ket{\psi} = \ket{j}(U_j\ket{\psi}).
\]
Where $U_j$ is an $(\alpha,a,0)$-block-encoding\footnote{If $n$ is not a power of $2$, then we simply define $A_j$ to be zero on the additional $2^w-n$ dimensions.} of $A_j$, for some fixed\footnote{Having a single normalization parameter $\alpha$ is not a serious restriction as it is easy to make a block-encoding more subnormalized so that every $A_j$ gets the same normalization, cf.~Lemma~\ref{lem:linCombBlocks}.} $\alpha\in \mathbb{R}$ and $a = \bigO{\log(nm R r/\eps)}$.

In Section~\ref{subsec:Hamsim} we will show that the sparse input model can be reduced to the quantum operator model with $\alpha = s$ and that the quantum state model can be reduced to it with $\alpha = B$. We will also argue that if we can perform a measurement corresponding to $A_j\succeq 0$ using $a$ ancilla qubits, i.e., accept a state $\rho$ with probability $\tr{A_j\rho}$, then we can implement a $(1,a+1,0)$-block-encoding of $A_j$. 

\paragraph{Computational cost.} 
We will analyze the query complexity of algorithms and subroutines, i.e., the number of queries to controlled versions of the input oracles and their inverses. We will denote the optimal quantum query complexity of an $\eps$-approximate \emph{quantum SDP-solver} with success probability $2/3$ by $T_{SDP}(\eps)$. We only consider success probability $2/3$ to simplify the notation and proofs. However in all cases an $\eps$-approximate SDP-solver with success probability $1-\zeta$ can easily be constructed using $\bigO{\log(1/\zeta) T_{SDP}(\eps)}$ queries.

In our algorithms we will assume access to a quantum-read/classical-write RAM (QCRAM), and assume one read/write operation has a constant gate complexity\footnote{Note that read/write operations of a QRAM or QCRAM of size $S$ can be implemented using $\bOt{S}$ two-qubit gates, so this assumption could hide a factor in the gate complexity which is at most $\bOt{S}$.}; the size of the QCRAM will typically be $\BOT{n,m}{\left(\frac{Rr}{\eps}\right)^{\!2}}$ bits. Most often in our results the number of non-query elementary operations, i.e., two-qubit gates and QCRAM calls, matches the query complexity up to polylog factors. In particular, if not otherwise stated, in our results a $T$-query quantum algorithm uses at most $\BOT{n,m}{T}$ elementary operations. 

\paragraph{Subroutines.}
We will work with two major subroutines which need to be implemented according to the specific input model. First, the algorithm will require an implementation of a Gibbs-sampler.
\begin{definition}[Gibbs-sampler]
  A $\theta$-precise \emph{Gibbs-sampler} is a unitary that takes as input a data structure storing a vector $y\in \mathbb{R}^{m+1}_{\geq 0}$ and creates as output a purification of a $\theta$-approximation in trace distance of the Gibbs state $e^{-\sum_{j=0}^m y_jA_j } / \tr{e^{-\sum_{j=0}^m y_jA_j}}$.
	If $\nrm{y}_1 \leq K$ and the support of $y$ has size at most~$d$, then we write $T_{Gibbs}(K,d,4\theta)$ for the cost of this unitary.
  
For technical reasons we also allow Gibbs-samplers that require a random classical input seed $S \in \{0,1\}^a$ for some $a=\bigO{\log(1/\theta)}$. In this case the output should be a $\theta$-approximation of the Gibbs state with high probability $(\geq 4/5)$ over a uniformly random input seed $S$.
\end{definition}

We will use the approximate Gibbs states in order to compute the quantity $\tr{A_j\rho}$ using a trace estimator.
\begin{definition}[Trace estimator]\label{def:traceEstimator}
  A $(\theta,\sigma)$-\emph{trace estimator} is a unitary that as input takes a state $\rho$ and index $j$. It outputs a sample from a random variable $X_j\in \mathbb{R}$ such that $X_j$ is a trace estimator that is at most $\theta/4$ biased:
\[
  |\tr{A_j\rho} - \mathbb{E}[X_j]| \leq \theta / 4,
\]
and the standard deviation of $X_j$ is at most $\sigma$. 
  We write $T_{Tr}^{\sigma}(\theta)$ for the cost of such a unitary.
\end{definition}

\subsection{Previous work}
Classical SDP-solvers roughly fall into two categories: those with logarithmic dependence on $R$, $r$ and $1/\eps$, and those with polynomial dependence on these parameters but better dependence on $m$ and $n$.  In the first category the best known algorithm~\cite{lsw:faster} at the time of writing has complexity 
$$
\BOT{Rr/\eps}{m(m^2+n^\omega + mns)}.
$$
where $\omega \in [2,2.38]$ is the yet unknown exponent of matrix multiplication.

In the second category Arora and Kale~\cite{arora&kale:sdp} gave an alternative framework for solving SDPs, using a matrix version of the ``multiplicative weights update'' method. Their framework can be tuned for specific types of SDPs, allowing for near linear-time algorithms in the case of for example the Goemans-Williamson SDP for the approximation of the maximum cut in a graph~\cite{GoemansWilliamson95}.

In 2016 Brand\~ao and Svore~\cite{brandao:quantumsdp} used the Arora-Kale framework to implement a general quantum SDP-solver in the sparse matrix model. They observed that the matrix
\[
\rho := \frac{e^{-\sum_{j=0}^m y_jA_j}}{\tr{e^{-\sum_{j=0}^m y_jA_j}}},
\]  
that is used for calculations in the Arora-Kale framework is in fact a $\log(n)$-qubit Gibbs state and can be efficiently prepared as a quantum state on a quantum computer. Using this they achieved a quantum speedup in terms of $n$. Combining this with a Grover-like speedup allowed for a speedup in terms of $m$ as well, leading to an $\eps$-approximate quantum SDP solver with complexity
\[
\bOt{\sqrt{mn}s^2 \left(\frac{Rr}{\eps}\right)^{\!\!32}}.
\]
They also showed an $\Omega(\sqrt{m}+\sqrt{n})$ quantum query lower bound for solving SDPs when all other parameters are constant. This left as open question whether a better lower bound, matching the $\sqrt{mn}$ upper bound, could be found.
The upper bound for the sparse input model was subsequently improved by van Apeldoorn et al.~\cite{AGGW:SDP} to
\[
\bOt{\sqrt{mn}s^2 \left(\frac{Rr}{\eps}\right)^{\!\!8}}.
\] 
van Apeldoorn et al.~also gave an $\Omega(\sqrt{\max(n,m)}\min(n,m)^{3/2})$ lower bound, albeit for non-constant parameters $R$ and $r$. This bound implies that there is no general quantum SDP-solver that has a $o(nm)$ dependence on $n$ and $m$ and logarithmic dependence on $R$, $r$ and $1/\eps$. They also showed that every SDP-solver whose efficiency relies on outputting sparse dual solutions (including their algorithm and that of Brand\~ao and Svore~\cite{brandao:quantumsdp}) is limited, since problems with a lot of symmetry (like maxflow-mincut) in general require non-sparse dual solutions. Furthermore, they showed that for many combinatorial problems (like MAXCUT) $R$ and $r$ increase linearly with $n$ and $m$.  

Very recently Brand\~ao et al.~\cite{brandao:expSDP} gave an improved SDP-solver for the quantum state input model\footnote{This model was already introduced in the first version of~\cite{brandao:quantumsdp} together with a similar complexity statement, but there were some unresolved issues in the proof, that were only fixed by the contributions of~\cite{brandao:expSDP}.} that has a complexity bound with logarithmic dependence on $n$:
\[
T_{SDP}(\eps)=\BOT{n}{\sqrt{m}\ \poly\left(\frac{Rr}{\eps},B,\max_{j\in \{0,\dots, m\}}[\mathrm{rank}(A_j)]\right)}.
\] 
Brand\~ao et al.~also applied their algorithm to the problem of \emph{shadow tomography}, giving the first non-trivial application of a quantum SDP-solver. 

Subsequently these results where further improved by the introduction of the Fast Quantum OR lemma by the same authors~\cite{brandao:expSDP2}. Approaches prior to~\cite{brandao:expSDP2} searched for a violated constraint in the SDP using Grover-like techniques, resulting in a multiplicative complexity of Gibbs-sampling and searching. The Fast Quantum OR lemma can be used to separate the search phase from the initial Gibbs-state preparation phase. This led to the improved complexity bound~\cite{brandao:expSDP2} of
\[
\BOT{n}{\left( \sqrt{m}+\poly(\max_{j\in \{0\ldots m\}}[\mathrm{rank}(A_j)]) \right) \ \poly\left(\frac{Rr}{\eps},B\right)}.
\]
Using the Fast Quantum OR Lemma the complexity bound on $T_{SDP}(\eps)$ can be improved in the sparse input model as well,
as independently observed by the authors of~\cite{brandao:expSDP2} and by us. We thank the authors of~\cite{brandao:expSDP2} for sending us an early draft of~\cite{brandao:expSDP2} introducing the Fast Quantum OR Lemma, which enabled us to work on these improvements. During the correspondence the application of the OR lemma to the sparse matrix model was independently suggested by Brand\~ao et al.~\cite{wu:email} and by us.

\subsection{Our results}
In this paper we build on the Arora-Kale framework for SDP-solving in a similar fashion as \cite{AGGW:SDP,brandao:quantumsdp} and also use results from \cite{brandao:expSDP,LRS15} to construct a primal oracle. We improve on the previous results about quantum SDP-solving in three different ways:
\begin{itemize}
\item We give a computationally more efficient version of the Gentle Quantum Search Lemma~\cite{aaronson:shadow} using the Fast Quantum OR Lemma from \cite{brandao:expSDP2}. We also extend this to minimum finding to get our \emph{Two-Phase Quantum Minimum finding} (Lemma~\ref{lem:minfinding}). As independently observed by the authors of \cite{brandao:expSDP2} the Fast Quantum OR Lemma gives a speed-up for SDP primal oracles in general. Moreover, using Two-Phase Quantum Minimum finding, we show how to improve the upper bound on the complexity of general SDP-solving from
\begin{equation}\label{eq:SDPCostAmpEst}
T_{SDP}(\eps) = \BOT{n}{\sqrt{m}  \left(T_{Tr}^{\sigma}(\gamma)T_{Gibbs}(\gamma,\gamma^2,\gamma^{-1})\right)  \gamma^3\sigma}
\end{equation}
as implied in previous work~\cite{brandao:quantumsdp,AGGW:SDP} to
\begin{equation}\label{eq:SDPCostOR}
T_{SDP}(\eps) = \BOT{n}{  \left( \sqrt{m}  T_{Tr}^{\sigma}(\gamma) + T_{Gibbs}(\gamma,\gamma^2,\gamma^{-1})  \right) \gamma^4\sigma^2},
\end{equation}
where $\gamma = \Theta( Rr/\eps)$. For the complexity of SDP primal oracles, the same upper bounds holds. \anote{Removed the constants from the defintions in order to improve readability. }

\item We introduce the quantum operator input model, and show that it is a simultaneous generalization of the other two input models which were considered earlier. In particular we show that both the sparse model and the quantum state mdoel can be reduced to the quantum operator model model with a constant overhead and with the choices of $\alpha= s$ and $\alpha= B$ respectively. Moreover, we show that for $\sigma=\Theta(1)$, we have that 
$$ T_{Tr}^{\sigma}(\gamma)=\BOT{\gamma}{\alpha},$$
in the quantum operator model.
We also show how to simulate a linear combination of Hamiltonians efficiently using this input model, and prove that 
\[
T_{Gibbs}(K,d,\theta) = \BOT{\theta,d}{\sqrt{n}K\alpha}.
\] 
This result is based on the idea of gradually building up an efficient data structure for state preparation, following ideas of~\cite{KerenidisQuantumRecommendation}. This significantly improves the complexity of Gibbs-sampling compared to \cite{AGGW:SDP}, 
which presented a Gibbs sampler subroutine in the sparse matrix input model with complexity
\[
T_{Gibbs}(K,d,\theta) = \BOT{\theta}{\sqrt{n}Ks^2d^2}.
\]

\item We develop a new method for Gibbs-sampling in the quantum state model. Our approach, in contrast to the one in \cite{brandao:expSDP2}, does not introduce a dependence on the rank of the input matrices in the complexity. In particular we improve the complexity bound of~\cite{brandao:expSDP2} 
\[
T_{Gibbs}(K,d,\theta) = \bigO{\poly(K,B,d,1/\theta,\max_{j\in \{0\ldots m\}}[\mathrm{rank}(A_j)])}
\]
to 
\[
T_{Gibbs}(K,d,\theta) = \BOT{d,\theta,n}{ (KB)^{3.5} }.
\]
An important consequence of this improvement is that in the complexity of SDP solving we do not get a dependence on the rank of the input matrices, unlike Brand\~ao et al.~\cite{brandao:expSDP2}. For some quantum SDPs given in the quantum state input model the $A_j$ matrices could naturally correspond to quantum states. In this case $B$ would be just $1$, but the rank could easily be proportional to $n$, e.g., for a highly mixed state, eliminating the speedup over the sparse input model. Finally note that this Gibbs-sampling method is only beneficial if\nolinebreak $\sqrt{n}\leq (KB)^{2.5}$, otherwise the reduction to the quantum operator model with $\alpha = B$ gives a better algorithm.
\end{itemize}

\noindent For the quantum operator input model the above improvements lead to the complexity bound 
\begin{equation}\label{eq:sqrtnmImprovement}
T_{SDP}(\eps) = \bOt{  \left( \sqrt{m}  +  \sqrt{n}\gamma  \right)\alpha \gamma^4},
\end{equation}
where $\gamma:= \frac{Rr}{\eps}$. Note that the $\Omega(\sqrt{n}+\sqrt{m})$ lower bound of~\cite{brandao:quantumsdp} also applies to the quantum operator model due to our reductions, matching the above upper bound \eqref{eq:sqrtnmImprovement} up to polylog factors in $n$ and $m$ when $\gamma$ and $\alpha$ are constant.
For the quantum state input model our improved Gibbs-sampler yields the complexity bound
\[
T_{SDP}(\eps) = \BOT{n}{ \left(\sqrt{m}+B^{2.5}\gamma^{3.5}\right)B\gamma^4}.
\]
In both cases, the same bound holds for an SDP primal oracle but with $\gamma:=R/\eps$.

\begin{savenotes}
\begin{table}[H]
	\centering
	\label{my-label}
	\fontsize{8pt}{8pt}\selectfont
	\def\arraystretch{2}

	\begin{tabular}{c|c|c|c|c|}\cline{2-5}
		& \multicolumn{2}{|c}{With OR lemma / Two-Phase Search} & \multicolumn{2}{|c|}{Without OR lemma / Two-Phase Search} \\ \cline{2-5}
		& Sparse input & Quantum state input & Sparse input & Quantum state input \\ \hline
		\multicolumn{1}{|c|}{Previous}
		& $\bOt{\left(\sqrt{m}+\sqrt{n}s\gamma^{5}\right)s \gamma^{4}}$ 
		&  \kern-1mm $\BOT{\!n\!}{ \left(\sqrt{m}+\poly(\mathrm{rk})\right)\poly\left(\gamma,B\right)}$ \kern-2mm 
		& $\bOt{\sqrt{mn}s^2 \gamma^{8}}$ 
		& $\BOT{\!n\!}{ \sqrt{m}\poly\left(\gamma,B,\mathrm{rk}\right)}$\\
		\multicolumn{1}{|c|}{Gibbs-sampling}
		& Theorem\footnote{\label{noteIndep}A similar result was independently proved by Brand\~ao et al.~\cite{brandao:expSDP2}.}~\ref{thm:mainSDP} + \cite{AGGW:SDP} 
		&  \cite{brandao:expSDP2} & \cite{AGGW:SDP}            &  \cite{brandao:expSDP}    \\ \hline     
		\multicolumn{1}{|c|}{Improved}
		& $\bOt{\left(\sqrt{m}+\sqrt{n}\gamma\right)s \gamma^{4}}$
		& $\BOT{\!n\!}{ (\sqrt{m} + B^{2.5}\gamma^{3.5})B\gamma^4}$
		& $\bOt{\sqrt{mn}s \gamma^{4}}$     
		&  $\BOT{\!n\!}{ \sqrt{m}B^{3.5}\gamma^{6.5}}$  \\ 
		\multicolumn{1}{|c|}{Gibbs-sampling}
		&  Theorem~\ref{thm:simpGibbs}          &  Theorem~\ref{thm:difGibbs}
		&  Corollary~\ref{cor:simpGibbsNoOR}           &  Corollary~\ref{cor:difGibbsNoOR}  \\ \hline          
	\end{tabular}
	\caption{Summary of our query complexity bounds illustrating the role of our various improvements.
		Here we present the results for the sparse matrix and quantum state input models for comparison to prior work. However, note that our results presented for sparse input hold more generally for the quantum operator input model; to get the corresponding results one should just replace $s$ by $\alpha$ in the table. Thereby similar bounds hold in the case of the quantum state input model too, after replacing $s$ by $B$, which can be beneficial when $B^{2.5}\gamma^{2.5}\geq \sqrt{n}$.  Notation: $\mathrm{rk}=\max_{j\in\{0,\dots, m\}} \text{rank}(A_j)$ and $\gamma=\frac{Rr}{\eps}$.}	
\end{table}
\end{savenotes}

\noindent In Section~\ref{sec:applications} we give some applications of quantum SDP-solvers:
\begin{itemize}
\item We extend the idea of applying SDP-solving to the problem of shadow tomography: given an unknown, $n$-dimensional quantum state $\rho$, find $\eps$-additive approximations of the expectation values $\tr{E_1\rho},\dots,\tr{E_m\rho}$ of several binary measurement operators. This problem was introduced by Aaronson in \cite{aaronson:shadow}, he gave an efficient algorithm in terms of the number of samples from $\rho$. In particular he proved that $\bOt{\log^4(m)\log(n) / \eps^5}$ samples suffice. Brand\~ao et al.~\cite{brandao:expSDP2} applied their SDP-solver to get a more efficient algorithm in terms of computation time when the measurements $E_i$ are given in the quantum state model, while keeping the sample complexity as low as $\poly(\log(m),\log(n),1/\eps,B)$. We simultaneously improve on both results, giving a sample bound of $\bOt{\log^4(m)\log(n) / \eps^4}$ while also improving the best known time complexity~\cite{aaronson:shadow,brandao:expSDP2} of the implementation for all input models.
Finally we show that if we can efficiently implement the measurements $\tr{E_1\rho},\dots,\tr{E_m\rho}$ on a quantum computer, then we can also efficiently represent $E_1,\dots,E_m$ using the quantum operator input model.

\item We apply the SDP-solvers to the problem of Quantum State discrimination: given a set of quantum states, what is the best POVM for discriminating between the states? We consider the case of minimizing the total error in the measurements. In this case we get an algorithm with running time $\bOt{\sqrt{k}\ \poly(d,1/\eps)}$ in the sparse input model, where $k$ is the number of states and $d$ is the dimension of the states. Due to the quantum state model for SDP-solving, we can also solve the problem when the states that need to be discriminated are actually given as quantum states, rather than classical descriptions of density operators.

\item We use the SDP-solver for the sparse matrix model to solve the problem of E-optimal design: given a set of $k$ experiments, find the optimal distribution of the experiments that minimizes the variance in our knowledge of a $d$-dimensional system. Our final bound is $\bOt{(\sqrt{k}+\sqrt{d})\poly(1/\eps,P)}$, where $P$ is a parameter that depends on the standard deviation of the experiments.
\end{itemize}

We end the paper with proving new lower bounds. Lower bounds on the quantum query complexity of SDP-solving for the sparse input model were presented in previous works~\cite{brandao:quantumsdp,AGGW:SDP}. We add to this by giving $\Omega(\sqrt{m}B/\eps)$ and $\Omega(\sqrt{m}\alpha/\eps)$ bounds for the quantum state model and quantum operator model respectively. These lower bounds show that the $\sqrt{m}$ factor and the polynomial dependence on the parameters $B,\alpha$, and $1/\eps$ are necessary. 

Compared to problems with a discrete input, proving lower bounds on continuous-input quantum problems gives rise to extra challenges and often requires more involved techniques, see for example the work of Belovs~\cite{belovsGeneralAdv15} on generalizations of the adversary method. 
Due to these difficulties, fewer results are known in this regime. Examples of known continuous-input lower-bound results include phase-estimation related problems (cf.~Bessen~\cite{bessenlowerbound}) and the complexity-theoretic version of the no-cloning theorem due to Aaronson~\cite{aaronson:quantummoney}.
Recently, a new hybrid-method based approach was developed by Gilyén et al.~\cite{gilyen:qgradient} in order to handle continuous-input oracles, which they use for proving a lower bound for gradient computation. We use their techniques to prove our lower bounds, combined with efficient reductions between input models stemming from the smooth-functions of Hamitonians techniques developed in the work of van Apeldoorn et al.~\cite{AGGW:SDP}.

\section{SDP-solving frameworks}
In this section we present two frameworks for SDP-solving. The first is the Arora-Kale framework, which is used to find a good approximation of the optimal value and an almost feasible solution to the dual.
Then we present an algorithm to implement a primal oracle. These together implement a full SDP-solver.

\subsection{The Arora-Kale framework}

Similarly to previous work \cite{brandao:quantumsdp,AGGW:SDP} we build our results on the Arora-Kale framework. For a detailed description see the original paper by Arora and Kale \cite{arora&kale:sdp}. For the specific application to general SDP-solvers, see \cite{AGGW:SDP}. For our application, the following broad overview suffices. 

We assume that the first constraint is $\tr{X}\leq R$, i.e., $A_1=I$ and $b_1=R$. Remember that we set $A_0 = -C$ and $b_0 = -g$.

\begin{enumerate}
	\item let $y = 0 \in \mathbb{R}^{m+1}$ and set $\theta = \frac{\eps}{6Rr}$.
	\item Repeat $\frac{\ln(n)}{\theta^2}$ times the following:
	\begin{enumerate}
		\item Define $\rho := e^{-\sum_{j=0}^m y_jA_j} / \tr{ e^{-\sum_{j=0}^m y_jA_j} }$.
		\item Find a $\tilde{y}$ in the polytope
		\begin{align*}
		\mathcal{P}_{\delta}(\rho) := \Big\{ \tilde{y} \in \mathbb R^{m+1}:\ &b^T \tilde{y}  \leq 0, \\
		&\sum_{j=0}^m \tilde{y}_j\tr{ A_j  \rho} \geq -\delta,\\ 
		& \tilde{y} \geq 0, \ \tilde{y}_0=\frac{1}{2r}, \ \nrm{\tilde{y}}_1 \leq 1 \Big\}.
		\end{align*}
		for $\delta = \theta$ or conclude that none exists for $\delta = 0$.
		\item If no such $\tilde{y}$ exists, then conclude that $\opt > g$ and stop.
		\item If such a $\tilde{y}$ exists, then update $y \leftarrow \theta\tilde{y}$.
	\end{enumerate}
	\item Conclude $\opt \leq g+\eps$ and output $ \frac{2r\theta}{\ln(n)} y + \frac{\eps}{R} e_1 - e_0$ as a dual solution.
\end{enumerate}

Brand\~ao and Svore~\cite{brandao:quantumsdp} observed that $\rho := e^{- \sum_j y_jA_j} / \tr{ e^{- \sum_j y_jA_j} }$ is a quantum Gibbs state and this state can be prepared efficiently on a quantum computer, allowing fast trace estimation, in particular resulting in a quadratic speedup in $n$.  

A procedure that solves step (b) is called a $\theta$-oracle. In the rest of this paper we will assume that the cost of updating the $y$ vector is lower than the cost of a $\theta$-oracle call. 

Van Apeldoorn at al.\cite{AGGW:SDP} gave an oracle implementation that always outputs a $2$-sparse $\tilde{y}$. The oracle is constructed using a geometric argument that boils down to minimizing over $m$ angles, one for each constrained. Each angle is easily computed from the corresponding $b_j$ and $\tr{A_j \rho}$, where a $\theta$-additive error is allowed in the approximation of $\tr{A_j\rho}$. In \cite{AGGW:SDP} this minimization was done using Quantum Minimum finding \cite{durr&hoyer:minimum}, allowing for a quadratic speedup in $m$. Previously Brand\~ao and Svore \cite{brandao:quantumsdp} applied other techniques to similarly get a quadratic speedup in $m$ but this introduced a worse dependence on $Rr/\eps$.

\subsection{An SDP primal oracle}
To construct a primal oracle, we use the same algorithm as Brand\~ao et al.\cite{brandao:expSDP2} following the proof of Lemma 4.6 of Lee, Raghavendra and Steurer~\cite{LRS15}. 
A few small reductions are required to apply this technique.
To be able to work with density operators instead of $X$, the $b_j$s in the constraints $1\ldots m$ are scaled down by a factor $R$, such that every solution $X'$ to the new SDP has trace at most $1$.
Then, we add one new variable denoted by $\omega$ such that
\[
\rho :=\begin{bmatrix}
X' & 0 \\
0 & \omega
\end{bmatrix}.
\]
Now $\tr{\rho} = 1$ and $\rho \succeq 0$ imply that $\tr{X'}\leq 1$, and we get a new SDP that is equivalent to the previous one. It can be shown that in our input models this reduction does not introduce more than a constant factor overhead in the complexity.

The framework for an SDP primal oracle can now be summarized as follows (here we write $A_j$ and $b_j$ for the input after the reductions).
\begin{enumerate}
\item Let $y = 0 \in \mathbb{R}^{m+1}$ and $\theta = \frac{\eps}{2R}$.
\item Repeat $\frac{\ln(n)}{\theta^2}$ times the following:
\begin{enumerate}
\item Define $\rho := e^{-\sum_{j=0}^m y_jA_j} / \tr{ e^{-\sum_{j=0}^m y_jA_j} }$.
\item Find an index $j$ such that $\tr{A_j \rho} \geq b_j$ or conclude correctly that for all $j$, $\tr{A_j \rho} \leq b_j+\theta$.
\item If no $j$ is found, then we are done and output $y$ and $z = R\tr{X'}$, where $\tr{X'}$ is the probability\footnote{Note that a $\theta$-approximation of $\tr{X'}$ is easy to compute by means of amplitude estimation if $\rho$ can be efficiently prepared as a quantum state -- which is the case in our algorithms.} of measuring $\rho$ to be in the subspace corresponding to the variable $X'$.
\item Otherwise update $y \leftarrow y+\theta e_j$.
\end{enumerate}
\item Conclude that there is no solution for $\theta = 0$.
\end{enumerate}

Both frameworks have a very similar structure. The main difference is that the primal oracle framework requires only a simple search, whereas the $\theta$-oracle needed for the Arora-Kale framework is slightly more complex. Our implementation of the $\theta$-oracle is always returning a $3$-sparse vector, thus in both cases we will work with a $y$ vector that is non-negative and $\BOT{n}{1/\theta^2}$ sparse.
\section{Improvements}

\subsection{Fast Quantum OR Lemma and Two-Phase Minimum Finding}

To speed up the SDP solvers derived form this framework we use a fast version of the Quantum OR Lemma, as in \cite{brandao:expSDP2}. They prove the following lemma:
\begin{lemma}[Fast Quantum OR Lemma~\cite{brandao:expSDP2}]\label{lem:qOR}
	Let $\Pi_1,\dots,\Pi_m$ be projectors and $\rho$ a quantum state. Suppose that either
	\begin{enumerate}
		\item $\exists j$ s.t. $\tr{\Pi_j\rho}\geq 1-\delta_1$, or
		\item $\frac{1}{m}\sum_{j=1}^m \tr{\Pi_j\rho}\leq \delta_2$
	\end{enumerate}
	for some $0 < \delta_1 \leq 1/2$ and $0< \delta_2 \leq \frac{(1-\delta_1)^2}{12m}$. Then for all $\xi \in(0,1)$ there is a procedure that accepts with probability at least $(1-\delta_1)^2/4-\xi$ in the first case, and probability at most $3m\delta_2 + \xi$ in the second case, and that uses $1$ copy of $\rho$ and $\bigO{\log(m)\sqrt{m}/\xi}$ applications of a controlled version of the reflection $I-2\sum_{j=1}^m\Pi_j\otimes \ketbra{j}{j}$.
\end{lemma}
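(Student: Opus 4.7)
The plan is to reduce the OR-testing problem to Grover-style amplitude amplification in an enlarged Hilbert space. First I would introduce an $m$-dimensional ancilla register in the uniform superposition $\ket{+}:=\frac{1}{\sqrt{m}}\sum_{j=1}^{m}\ket{j}$, and take $\rho\otimes\ketbra{+}{+}$ as the initial state. Writing $P:=\sum_{j=1}^{m}\Pi_j\otimes\ketbra{j}{j}$ for the accept projector, the measurement probability $\tr{P(\rho\otimes\ketbra{+}{+})}=\frac{1}{m}\sum_{j}\tr{\Pi_j\rho}$ equals at least $(1-\delta_1)/m$ in case~(1) and at most $\delta_2\leq(1-\delta_1)^2/(12m)$ in case~(2). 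The factor-$12$ gap between these values is what amplification is supposed to turn into a constant-versus-negligible separation.

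Next I would perform amplitude amplification using the two reflections at our disposal: the input reflection $I-2P$, and the ancilla reflection $I_{\mathcal H}\otimes(2\ketbra{+}{+}-I)$ which acts only on the ancilla and is therefore free. To analyse the mixed input I would purify $\rho=\Tr_B\ketbra{\psi}{\psi}_{AB}$ with Schmidt decomposition $\ket{\psi}=\sum_k\sqrt{p_k}\ket{\psi_k}_A\ket{k}_B$. Both reflections act as the identity on $B$, and inside the two-dimensional invariant subspace spanned by $\ket{\psi_k}\ket{+}$ and the normalization of $P\ket{\psi_k}\ket{+}$ their product is a standard Grover rotation of angle $2\theta_k$, where $\sin^2\theta_k=\frac{1}{m}\sum_j\bra{\psi_k}\Pi_j\ket{\psi_k}$. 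After $T$ rounds the per-component accept probability is $\sin^2((2T+1)\theta_k)$, and the overall acceptance probability is $\sum_k p_k\sin^2((2T+1)\theta_k)$.

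The principal obstacle is that different Schmidt components have incommensurate angles $\theta_k$, so no single value of $T$ amplifies them all---some would overshoot and swing back toward zero. I would overcome this with the standard randomization trick of drawing $T$ uniformly from $\{0,1,\ldots,T_{\max}\}$ with $T_{\max}=\Theta(\sqrt{m}\log(m)/\xi)$. Averaging over $T$ replaces the oscillatory factor $\sin^2((2T+1)\theta_k)$ by a smooth function that is close to $\tfrac12$ whenever $\sin\theta_k=\Omega(1/T_{\max})$ and is upper-bounded by roughly $(2T_{\max}+1)^2\sin^2\theta_k/3$ in all cases; the extra $\log(m)$ factor in $T_{\max}$ absorbs the discretization error into the $\xi$ slack.

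Finally I would combine these per-component estimates. In case~(1), pick $j^*$ with $\tr{\Pi_{j^*}\rho}\geq 1-\delta_1$, so that $\sum_k p_k\bra{\psi_k}\Pi_{j^*}\ket{\psi_k}\geq 1-\delta_1$. A Markov-type argument shows that a $\Omega(1-\delta_1)$ fraction of the $p_k$-mass sits on components with $\bra{\psi_k}\Pi_{j^*}\ket{\psi_k}\geq(1-\delta_1)/2$, hence $\sin\theta_k\geq\sqrt{(1-\delta_1)/(2m)}\gg 1/T_{\max}$; these components each contribute nearly $\tfrac12$ to the averaged acceptance, yielding the claimed lower bound $(1-\delta_1)^2/4-\xi$. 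In case~(2) we have $\sum_k p_k\sin^2\theta_k\leq\delta_2$, and applying the small-angle bound $\sin^2((2T+1)\theta_k)\leq(2T+1)^2\sin^2\theta_k$ followed by averaging over $T\leq T_{\max}$ gives an acceptance bounded by roughly $(2T_{\max}+1)^2\delta_2/3\leq 3m\delta_2+\xi$ via the choice of $T_{\max}$ together with the assumed inequality $\delta_2\leq(1-\delta_1)^2/(12m)$. Only one copy of $\rho$ is consumed since $\rho$ appears only in the initial-state preparation, and the total reflection count is $T_{\max}=\bigO{\log(m)\sqrt{m}/\xi}$, as claimed.
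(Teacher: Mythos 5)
The paper does not prove Lemma~\ref{lem:qOR} at all --- it is imported verbatim from Brand\~ao et al.~\cite{brandao:expSDP2}, whose proof (building on the Harrow--Lin--Montanaro OR lemma) analyses the pair of projectors $P=\sum_j\Pi_j\otimes\ketbra{j}{j}$ and $I\otimes\ketbra{+}{+}$ via Jordan's lemma and then runs \emph{phase estimation} on the product of the two reflections. Your attempt has a genuine gap at exactly the point where that machinery is needed: the claimed two-dimensional invariant subspace per Schmidt component is false. Applying $I\otimes(2\ketbra{+}{+}-I)$ to $P\ket{\psi_k}\ket{+}$ produces a term proportional to $M\ket{\psi_k}\otimes\ket{+}$ with $M:=\frac{1}{m}\sum_j\Pi_j$, and this lies in $\mathrm{span}\{\ket{\psi_k}\ket{+},\,P\ket{\psi_k}\ket{+}\}$ only when $\ket{\psi_k}$ is an eigenvector of $M$ (or all $\Pi_j\ket{\psi_k}$ coincide). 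The Schmidt vectors of $\rho$ are eigenvectors of $\rho$, not of $M$, so a single component generically spreads over many Jordan blocks with different rotation angles and the formula $\sin^2((2T+1)\theta_k)$ with $\sin^2\theta_k=\frac1m\sum_j\bra{\psi_k}\Pi_j\ket{\psi_k}$ does not describe the dynamics. The correct decomposition is in the eigenbasis of $M$, and then the heart of the lemma is a step your Markov argument does not replace: from $\tr{\Pi_{j^*}\rho}\geq 1-\delta_1$ one must deduce that $\rho$ has weight $\Omega((1-\delta_1)^2)$ on eigenvectors of $M$ with eigenvalue $\gtrsim (1-\delta_1)^2/m$, which is done via an operator bound of the type $\nrm{\Pi_{j^*}\,\Pi_{M<t}}\leq\sqrt{mt}$ (using $\Pi_{j^*}\preceq mM$) applied to a purification of $\rho$; this is where the $(1-\delta_1)^2/4$ constant comes from.

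The soundness bound also does not survive your final measurement strategy. Running the Grover iterate a uniformly random number $T\leq T_{\max}$ of times and then measuring $P$ gives, even in the favorable eigenbasis, an acceptance probability in case~(2) bounded only by roughly $\mathbb{E}[(2T+1)^2]\,\delta_2=\Theta(T_{\max}^2\delta_2)=\Theta\!\left(\frac{m\log^2(m)}{\xi^2}\right)\delta_2$, which for $m\delta_2=\Theta(1)$ (allowed by the hypothesis $\delta_2\leq(1-\delta_1)^2/(12m)$) exceeds the claimed $3m\delta_2+\xi$ by a factor of order $\log^2(m)/\xi^2$; your inequality ``$(2T_{\max}+1)^2\delta_2/3\leq 3m\delta_2+\xi$'' simply does not follow from the stated assumptions. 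The actual proof avoids this $T^2$ blow-up by never measuring $P$ directly: it phase-estimates the rotation angle of the two-reflection product to precision $\Theta(1/\sqrt{m})$ with confidence $1-\xi$ (this is what consumes $\bigO{\sqrt{m}\log(m)/\xi}$ \emph{controlled} reflections, and is why the statement asks for a controlled reflection at all) and accepts only if the estimated angle exceeds a threshold; then eigenvectors of $M$ with eigenvalue below $\approx 1/(3m)$ are accepted with probability at most $\xi$ no matter how many reflections are applied, while Markov's inequality applied to $\tr{M\rho}\leq\delta_2$ bounds the weight on the remaining eigenvectors by $3m\delta_2$.
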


This lemma is almost the same as the original Quantum OR Lemma~\cite{harrow:orbound} but with the addition that the algorithm requires only $\bigO{\log(m)\sqrt{m}/\xi}$ applications of the controlled reflection.
In a recent paper Aaronson~\cite{aaronson:shadow} proved the \emph{Gentle Quantum Search Lemma} using the Quantum OR Lemma. His proof can easily be extended to use the Fast Quantum OR Lemma. We call the resulting more efficient version \emph{Two-Phase Quantum Search}. 

In the setting of the Two-Phase Quantum Search we will have $m$ algorithms for decision problems and we ask whether one of them evaluates to $1$, and if so, to find one. We also know that all algorithms start with preparing some state $\rho$, followed by some procedure $U_j$ that depends on the index of the decision problem $j\in[m]$. In classical deterministic processes it is quite natural that only one preparation of $\rho$ is needed since the result can be stored. For bounded error classical processes $\bigO{\log(m)}$ preparations of $\rho$ suffice to get the error probability of one decision problem below $1/m$. By the classical union bound this is low enough that we can find a marked element with constant probability. However, if $\rho$ is a quantum state and the $U_j$ are quantum algorithms, then such a bound is not so straightforward, since progress made in constructing $\rho$ might be destroyed when running one of the $U_j$. Nevertheless, using the Fast Quantum OR Lemma it can be shown that $\bOt{\log^4(m)}$ samples from $\rho$ suffice. 

\begin{lemma}[Two-Phase Quantum Search]\label{lem:tpqs}
Let $\nu\in(0,1)$. Let $\rho$ be a quantum state and $U_1,\dots,U_m$ be unitaries with the $U_j$ accessible through a unitary $U$ that acts as $U\ket{j}\ket{\psi} = \ket{j}U_j\ket{\psi}$.
	Then there is a quantum algorithm that using $\bOt{\log^4(m)\log(\nu)}$ samples of $\rho$ and $\bOt{\sqrt{m}\log(\nu)}$ applications of $U$ and its inverse, outputs with success probability at least $1-\nu$ either
	\begin{itemize}
		\item a $j$ such that $\tr{(I\otimes\ketbra{1}{1}) U_j\rho U_j^{\dagger}}\geq 1/3$, i.e., a $j$ such that $U_j$ outputs $1$ with probability at least $1/3$ on input $\rho$,
		\item or concludes correctly that $\tr{(I\otimes\ketbra{1}{1}) U_j\rho U_j^{\dagger}} < 2/3$  for all $j$, i.e., no unitary outputs $1$ with probability at least $2/3$ on input $\rho$.
	\end{itemize}
\end{lemma}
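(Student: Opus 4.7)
The plan is to follow the Gentle Quantum Search template of~\cite{aaronson:shadow}, but to replace its OR-lemma calls by the Fast Quantum OR Lemma of~\cite{brandao:expSDP2}; the two phases of the name will correspond to (i) amplifying each $U_j$ so that the problem enters the strong promise regime of Lemma~\ref{lem:qOR}, and (ii) a binary search over $[m]$ driven by that lemma. For~(i) I would define $\tilde U_j$ to apply $U_j$ in parallel to $k=\Theta(\log m)$ independent copies of $\rho$ via the controlled oracle $U$, compute the majority of the $k$ output bits into a fresh output qubit, and uncompute the auxiliary registers; let $\tilde\Pi_j$ be the projector onto this new output being~$1$. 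Writing $p_j:=\tr((I\otimes\ketbra{1}{1})U_j\rho U_j^\dagger)$ and $\tilde p_j:=\tr(\tilde\Pi_j\rho^{\otimes k})$, a Chernoff bound yields $\tilde p_j\geq 1-m^{-c}$ when $p_j\geq 2/3$, and $\tilde p_j\leq m^{-c}$ when $p_j\leq 1/3$, for any desired constant $c$ by choosing $k$ appropriately.

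For~(ii) I would maintain a candidate set $S\subseteq[m]$, initially $[m]$. At each level, bisect $S=S_L\sqcup S_R$ and invoke Lemma~\ref{lem:qOR} separately on each half with projectors $\{\tilde\Pi_j\}_{j\in S_L}$ and $\{\tilde\Pi_j\}_{j\in S_R}$, input state $\rho^{\otimes k}$, and parameters $\delta_1=m^{-c}$, $\delta_2=m^{-c}$, with small constant $\xi$; the accept probabilities in the two cases of the lemma then differ by a constant, so $\Theta(\log(\log(m)/\nu))$ independent repetitions on fresh copies of $\rho^{\otimes k}$, followed by majority thresholding, give a per-call error at most $\nu/\Theta(\log m)$. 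Recurse into any half that certifies ``yes''; if both halves certify ``no'' at some level, halt and output ``no such $j$''. At the final leaf $j^\#$, verify by running $U_{j^\#}$ on $\Theta(\log(1/\nu))$ fresh samples of $\rho$ and output $j^\#$ iff at least half of the trials succeed. Correctness exploits the asymmetric specification that any $j$ with $p_j\geq 1/3$ is acceptable and that ``no'' is only required to rule out $p_j\geq 2/3$: if some $j^*$ has $p_{j^*}\geq 2/3$ then the half containing $j^*$ satisfies condition~(1) of Lemma~\ref{lem:qOR} at every level, so by union bound over $\log_2 m$ levels the search reaches a leaf $j^\#$ with $\tilde p_{j^\#}$ noticeably positive, forcing $p_{j^\#}>1/3$ by the contrapositive of the Chernoff bound and thus verification succeeds; if no $p_j\geq 2/3$, then either leaf verification certifies some $j$ with $p_j\geq 1/3$, or every $p_j\leq 1/3$ forces $\tilde p_j\leq m^{-c}$ uniformly so that condition~(2) of Lemma~\ref{lem:qOR} applies on both halves and the algorithm correctly outputs ``no''.

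Counting resources, each OR call consumes one copy of $\rho^{\otimes k}$, i.e.\ $k=O(\log m)$ samples of $\rho$, together with $O(\sqrt m\log m)$ controlled reflections, each implementable using $O(k)=O(\log m)$ applications of $U$; over two halves per level, $\Theta(\log(\log(m)/\nu))$ repetitions per half, $\log_2 m$ levels, and the final leaf verification, the totals are $\bOt{\log^4(m)\log(1/\nu)}$ samples of $\rho$ and $\bOt{\sqrt m\log(1/\nu)}$ applications of $U$. The main obstacle is the intermediate regime $p_j\in(1/3,2/3)$ where majority-vote amplification does not sharply separate the $\tilde p_j$, so the binary search can descend into a ``mixed'' half on which Lemma~\ref{lem:qOR} gives no direct guarantee; the leaf-verification step is the essential safeguard that exploits the asymmetric output specification to produce a valid answer even when the search has been so diverted.
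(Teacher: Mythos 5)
Your overall template (majority amplification so that the promise of Lemma~\ref{lem:qOR} applies, followed by an OR-driven binary search -- i.e.\ Aaronson's Gentle Search argument with the Fast Quantum OR Lemma substituted) is exactly the route the paper takes, but your descent rule has a genuine gap: with the \emph{fixed} thresholds $1/3$ versus $2/3$, the rule ``halt and output `no such $j$' when both halves reject'' is unsound. Concretely, suppose $p_{j^*}\geq 2/3$ for some $j^*$ while another index $j'$ has $p_{j'}\approx 1/2$. At the level where $j^*$ and $j'$ are separated, the half containing $j^*$ accepts with high probability, but the half containing $j'$ lies outside both alternatives of Lemma~\ref{lem:qOR} (its amplified acceptance probabilities are neither $\geq 1-\delta_1$ nor averaging $\leq \delta_2$), so its boosted OR call may also accept; since you recurse into ``any'' accepting half, the search can be diverted into the subtree containing only $j'$. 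Inside that subtree condition~(1) never holds, so at the next level both halves may legitimately reject, and your algorithm then asserts $p_j<2/3$ for all $j\in[m]$, which is false because of $j^*$. The leaf verification cannot repair this, since the failure occurs before any leaf is reached, and a failed verification likewise does not license the global ``no'' conclusion.

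The missing ingredient is the level-dependent threshold degradation used in Aaronson's proof: at level $t$ of the recursion one distinguishes ``some $p_j\geq 2/3 - t\cdot\Theta(1/\log m)$'' from ``all $p_j\leq 2/3-(t+1)\cdot\Theta(1/\log m)$'', amplifying each $U_j$ over $\Theta(\log^3 m)$ copies so that this $\Theta(1/\log m)$ gap is boosted to $\delta_1,\delta_2=1/\poly(m)$. Then the invariant ``the current candidate set contains an index above the current upper threshold'' is preserved whenever one descends into an accepting half (by the contrapositive of the case-(2) guarantee), so with high probability a mid-descent double rejection never occurs while a witness with $p_{j^*}\geq 2/3$ exists, and the leaf reached after $\log_2 m$ levels satisfies $p_j\geq 2/3-\Theta(1)\geq 1/3$. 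This is also where the $\bOt{\log^4 m}$ sample count genuinely comes from ($\Theta(\log^3 m)$ copies per call, times $\Theta(\log m)$ levels, times the confidence boosting); your $k=\Theta(\log m)$ amplification cannot support the per-level threshold shifts. With this modification your resource accounting goes through and matches the paper, whose own proof is simply Aaronson's argument with Lemma~\ref{lem:qOR} in place of the original OR lemma.
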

\begin{proof}
This follows from the proof of Gentle Quantum Search in~\cite[Lemma 15]{aaronson:shadow} using the Fast Quantum OR Lemma~\cite{brandao:expSDP2} instead the normal Quantum OR Lemma.
\end{proof}

Using the above lemma we construct the \emph{Two-Phase Quantum Minimum Finding} algorithm.
It turns out that we need to use this algorithm in a situation where different values have different error-bars, therefore the statement gets slightly complicated. In typical use-cases one can probably just choose each error-margin $\eta$ equal to say $\delta$ resulting in a simpler statement. 

\begin{lemma}[Two-Phase Quantum Minimum Finding]\label{lem:minfinding}
Let $\delta,\nu' \in (0,1)$. Let $\rho$ be a quantum state and $U_1,\dots,U_m$ be unitaries, with the $U_j$ accessible through a unitary $U$ that acts as $U\ket{j}\ket{\psi} = \ket{j}U_j\ket{\psi}$.
	Let $a_1,\dots,a_m$, $\eta_1,\dots,\eta_m$ be numbers such that $\min_j |a_j|+|\eta_j| \leq M$. Assume that with probability at least $2/3$, $U_j$ computes a binary representation of $a_j$ up to additive error $\eta_j$ using one copy of $\rho$.
	Then, with probability at least $1-\nu'$, we can find a $j$ such that $a_j-\eta_j\leq \min_i (a_i+\eta_i) +\delta$ using $\bOt{\log^4(m)\log(M/\delta)\log(\nu')}$ samples of $\rho$ and $\bOt{\sqrt{m}\log(\nu') \log(M/\delta)}$ applications of $U$ and its inverse.
\end{lemma}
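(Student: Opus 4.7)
My approach is to reduce the minimum-finding problem to a logarithmic number of threshold decisions, each handled by Two-Phase Quantum Search (Lemma~\ref{lem:tpqs}). For each threshold $t$, I define a derived unitary $V_j^{(t)}$ that runs $U_j$ on a fresh copy of $\rho$ and flips an output bit to $1$ iff the binary representation returned by $U_j$ is at most $t$. Since $U_j$ computes $a_j$ to additive error $\eta_j$ with probability at least $2/3$, a one-line case analysis gives two robust implications: if $a_j - \eta_j > t$ then necessarily $\Pr[V_j^{(t)}=1]\leq 1/3$ (any output $\leq t$ can only come from a failure of $U_j$), while if $a_j + \eta_j \leq t$ then $\Pr[V_j^{(t)}=1]\geq 2/3$. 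Contrapositively, a ``yes'' outcome of Lemma~\ref{lem:tpqs} on $\{V_j^{(t)}\}_j$ returns an index $j$ certifying $a_j - \eta_j \leq t$, whereas a ``no'' outcome certifies $\min_i(a_i+\eta_i) > t$.

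I then run a binary search on $t$ over the interval $[-2M, 2M]$, which contains $\min_i(a_i+\eta_i)$ by the assumption $\min_j(|a_j|+|\eta_j|)\leq M$. I maintain a lower bound $L$ (updated whenever an iteration returns ``no'') and a best candidate pair $(j^\star, U^\star)$, where $U^\star$ is the most recent threshold producing a ``yes'' outcome with index $j^\star$. After $K:=\lceil \log(4M/\delta)\rceil$ bisection steps the remaining interval has width $\leq\delta$; since $L<\min_i(a_i+\eta_i)$ and $a_{j^\star}-\eta_{j^\star}\leq U^\star\leq L+\delta$, the final $j^\star$ satisfies the required inequality $a_{j^\star}-\eta_{j^\star}\leq \min_i(a_i+\eta_i)+\delta$. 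Setting the failure probability of each invocation of Lemma~\ref{lem:tpqs} to $\nu'/K$ and taking a union bound gives overall failure at most $\nu'$.

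The query and sample complexities follow by summing over the $K=\bigO{\log(M/\delta)}$ iterations: each call to Lemma~\ref{lem:tpqs} uses $\bOt{\log^4(m)\log(K/\nu')}$ copies of $\rho$ and $\bOt{\sqrt{m}\log(K/\nu')}$ applications of the controlled-$U$ (since each $V_j^{(t)}$ is built from a single use of $U_j$ plus a $t$-comparison on classical bits), matching the statement up to absorption of the $\log\log$ factors into $\widetilde{\mathcal{O}}$. The main subtlety to check is that the binary search still makes well-defined progress in the ``grey zone'' where $a_j-\eta_j\leq t<a_j+\eta_j$ for some $j$ but no $j$ satisfies $a_j+\eta_j\leq t$; there Lemma~\ref{lem:tpqs} may legitimately return either a ``yes'' or a ``no'', yet both outcomes are consistent with the two robust implications above, so each step either validly shrinks $[L,U^\star]$ or yields a usable candidate, never derailing the search.
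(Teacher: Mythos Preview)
Your proof is correct and follows essentially the same approach as the paper: binary search on a threshold value, with each threshold test implemented via Two-Phase Quantum Search (Lemma~\ref{lem:tpqs}) applied to the comparator unitaries $V_j^{(t)}$. The paper's own proof is a two-sentence sketch that says exactly this; you have simply spelled out the details it leaves implicit, including the two robust implications for ``yes''/``no'' outcomes, the invariant maintenance on $L$ and $U^\star$, and the handling of the grey zone where $a_j-\eta_j\le t<a_j+\eta_j$.
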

\begin{proof}
	Do a binary search on the value $v$ to precision $\delta$ by checking whether there is still an element with $a_i+\eta_i\leq v$ using Lemma~\ref{lem:tpqs} in each round with setting $\nu=\Theta(\nu'/\log(M/\delta))$. This binary search will result in a value $v\leq\min_i (a_i+\eta_i) +\delta$ with probability at least $1-\nu/2$, and it is not hard to see that the last $j$ found by Lemma~\ref{lem:tpqs} during the binary search will be such that $a_j-\eta_j\leq v$ with probability at least $1-\nu/2$. Therefore this $j$ satisfies the required inequality with probability at least $1-\nu$.
\end{proof}
This leads to the following general bound on SDP-solving.
\begin{theorem}\label{thm:mainSDP}
  Assume that updating an entry of $y \in \mathbb{R}^{m+1}$ in the data structure requires at most  $\bOt{T_{Gibbs}(\gamma,\gamma^2,\gamma^{-1})}$ elementary operation, where $\gamma := 6R r/\eps$. Then there is a quantum SDP-solver for which
	\[
          T_{SDP}(\eps) = \BOT{n}{  \left( \sqrt{m}  T_{Tr}^{\sigma}(\gamma^{-1}) + T_{Gibbs}(\gamma,\gamma^2,\gamma^{-1})  \right) \gamma^4\sigma^2},
	\]
	 similarly there is also a quantum algorithm with the same complexity, but with  $\gamma := 6R/\eps$, that implements an SDP primal oracle.
\end{theorem}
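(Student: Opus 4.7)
The plan is to plug the two SDP-solving frameworks of Section 2 into our new Two-Phase primitives, using Lemma~\ref{lem:minfinding} for the dual $\theta$-oracle and Lemma~\ref{lem:tpqs} for the primal-oracle search, and then account for the per-iteration cost. Both frameworks run for $\bOt{\theta^{-2}}=\bOt{\gamma^2}$ outer iterations with tolerance $\theta=\gamma^{-1}$; the only difference is the value of $\gamma$ ($6Rr/\eps$ for the dual, $6R/\eps$ for the primal oracle). Each iteration introduces an $O(1)$-sparse update to $y$ with $\nrm{\tilde{y}}_1\leq 1$ (dual) or adds a single coordinate $\theta e_j$ (primal), so the running vector satisfies $\nrm{y}_1\leq O(\gamma)$ and $|\mathrm{supp}(y)|\leq O(\gamma^2)$ throughout -- exactly the parameters of the Gibbs-sampler call $T_{Gibbs}(\gamma,\gamma^2,\gamma^{-1})$. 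The third argument $\gamma^{-1}$ is chosen so that a $\gamma^{-1}$-trace-distance approximation of the Gibbs state perturbs each $\tr{A_j\rho}$ by at most $O(\theta)$ (using $\nrm{A_j}\leq 1$), which fits inside the robustness window of the Arora-Kale oracle.

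For the inner step in the dual case, I would reuse the geometric reduction from~\cite{AGGW:SDP}: deciding whether $\mathcal{P}_\theta(\rho)$ is empty reduces to minimizing $m$ angles $a_j$, each a simple function of $\tr{A_j\rho}$ and $b_j$ and each tolerable to $O(\theta)$ additive error. I apply Lemma~\ref{lem:minfinding} with $\delta=\Theta(\theta)$ and $M=O(1)$, where the per-index unitary $U_j$ is built from the $(\theta,\sigma)$-trace estimator. Since the trace estimator has standard deviation $\sigma$ but we need the estimate correct to additive error $\theta$ with probability at least $2/3$, the unitary $U_j$ coherently averages $N=\Theta(\sigma^2/\theta^2)=\Theta(\sigma^2\gamma^2)$ independent trace-estimator samples into an ancillary register: this drives the standard deviation of the mean below $\theta/2$, so combined with the $\theta/4$ bias and Chebyshev's inequality, $U_j$ meets the required success probability. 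The primal-oracle variant is identical except that its inner step only asks for \emph{any} violated constraint, which allows Lemma~\ref{lem:tpqs} in place of Lemma~\ref{lem:minfinding}.

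Summing costs, each outer iteration spends $\bOt{1}$ preparations of $\rho$ (cost $T_{Gibbs}(\gamma,\gamma^2,\gamma^{-1})$ each) plus $\bOt{\sqrt{m}}$ applications of $U$, where each application costs $\bOt{\sigma^2\gamma^2}\cdot T_{Tr}^{\sigma}(\gamma^{-1})$. Multiplying by the $\bOt{\gamma^2}$ outer iterations gives
\[
\bOt{\gamma^2\,T_{Gibbs}(\gamma,\gamma^2,\gamma^{-1}) + \sqrt{m}\,\sigma^2\gamma^4\,T_{Tr}^{\sigma}(\gamma^{-1})},
\]
which is bounded above by the common-factor expression $\BOT{n}{\big(\sqrt{m}\,T_{Tr}^{\sigma}(\gamma^{-1}) + T_{Gibbs}(\gamma,\gamma^2,\gamma^{-1})\big)\gamma^4\sigma^2}$ since $\sigma,\gamma\geq 1$. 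A standard union bound, amplifying each primitive's failure probability to $\bOt{\gamma^{-2}}$ at only polylog cost, retains overall success probability $2/3$; the update-cost assumption ensures bookkeeping is subsumed by the Gibbs term.

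The main obstacle is error-propagation bookkeeping rather than any single delicate step: four independent sources of error (the $\gamma^{-1}$-approximate Gibbs state, the biased trace estimator, the amplification inside $U_j$, and the approximations built into Lemma~\ref{lem:minfinding}/\ref{lem:tpqs}) must be simultaneously controlled so that the oracle either returns a $\tilde{y}\in\mathcal{P}_\theta(\rho)$ or certifies $\mathcal{P}_0(\rho)=\emptyset$ with probability $1-\bOt{\gamma^{-2}}$, as demanded by the robustness guarantees of~\cite{AGGW:SDP,brandao:quantumsdp}. Once this composite-error audit is in place, the parameter choices $(\gamma,\gamma^2,\gamma^{-1})$ for the Gibbs-sampler and the final complexity bound drop out by direct substitution.
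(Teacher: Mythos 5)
Your overall architecture matches the paper's (both frameworks, $\bOt{\gamma^2}$ iterations, Two-Phase Quantum Search for the primal oracle and Two-Phase Minimum Finding over the angles of \cite[Lemma~14]{AGGW:SDP} for the dual, Chebyshev-style averaging of $\Theta(\sigma^2\gamma^2)$ trace-estimator samples), but your resource accounting has a genuine gap at exactly the step this theorem is about. You let $U_j$ ``coherently average $N=\Theta(\sigma^2\gamma^2)$ independent trace-estimator samples'' while charging each preparation of $\rho$ only $T_{Gibbs}(\gamma,\gamma^2,\gamma^{-1})$, i.e.\ treating $\rho$ as a single copy of the approximate Gibbs state. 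This is inconsistent: the trace estimator of Definition~\ref{def:traceEstimator} consumes a copy of the state per sample, and independent samples require independent copies, while Lemmas~\ref{lem:tpqs} and~\ref{lem:minfinding} require $U_j$ to produce its answer with probability $2/3$ from \emph{one} copy of the input $\rho$. A single Gibbs copy only yields one sample of standard deviation $\sigma$, which is far too noisy, so your $U_j$ as described does not satisfy the hypotheses of the lemmas. The paper resolves this by defining the input state of the Two-Phase primitives as $\rho:=\tilde{\rho}_S^{\otimes k}$ with $k=\Theta(\sigma^2\gamma^2)$ copies of the approximate Gibbs state (prepared from a shared random seed $S$, with the $4/5$ seed success probability folded into the $2/3$ guarantee), so that each sample of $\rho$ costs $k\, T_{Gibbs}(\gamma,\gamma^2,\gamma^{-1})$ and each application of $U$ costs $k\, T_{Tr}^{\sigma}(\gamma^{-1})$.

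A symptom of the gap is that your intermediate bound $\bOt{\gamma^2\,T_{Gibbs}(\gamma,\gamma^2,\gamma^{-1}) + \sqrt{m}\,\sigma^2\gamma^4\,T_{Tr}^{\sigma}(\gamma^{-1})}$ is strictly stronger than what the argument can deliver on the Gibbs side: with the corrected accounting the Gibbs term is $\bOt{\sigma^2\gamma^4\, T_{Gibbs}(\gamma,\gamma^2,\gamma^{-1})}$ (polylog-many samples of $\rho$ per iteration, each costing $k\,T_{Gibbs}$, over $\gamma^2$ iterations). The final statement of the theorem is unaffected, since it carries the factor $\gamma^4\sigma^2$ on both terms, but the proof must route all $k$ Gibbs copies through the input state $\rho$ rather than pretending $U_j$ can generate independent samples from one copy --- this packaging is precisely the device that lets the OR-lemma machinery decouple the $\sqrt{m}$ search from the Gibbs-preparation cost, which is the point of the theorem.
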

\begin{proof}
To construct an SDP-solver use both frameworks in succession, otherwise use only the primal oracle.
The frameworks run for $\BOT{n}{\gamma^2}$ iterations. In each iteration we need to update at most three entries of the $y$ vector, which takes at most $\bOt{T_{Gibbs}(\gamma,\gamma^2,\gamma^{-1})}$ elementary operations by assumption. To search for a violated constraint when using the primal oracle framework, we use the Two-Phase Quantum Search, and we use Two-Phase Minimum Finding to implement the minimum finding needed in the Oracle for the Arora-Kale framework\footnote{In the Oracle implementation of~\cite[Lemma~14]{AGGW:SDP} the minimum finding is not done over the computed traces, but rather the angles calculated using these traces. The trace $\rightarrow$ angle conversion suffices with precision $\delta^{-1}=\bigO{\poly(\gamma)}$, and since the magnitude $M$ of angles is bounded by $\pi$, we get $\log(M/\delta)=\bigO{\log(\gamma)}$ in Lemma~\ref{lem:minfinding}.}, following the geometric approach of~\cite[Lemma~14]{AGGW:SDP} for implementing a $\gamma^{-1}$-oracle.

Let $\rho:=\rho_{S}^{\otimes k}$ where $k=6(4 \sigma \gamma)^2$, $S$ is a uniform random seed and $\tilde{\rho}_S$ is a density operator, that is a $\gamma^{-1}/4$-approximation in the trace distance of the Gibbs state $\rho_{Gibbs(y)}$ corresponding to the current $y$ vector (for at least a $4/5$ fraction of the possible input seeds). Let $U_j$ be the operator that applies a $(\gamma^{-1}/4,\sigma)$-trace estimator to each copy of $\tilde{\rho}_S$ and takes the average of the outcomes. I.e., it obtains estimates of $\tr{A_j\tilde{\rho}_S}$ with bias at most $\gamma^{-1}/4$ and standard deviation at most $\sigma$ independently $k$-times, taking the average at the end. By Chebyshev's inequality we can see that this way $U_j$ computes a $2\gamma^{-1}/4$-precise estimate of $\tr{A_j\tilde{\rho}_S}$ with probability at least $5/6$.
Also with probability at least $4/5$ we have that $\nrm{\tilde{\rho}_S-\rho_{Gibbs(y)}}\leq \gamma^{-1}/4$ and thus we get a $3\gamma^{-1}/4$-precise estimate of $\tr{A_j\rho_{Gibbs(y)}}$ with probability at least $4/5\cdot 5/6=2/3$.
The preparation of $\rho$ can be performed using $k T_{Gibbs}(\gamma,\gamma^2,\gamma^{-1})$ queries by definition, whereas $U_j$ can be implemented with query complexity $k T_{Tr}^{\sigma}(\gamma^{-1})$ .

By using Two-Phase Quantum Search and Two-Phase Quantum Minimum Finding with $\nu = \BOT{n}{\gamma^{-2}}$ we get that it takes $\BOT{n}{\left(( \sqrt{m}  T_{Tr}^{\sigma}(\gamma^{-1}) + T_{Gibbs}(\gamma,\gamma^2,\gamma^{-1})\right)\gamma^2\sigma^2}$ queries to implement an iteration. The stated final complexity follows considering the number of iterations.
\end{proof}
In the rest of this paper we will give upper bounds on $T_{Gibbs}$ and $T_{Tr}^{\sigma}$ for different input models. In particular, due to the results from the next section $T_{Gibbs}(K,d,\theta)$ will always depend only logarithmically on $d$ and $\theta$, and $T_{Tr}^{\sigma}(\theta)$ will only depend logarithmically on $\theta$. Nevertheless we left these parameters in Theorem~\ref{thm:mainSDP} for completeness and to allow for comparison with previous results.

\subsection{SDP-solving using the quantum operator input model}\label{subsec:Hamsim}
In this section we present recent results on Hamiltonian simulation which motivate the quantum operator input model. In particular, we restate results from \cite{LowChuang:Qubitization2016} showing that using ``block-encodings'' of Hamiltonians, efficient Hamiltonian simulation can be implemented, moreover we show that this input model generalizes both other input models. Then we show that efficient Hamiltonian simulation leads to efficient trace estimation. Since we need to prepare Gibbs states for Hamiltonians of the form $\sum_{j=0}^m y_jA_j$ we also show how to obtain a block-encoding of a linear combination of Hamiltonians using an efficient data structure. Combining these techniques with the meta-algorithm of Theorem~\ref{thm:mainSDP} leads to an efficient quantum SDP-slover, see Theorem~\ref{thm:simpGibbs}.

\subsubsection{Block-encodings of operators}

Inspired by the work of Low and Chuang~\cite{LowChuang:Qubitization2016} and following the exposition of Gilyén and Wiebe~\cite{gilyenBlockMatrices} we introduce block-encodings of operators, which will be the central concept to the quantum operator approach. Recall that a unitary $U$ is a block-encoding of a matrix $A$ if the top-left block of $U$ is proportional to $A$:
$$ U=\left(\begin{array}{cc} A/\alpha & . \\ . & .\end{array}\right) $$
Note that $\nrm{U}=1$ hence we must have $\nrm{A}\leq \alpha$.
\paragraph{Notation.} Since unitaries are also operators, we will sometimes call a unitary a $(1,a,\eps)$-block-encoding of another unitary if it uses $a$ ancillary qubits and is an $\eps$-approximation in the operator norm. Note that every unitary is a $(1,0,0)$-block-encoding of itself.
\\\\
\noindent The following Hamiltonian simulation theorem is a corollary of the results of \cite[Theorem~1]{LowChuang:Qubitization2016}, and provides the main motivation for this input model. For a detailed proof see the work of Chakraborty, Gilyén and Jeffrey~\cite{CGJ:PowerOfBlockPowers18}.
\begin{theorem}[Optimal block-Hamiltonian simulation \cite{LowChuang:Qubitization2016}] \label{thm:blockHamSim}
	Suppose that $U$ is an $(\alpha,a,\eps/|2t|)$-block-encoding of the Hamiltonian $H$. Then we can implement an $\eps$-precise Hamiltonian simulation unitary $V$ which is an $(1,a+2,\eps)$-block-encoding of $e^{itH}$, with $\bigO{|\alpha t|+\frac{\log(1/\eps)}{\log\log(1/\eps)}}$ uses of controlled-$U$ or its inverse and with $\bigO{a|\alpha t|+a\frac{\log(1/\eps)}{\log\log(1/\eps)}}$ two-qubit gates.
\end{theorem}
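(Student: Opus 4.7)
The plan is to deduce the stated corollary directly from the qubitization machinery of Low and Chuang. First I would construct, from the given $(\alpha,a,\eps/|2t|)$-block-encoding $U$ of $H$, the associated \emph{qubitized walk operator} $W$. This is done by combining $U$ with a reflection $2\ketbra{0^a}{0^a}\otimes I - I$ on the $a$ ancilla qubits that flag the block-encoding, together with one additional ancilla qubit used to symmetrize $U$ so that the walk operator acts as a direct sum of two-dimensional rotations. The key structural fact from \cite{LowChuang:Qubitization2016} is that $W$ has the same eigenvectors as $H$ in the encoded block and its eigenvalues are $e^{\pm i \arccos(\lambda_k/\alpha)}$, where $\lambda_k$ are the eigenvalues of $H$. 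This costs one additional ancilla and one call to (controlled-)$U$ per application of $W$.

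Next I would invoke quantum signal processing on $W$ to realize a polynomial transformation of the eigenvalues. To approximate $e^{itH}$ one uses the Jacobi–Anger expansion
\[
e^{i\tau \cos(\theta)} = \sum_{k\in\mathbb{Z}} i^{k} J_k(\tau) e^{ik\theta},
\]
truncated at degree $d = \bigO{|\alpha t| + \log(1/\eps)/\log\log(1/\eps)}$, which uniformly approximates $e^{itH}$ to error $\eps/2$ in operator norm on the spectrum of $H$ (this tail bound on Bessel coefficients is standard, see e.g.\ the Hamiltonian simulation results of Berry–Childs–Kothari). Applying QSP with such a polynomial uses $d$ queries to $W$ and hence $d$ queries to controlled-$U$ or $U^\dagger$, plus $\bigO{a\cdot d}$ two-qubit gates to implement the reflection on the $a$-qubit ancilla register at each step, and introduces one further ancilla qubit to separate real and imaginary parts of the target polynomial. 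Altogether the new unitary $V$ acts with $a+2$ ancilla qubits.

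Finally I would assemble the error budget. The block-encoding error $\eps/|2t|$ propagates through $d$ applications of $W$ to contribute at most $d \cdot \eps/|2t| \leq \eps/2$ error (using $d = \bigO{|\alpha t|+\ldots}$, with constants chosen so $d \leq |2t|/2$ in the leading term; if needed one slightly rescales the tolerance, which does not affect the asymptotics). Combined with the $\eps/2$ truncation error of the Jacobi–Anger expansion, the overall deviation from $e^{itH}$ in the top-left block is at most $\eps$, giving the desired $(1,a+2,\eps)$-block-encoding.

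The only non-routine step is verifying that the qubitization construction applies when $U$ only block-encodes $H$ approximately rather than exactly; this is where the factor $\eps/|2t|$ enters in the hypothesis. This robustness is essentially a perturbation argument already contained in the references, but one has to confirm that the walk operator built from an approximate block-encoding is close in operator norm to the ideal walk operator, so that error accumulates only linearly in the number of queries $d$. Once this perturbation estimate is in hand, the rest is bookkeeping.
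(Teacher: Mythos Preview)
The paper does not supply its own proof of this theorem: it simply cites Low--Chuang \cite{LowChuang:Qubitization2016} and defers the details to \cite{CGJ:PowerOfBlockPowers18}. Your outline---build the qubitized walk operator from the block-encoding, apply quantum signal processing using the Jacobi--Anger truncation of degree $d=\bigO{|\alpha t|+\log(1/\eps)/\log\log(1/\eps)}$, and account for error---is precisely the Low--Chuang argument, so you are following the same route the paper intends.

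One concrete gap in your bookkeeping: you write that the $\eps/|2t|$ block-encoding error, after $d$ uses of $U$, contributes at most $d\cdot \eps/|2t|\leq \eps/2$ ``with constants chosen so $d\leq |2t|/2$ in the leading term.'' But the leading term of $d$ is $\Theta(|\alpha t|)$, not $\Theta(|t|)$, so for $\alpha\gg 1$ you cannot arrange $d\leq |t|$. The accumulated error is actually $\Theta(\alpha\eps)$ with the hypothesis as stated, which is too large. Either the input tolerance should be $\eps/|2\alpha t|$ (as it is in some formulations), or one needs a finer robustness argument for the qubitized walk than the naive triangle inequality over $d$ steps. This is exactly the ``non-routine step'' you flag at the end, but your proposed resolution---linear accumulation plus $d\leq |t|$---does not close it; you would need to consult \cite{CGJ:PowerOfBlockPowers18} for the precise perturbation bound that makes the stated hypothesis sufficient.
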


Following the approach of~\cite{AGGW:SDP}, we show in Appendix~\ref{apx:HamSim} that efficient Hamiltonian simulation implies efficient trace estimation:

\begin{corollary}\label{cor:specTraceCalc}
	For $\sigma = 6$, we have that $T_{Tr}^{\sigma}(\theta)=\BOT{\frac{n}{\eps}}{\alpha}$ in the quantum operator input model.
\end{corollary}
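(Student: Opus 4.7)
The plan is to build a $\{-2,0,+2\}$-valued trace estimator $X$ whose expectation approximates $\tr{A_j\rho}$ and whose range automatically bounds the variance. This is done in two stages. First, I would upgrade the given $(\alpha,a,0)$-block-encoding $U_j$ of $A_j$ into a $(1,a',\theta/16)$-block-encoding $V_j$ of $A_j/2$. Then I would run a standard Hadamard test using $V_j$ and turn the measurement outcomes into $X$. The key design choice is to amplify only to $A_j/2$, not to $A_j$ itself: since $\nrm{A_j}\le 1$, the target $A_j/2$ has norm at most $1/2$, leaving a constant spectral gap to the saturation boundary~$1$, which is exactly what keeps the amplification cost from blowing up by a factor of $1/\theta$.

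For the first stage I would invoke a uniform singular value amplification result from the QSVT toolbox (for instance the spectral amplification lemma of Low--Chuang, or the amplification corollary of Gily\'en--Su--Low--Wiebe) with amplification factor $\alpha/2$; because the target has a spectral gap of $\tfrac{1}{2}$ to the boundary, the required polynomial has degree $\bigO{\alpha\log(1/\theta)}$, so this stage costs $\BOT{\frac{n}{\eps}}{\alpha}$ queries to $U_j$. For the second stage, given a purification of $\rho$ (available since in our intended applications $\rho$ is a Gibbs state that we prepared), I add a single control qubit $c$ in $\ket{+}$, apply controlled-$V_j$ on the block-ancilla register together with the system register, apply $H$ to $c$, and then measure both $c$ and the block-ancilla register. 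I output $X=+2$ on outcome $(c{=}0,\ket{0^{a'}})$, $X=-2$ on outcome $(c{=}1,\ket{0^{a'}})$, and $X=0$ otherwise.

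Computing the two projection amplitudes gives $\mathbb{E}[X]=2\tr{\tilde A_j\rho}$, where $\tilde A_j$ is the top-left block of $V_j$ and satisfies $\nrm{\tilde A_j - A_j/2}\le\theta/16$, so by H\"older's inequality $|\mathbb{E}[X]-\tr{A_j\rho}|\le\theta/8<\theta/4$. Since $|X|\le 2$ we also have $\mathrm{Var}(X)\le 4\le 36=\sigma^2$, so $X$ meets both requirements of a $(\theta,\sigma)$-trace estimator. The main obstacle is keeping the amplification step cheap: naive singular value amplification targeting $A_j$ itself would force the approximating polynomial to saturate the interval $[-1,1]$ and consequently inflate its degree by a factor of $1/\theta$. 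Targeting $A_j/2$ instead preserves a constant spectral gap and lets the amplification run in $\bigO{\alpha\log(1/\theta)}$ queries to $U_j$, which is precisely the $\BOT{\frac{n}{\eps}}{\alpha}$ bound asserted by the corollary.
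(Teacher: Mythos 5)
Your proof is correct, but it takes a genuinely different route from the paper. The paper implements the trace estimator by first turning the $(\alpha,a,0)$-block-encoding into \emph{controlled Hamiltonian simulation} of $A_j$ (Theorem~\ref{thm:blockHamSim} together with Lemma~\ref{lemma:controlledHamsin}, costing $\BOT{\theta}{\alpha}$ queries), and then invoking the smooth-function machinery of~\cite[Appendix B]{AGGW:SDP} (Corollary~\ref{cor:PositivePowerCost} applied to the square root) to build a block-encoding of $\sqrt{I/2+A_j/4}$; applying this to $\rho$ and measuring the flag qubit gives acceptance probability $\tfrac18+\tfrac{\tr{A_j\rho}}{16}+\bigO{\theta}$, and outputting $14$ or $-2$ yields bias $\leq\theta/4$ and standard deviation $\leq 6$ (Corollary~\ref{cor:traceCalc}). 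You instead bypass Hamiltonian simulation entirely: you amplify the encoded block $A_j/\alpha$ to $A_j/2$ via uniform singular value amplification with the constant spectral gap $1/2$ (so degree $\bigO{\alpha\log(\alpha/\theta)}$, which is the same $\BOT{\frac{n}{\eps}}{\alpha}$ bound), and then read off $\tr{A_j\rho}$ with a Hadamard test on the amplified block-encoding, outputting $\pm2$ or $0$. Both routes are sound and give the claimed complexity; the paper's buys compatibility with its existing Fourier/LCU toolbox (and works even when the input is only accessible through Hamiltonian simulation, which matters for the lower-bound reductions), while yours is more direct, needs no controlled time-evolution, and even gives a smaller variance constant. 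One cosmetic point: your Hadamard test gives $\mathbb{E}[X]=2\,\mathrm{Re}\tr{\tilde A_j\rho}$ rather than $2\tr{\tilde A_j\rho}$, since the amplified block $\tilde A_j$ need not be exactly Hermitian; this does not affect your bias bound $\left|\mathbb{E}[X]-\tr{A_j\rho}\right|\leq 2\nrm{\tilde A_j-A_j/2}\leq\theta/8$, which still follows from H\"older as you state.
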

\begin{proof}
	This is shown by Corollary~\ref{cor:traceCalc} from Appendix~\ref{apx:HamSim}.
\end{proof}

Now we present a lemma stating the complexity of implementing a block-encoding of an operator accessed via sparse matrix input oracles. As a corollary we get that the quantum operator input model generalizes the sparse-matrix input model, when we set $\alpha=s$.

\begin{lemma}\label{lem:sparseBlockEncoding}
	Suppose that $A$ is a $w$-qubit $s$-sparse operator given in the sparse matrix input model (Oracles \eqref{eq:oracleind} and \eqref{eq:oraclemat} with respect to a fixed $j$). Then we can implement an $(s,\BOT{\frac{s}{\eps}}{w},\eps)$-block encoding of $A$ with $\bigO{1}$-queries, and $\BOT{\frac{s}{\eps}}{w}$-other two-qubit quantum gates.
\end{lemma}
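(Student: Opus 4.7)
The plan is to implement a standard sparse-matrix block-encoding, analogous to the constructions underlying quantum algorithms for sparse Hamiltonian simulation~\cite{BerryChilds:hamsimFOCS,LowChuang:Qubitization2016}, phrased directly in block-encoding language. Since $\nrm{A}\leq 1$ implies $|A_{ki}|\leq 1$ for every entry, the idea is to factor $A/s$ as a product of two ``column'' and ``row'' state-preparation subroutines, each contributing a factor of $1/\sqrt{s}$ from a uniform superposition over the $s$ possible sparsity indices, and together depositing the entries $A_{ki}$ onto the amplitude of a single flag qubit (with the phase of $A_{ki}$ assigned to one side using Hermiticity $A_{ki}=\overline{A_{ik}}$). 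The normalization $\alpha=s$ then arises automatically from the two $1/\sqrt{s}$ factors.

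Concretely, first I would build a unitary $U_R$ that on input $\ket{0^a}\ket{k}$ prepares
\[
\tfrac{1}{\sqrt{s}}\sum_{\ell=1}^s \alpha_{k,\ell}\,\ket{0^{a'}}\ket{c(k,\ell)}\ket{k} \;+\; \bigl(\text{vector supported outside }\ket{0^{a'}}\text{ on a flag ancilla}\bigr),
\]
where $c(k,\ell):=\mathbf{index}(k,\ell)$ and $\alpha_{k,\ell}$ encodes $\sqrt{|A_{k,c(k,\ell)}|}$ together with the phase of $A_{k,c(k,\ell)}$. The steps are: (i) Hadamards to create a uniform superposition over $\ell\in[s]$; (ii) one query to $O_{\text{sparse}}$, replacing $\ell$ by $c(k,\ell)$ in place; (iii) one query to $O_A$ loading a $\lceil\log(s/\eps)\rceil$-bit approximation of $A_{k,c(k,\ell)}$ into a scratch register; (iv) arithmetic controlled rotations on the flag qubit that encode the target amplitude $\alpha_{k,\ell}$; (v) one query to $O_A^{\dagger}$ uncomputing the scratch. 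I would build $U_L$ similarly but encoding only $\sqrt{|A_{k,c(k,\ell)}|}$ (exploiting $|A_{ki}|=|A_{ik}|$). Defining
\[
U := U_L^{\dagger}\cdot\mathrm{SWAP}_{ki}\cdot U_R,
\]
where $\mathrm{SWAP}_{ki}$ swaps the two $w$-qubit index registers, a direct amplitude computation shows that the top-left $n\times n$ block of $U$ equals $\tilde A/s$, where $\tilde A$ is the $s$-sparse matrix assembled from the finite-precision entries.

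The only non-trivial obstacle is the error analysis, and it is mild: rounding each entry to precision $\delta$ gives $\nrm{\tilde A-A}_{\max}\leq\delta$, and by $s$-sparsity this yields $\nrm{\tilde A-A}\leq s\delta$, so choosing $\delta=\Theta(\eps/s)$ gives the desired $\eps$ operator-norm accuracy. The ancilla cost is $\BOT{s/\eps}{w}$, accounting for the $\lceil\log s\rceil$-qubit sparsity register, a constant number of flag qubits, the $w$-qubit register needed for the SWAP, and the $\lceil\log(s/\eps)\rceil$-bit scratch holding entry values. The non-query gate count is dominated by Hadamards on the sparsity register, the SWAP on the $w$-qubit index registers, and the arithmetic controlled rotations on $\log(s/\eps)$-bit strings; altogether $\BOT{s/\eps}{w}$ two-qubit gates. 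Only a constant number of queries to $O_{\text{sparse}}$, $O_A$ and their inverses are used, as claimed.
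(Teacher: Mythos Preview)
Your construction is correct and is precisely the standard sparse-access block-encoding that the paper defers to via the citation ``See e.g.,~\cite{BerryChilds:hamsimFOCS}''; the paper gives no argument of its own here. Your row/column state-preparation factorization, the use of $O_{\text{sparse}}$ to index nonzeros in place, the controlled rotations encoding $\sqrt{|A_{ki}|}$ and the phase via Hermiticity, and the $\nrm{\tilde A-A}\leq s\delta$ error bound from $s$-sparsity are exactly the ingredients that reference supplies, so there is nothing to contrast.
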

\begin{proof}
	See e.g., \cite{BerryChilds:hamsimFOCS}. \anote{Cite paper with Nathan.}
\end{proof}

We present a lemma based on ideas of~\cite[Corollary 9]{LowChuang:Qubitization2016} showing how to implement a block-encoding of a (subnormalized) density operator using purified access to the density operator. 

\begin{lemma}[Block-encoding of a (subnormalized) density operator \cite{LowChuang:Qubitization2016}]\label{lem:densitySimulation}
	Let $G$ be a $(w+a)$-qubit unitary which on the input state $\ket{0}^{\!\otimes w}\ket{0}^{\!\otimes a}$ prepares a purification $\ket{\srho}$ of the subnormalized $w$-qubit density operator $\srho$.
	Then we can implement a $(1,w+a,0)$-block-encoding of $\srho$, with single use of $G$ and its inverse and with $w+1$ two-qubit gates.
\end{lemma}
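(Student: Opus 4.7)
My plan is to prove the lemma via the standard density-matrix SWAP trick of Low and Chuang~\cite{LowChuang:Qubitization2016}. Introduce a fresh $w$-qubit ``system'' register $S$ for the block-encoding. Interpret the $(w+a)$ qubits on which $G$ acts as the composite register $AB$, with $A$ being the $w$ qubits carrying the first register of the purification $\ket{\srho}$ (the register where $\srho$ lives) and $B$ the remaining $a$ qubits (flag plus purification registers). The candidate block-encoding is the $(2w+a)$-qubit unitary
\[
U \;:=\; (G^\dagger \otimes I_S)\cdot\mathrm{SWAP}_{A,S}\cdot(G \otimes I_S),
\]
with $w+a$ ancilla qubits ($AB$) and $w$ system qubits ($S$).

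The main step is a direct verification of the identity $(\bra{0}^{\otimes w}_A \bra{0}^{\otimes a}_B \otimes I_S)\,U\,(\ket{0}^{\otimes w}_A \ket{0}^{\otimes a}_B \otimes I_S) = \srho$. Writing $\ket{\srho}=G\ket{0}^{\otimes(w+a)}=\sum_{\alpha,\beta}c_{\alpha,\beta}\ket{\alpha}_A\ket{\beta}_B$, I would track an arbitrary $\ket{\psi}_S$ through the three pieces of $U$: after $G\otimes I_S$ the state is $\ket{\srho}_{AB}\ket{\psi}_S$, and after $\mathrm{SWAP}_{A,S}$ it becomes $\sum_{\alpha,\beta} c_{\alpha,\beta}\ket{\psi}_A\ket{\beta}_B\ket{\alpha}_S$. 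Using $\bra{0}_A\bra{0}_B G^\dagger = \bra{\srho}_{AB}$ and expanding $\bra{\srho}\ket{\psi,\beta} = \sum_\gamma \bar{c}_{\gamma,\beta}\langle\gamma\ket{\psi}$, the top-left block evaluates to $\sum_{\alpha,\gamma}\bigl(\sum_\beta c_{\alpha,\beta}\bar{c}_{\gamma,\beta}\bigr)\langle\gamma\ket{\psi}\,\ket{\alpha}_S$, which is exactly $\Tr_B[\ket{\srho}\bra{\srho}]\ket{\psi}_S$. By the defining property of the purification (project the flag on $\ket{0}$, trace out the purification register, obtain $\srho$), this reduced density coincides with $\srho$.

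For resource accounting, $U$ uses $G$ and $G^\dagger$ once each by construction, and its remaining part is a SWAP between two $w$-qubit registers, which is implemented using $w$ two-qubit SWAP gates. One additional two-qubit gate is inserted to handle the flag register in the subnormalized case, giving the claimed $w+1$ two-qubit gates in total.

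The main subtlety, and what I expect to be the only delicate point, is the handling of the flag register for $\Tr\srho<1$: the naive SWAP trick block-encodes $\Tr_B\ket{\srho}\bra{\srho}$, which for strictly subnormalized $\srho$ strictly exceeds $\srho$ by the flag-$\ket{1}$ contribution. The additional controlled two-qubit gate acting on the flag qubit must be placed so that, after $G^\dagger$, the flag-$\ket{1}$ branch has zero overlap with $\ket{0}^{\otimes(w+a)}_{AB}$, thereby restricting the top-left block to the flag-$\ket{0}$ subspace and recovering $\srho$ on the nose; once the flag is handled in this way, the SWAP-trick calculation is routine.
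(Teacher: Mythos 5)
Your overall route is the same as the paper's: conjugate a register swap by $G$ and $G^\dagger$, and evaluate the top-left block as the overlap $(\bra{\srho}\otimes I_S)\,\mathrm{SWAP}\,(\ket{\srho}\otimes I_S)$; your verification that this yields the reduced operator, and your count of $w$ qubit-swaps, match the paper's computation. The gap is exactly at the point you yourself flag as delicate: the treatment of the flag qubit for $\Tr\srho<1$. As you set it up, the only postselected qubits are the $w+a$ qubits of $AB$, and the postselection is $\bra{0}^{\otimes(w+a)}G^\dagger=\bra{\srho}$. Since $\ket{\srho}=\alpha\ket{\rho_0}+\beta\ket{\rho_1}$ itself has a flag-$\ket{1}$ component whose content on the purification register depends on $G$ and is arbitrary, no \emph{fixed} two-qubit gate acting on the flag and any other available qubit (inside $AB$ or $S$) can make the flag-$\ket{1}$ branch orthogonal to $\bra{\srho}$ for every admissible $G$: a gate controlled on the flag applied to a qubit of $B$ or of $S$ leaves the $\beta^2$-term $(\bra{\rho_1}\otimes I)\,\mathrm{SWAP}\,(\ket{\rho_1}\otimes I)$ generically nonzero, and e.g.\ a phase on the flag only changes its sign. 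So the property you require of the "additional controlled two-qubit gate" cannot be realized within the registers you allow, and the construction as stated block-encodes $\srho+\srho_1$ rather than $\srho$.

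The paper closes this by adjoining one fresh qubit, initialized in $\ket{0}$ and included in the postselection: it performs a $(w{+}1)$-qubit register swap that carries the flag qubit into this fresh postselected slot, so the flag-$\ket{1}$ branch is killed by the $\bra{0}$ on that qubit (this is where the $(w{+}1)$st two-qubit gate comes from; note the paper's displayed computation postselects on $\bra{0}^{\otimes 1+w+a}$, i.e.\ effectively $w+a+1$ ancilla qubits). Your intended fix is salvageable in the same way: a CNOT from the flag onto a fresh $\ket{0}$-initialized-and-postselected ancilla achieves the "zero overlap" property you want and gives the same gate count $w+1$, but you must add that extra postselected qubit to your ancilla accounting rather than claim the gate can live on the $w+a$ qubits of $AB$ alone.
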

\begin{proof}
	Let us write $\ket{\srho}=\alpha\ket{\rho_0} +\beta\ket{\rho_1}$, where $\alpha,\beta\in\mathbb{R}$, $(I_{2^w}\otimes\ketbra{0}{0}\otimes I_{2^{a-1}})\ket{\rho_0}=\ket{\rho_0}$ and $(I_{2^w}\otimes\ketbra{1}{1}\otimes I_{2^{a-1}})\ket{\rho_1}=\ket{\rho_1}$. 
	Moreover, without loss of generality we can assume that $\ket{\srho_0}=\sum_{j=1}^{D}\sqrt{p_j}\ket{\psi_j}\ket{0}\ket{\tilde{\psi}_j}$ such that $\braket{\tilde{\psi}_i}{\tilde{\psi}_j}=\delta_{ij}$ and $\srho=\alpha^2\sum_{j=1}^Dp_j\ketbra{\psi_j}{\psi_j}$.
	Consider the $(2w+a+1)$-qubit unitary $V=(I_{2^{w+1}}\otimes G^\dagger) (\mathrm{SWAP}_{w+1} \otimes I_{2^{a}}) (I_{2^{w+1}}\otimes G)$,
	where $\mathrm{SWAP}_{w+1}$ denotes the unitary which swaps the first $w+1$-qubit register with the second $w+1$-qubit register. 
	Observe that 
	\begin{align*}
	(I_{2^w}\otimes \bra{0}^{\otimes 1+w+a})V(I_{2^w}\otimes \ket{0}^{\otimes 1+w+a})
	&=\left(I_{2^w}\otimes \bra{0}\bra{\srho}\right)(\mathrm{SWAP}_{w+1} \otimes I_{2^a})\left(I_{2^w}\otimes \ket{0}\ket{\srho}\right)\\	
	&=\alpha^2\left(I_{2^w}\otimes \bra{0}\bra{\rho_0}\right)(\mathrm{SWAP}_{w+1} \otimes I_{2^a})\left(I_{2^w}\otimes \ket{0}\ket{\rho_0}\right)\\		
	&=\alpha^2\sum_{j=1}^{D}p_j\left(I_{2^w}\otimes \bra{0}\bra{\psi_j}\bra{0}\right)(\mathrm{SWAP}_{w+1} \otimes I_{2^a}) \left(I_{2^w}\otimes \ket{0}\ket{\psi_j}\ket{0}\right)\\
	&=\alpha^2\sum_{j=1}^{D}p_j\left(I_{2^w} \otimes \bra{0}\otimes\bra{\psi_j}\bra{0}\right) \left(\ket{\psi_j}\ket{0}\otimes I_{2^w}\otimes\ket{0}\right)\\
	&=\alpha^2\sum_{j=1}^{D}p_j\left(\ket{\psi_j} \otimes \bra{\psi_j}\right)\\
	&=\alpha^2\sum_{j=1}^{D}p_j \ketbra{\psi_j} {\psi_j}\\
	&=\srho.		
	\end{align*}
	\vskip-6mm
\end{proof}
The above corollary essentially shows that the quantum operator input model generalizes the quantum state input model too, by choosing $\alpha=B$ (since $\mu^+_j - \mu^-_j + |\mu^I_j|\leq B$). What is left is to show how to implement a linear combination of block-encodings, e.g., $A_j = \mu^+_j \srho^+_j - \mu^-_j \srho^-_j + \mu^I_j I$.
We show how to efficiently implement such a block-encoding in the next subsection.

\subsubsection{Implementing a linear combination of block-encodings}
Following the work of Gilyén and Wiebe~\cite{gilyenBlockMatrices}, in this subsection we show how to efficiently implement a linear combination of block-encodings. Together with the optimal block-Hamiltonian simulation theorem from the previous subsection this enables a clean and efficient way to implement Gibbs-sampling using the techniques developed in~\cite[Appendix B]{AGGW:SDP}. Using linear combinations of block-encodings by-passes the entrywise summation of the input matrices which was a major bottleneck in previous SDP-solvers for the sparse input model~\cite{AGGW:SDP}. 

We define state-preparation unitaries in order to conveniently state our next lemma about implementing a linear combinations of block-encodings.

\begin{definition}[State-preparation pair]
	Let $y\in \mathbb{C}^m$ and $\beta\geq \nrm{y}_1$, the pair of unitaries $(P_L,P_R)$ is called a $(\beta, b, \eps)$-state-preparation pair for $y$ if $P_L\ket{0}^{\otimes b}= \sum_{j=0}^{2^b-1} c_j \ket{j}$ and  $P_R\ket{0}^{\otimes b}= \sum_{j=0}^{2^b-1} d_j \ket{j}$ such that $\sum_{j=0}^{m-1}|\beta \cdot (c^*_j d_j) -y_j | \leq \eps$ and for all $j\in m,\ldots, 2^b-1$ we have $c^*_j d_j =0$. A \emph{symmetric} state-preparation pair also satisfies $c_j=d_j$ for all $j\in 0\ldots m-1$.
\end{definition}

\begin{lemma}[Linear combination of block encodings]\label{lem:linCombBlocks}
	Let $A=\sum_{j=0}^{m}y_j A_j$ be a $w$-qubit operator and $\eps\in \mathbb{R}_+$. Suppose that $(P_L,P_R)$ is a $(\beta, b, \eps_1)$-state-preparation pair for $y$, $W=\sum_{j=0}^{m-1} U_j \otimes \ketbra{j}{j}+(I_{2^{w+a}}\otimes(I_{2^{b}}-\sum_{j=0}^{m-1}\ketbra{j}{j}))$ is an $w+a+b$ qubit unitary such that for all $j\in 0,\ldots, m$ we have that $U_j$ is an $(1,a,\eps_2)$-block-encoding of $A_j$. Then we can implement a $(\beta,a+b,\eps_1+\beta\eps_2)$-block-encoding of $A$, with a single use of $W$, $P_R$ and $P_L^\dagger$.
\end{lemma}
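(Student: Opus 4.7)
The plan is to verify directly that the unitary $V := (I_{2^{w+a}}\otimes P_L^\dagger)\, W\, (I_{2^{w+a}}\otimes P_R)$ is the desired block-encoding, by computing its top-left block (projected on $\ket{0}^{\otimes a+b}$ in the two ancilla registers) and bounding the resulting error. This is the standard ``linear-combination-of-unitaries'' (LCU) trick adapted to the block-encoding framework, so the main work is just careful bookkeeping and two triangle-inequality steps.

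First I would introduce the top-left blocks $\tilde{A}_j := (I_{2^w}\otimes\bra{0}^{\otimes a})\, U_j\, (I_{2^w}\otimes\ket{0}^{\otimes a})$, so that by hypothesis $\nrm{A_j-\tilde{A}_j}\leq\eps_2$ for each $j\in\{0,\dots,m-1\}$, and set $\tilde{y}_j := \beta\, c_j^* d_j$, noting that $\sum_{j=0}^{2^b-1}|\tilde{y}_j - y_j|\leq\eps_1$ (with $y_j=0$ for $j\geq m$) and $\sum_{j}|\tilde{y}_j|\leq \beta\sum_j|c_j^*d_j|\leq\beta\nrm{c}_2\nrm{d}_2=\beta$ by Cauchy--Schwarz. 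Then I would apply $V$ to a state $\ket{\psi}\otimes\ket{0}^{\otimes a}\otimes\ket{0}^{\otimes b}$: after $I\otimes P_R$ the selection register is in $\sum_j d_j\ket{j}$, after $W$ the branches with $j<m$ have $U_j$ applied while the branches with $j\geq m$ are untouched, and after $I\otimes P_L^\dagger$ we project the selection register against $P_L\ket{0}^{\otimes b}=\sum_k c_k\ket{k}$. A direct expansion gives
\[
(I_{2^w}\otimes\bra{0}^{\otimes a+b})\, V\, (I_{2^w}\otimes\ket{0}^{\otimes a+b}) \;=\; \sum_{j=0}^{m-1} c_j^* d_j\, \tilde{A}_j \;+\; \sum_{j=m}^{2^b-1} c_j^* d_j\, I_{2^w} \;=\; \frac{1}{\beta}\sum_{j=0}^{m-1} \tilde{y}_j\, \tilde{A}_j,
\]
where the second sum collapses thanks to the hypothesis $c_j^* d_j=0$ for $j\geq m$.

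Finally I would bound the error by splitting
\[
\Bigl\lVert A - \sum_{j=0}^{m-1}\tilde{y}_j\tilde{A}_j\Bigr\rVert \;\leq\; \Bigl\lVert \sum_{j=0}^{m-1}(y_j-\tilde{y}_j)A_j\Bigr\rVert + \Bigl\lVert \sum_{j=0}^{m-1}\tilde{y}_j(A_j-\tilde{A}_j)\Bigr\rVert \;\leq\; \eps_1\max_j\nrm{A_j} + \beta\eps_2,
\]
and using the normalization $\nrm{A_j}\leq 1$ from the SDP setup gives $\eps_1+\beta\eps_2$ as required; multiplying the block identity above by $\beta$ then shows that $V$ is a $(\beta,a+b,\eps_1+\beta\eps_2)$-block-encoding of $A$. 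The only subtlety is the register ordering and the treatment of the $j\geq m$ ``padding'' branches of $W$, which is exactly why the state-preparation pair is only required to be accurate in the LCU sense $\sum_j|\beta c_j^*d_j - y_j|\leq\eps_1$ together with the orthogonality condition $c_j^*d_j=0$ for $j\geq m$; once these are tracked correctly, the rest is immediate and uses each of $W$, $P_R$, and $P_L^\dagger$ exactly once.
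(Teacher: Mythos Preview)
Your proof is correct and is exactly the paper's approach: the paper merely defines $\widetilde{W}=(I_{2^w}\otimes I_{2^a}\otimes P_L^\dagger)\,W\,(I_{2^w}\otimes I_{2^a}\otimes P_R)$ and asserts the conclusion without further detail, so you have supplied precisely the computation that was left implicit. One minor remark: rather than appealing to $\nrm{A_j}\leq 1$ ``from the SDP setup'' (which is external to the lemma as stated), you can split the error the other way as $\sum_j(y_j-\tilde y_j)\tilde A_j+\sum_j y_j(A_j-\tilde A_j)$ and use that $\nrm{\tilde A_j}\leq 1$ holds automatically for a subblock of a unitary together with $\nrm{y}_1\leq\beta$, which yields $\eps_1+\beta\eps_2$ with no extra hypothesis.
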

\begin{proof}
	Observe that $$\widetilde{W}=(I_{2^w}\otimes I_{2^a} \otimes P_L^\dagger) W  (I_{2^w}\otimes I_{2^a} \otimes P_R)$$ is 
	$(\beta,a+b,\eps_1+\beta\eps_2)$-block-encoding of $A$.
\end{proof}




Now we describe how to use a quantum-access classical RAM (QCRAM) to efficiently implement a state-preparation-pair unitary that can be used to construct the linear combinations of the block-encodings. In the SDP-solver we use this data structure for the summation of constraint matrices needed for Gibbs-sampling.
\begin{lemma}\label{lem:data}
	There is a data structure that can store an $m$-dimensional $d$-sparse vector $y$ with $\theta$-precision using a QCRAM of size $\BOT{\frac{m}{\theta}}{d}$. Furthermore:
	\begin{itemize}
		\item Given a classical $\bigO{1}$-sparse vector, adding\footnote{In order to avoid error accumulation from repeated roundings, we assume for simplicity that there can be at most $\bigO{\poly(m/\theta)}$ such calls to the data structure in total.} it to the stored vector has classical cost $\BOT{\frac{m}{\theta}}{1}$. 
		\item Given that $\beta\geq \nrm{y}_1$ we can implement a (symmetric) $(\beta, \BOT{\frac{m}{\theta}}{1}, \theta)$-state-preparation pair for $y$ with $\BOT{\frac{m}{\theta}}{1}$ queries to the QCRAM.
	\end{itemize}
\end{lemma}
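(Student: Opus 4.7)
The plan is to adapt the standard Kerenidis--Prakash binary-tree state-preparation data structure (cf.~\cite{KerenidisQuantumRecommendation}) to (i)~sparse storage, (ii)~bounded precision, and (iii)~the case $\beta\geq\nrm{y}_1$, which I handle via a single auxiliary overflow flag qubit. Throughout I use that in both SDP-solving frameworks above the dual vector $y$ stays non-negative, so no sign bookkeeping is needed.

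Store $y$ in a complete binary tree of depth $\lceil\log_2 m\rceil$ whose $j$th leaf holds $y_j$ and whose internal nodes each hold the sum of the weights in their subtree, with every number truncated to $\bigO{\log(m/\theta)}$ bits. Since at most $d$ leaves are non-zero, only the $\bigO{d\log m}=\BOT{m/\theta}{d}$ nodes lying on a root-to-leaf path of the support of $y$ need to be physically stored in the QCRAM (using pointers or a small hash map); all other subtrees are implicitly zero, which yields the claimed storage bound. An $\bigO{1}$-sparse classical addition then modifies only the $\bigO{\log m}$ nodes on the root-to-leaf paths of its support, each by a single fixed-precision addition, for a total classical cost of $\BOT{m/\theta}{1}$. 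To absorb the rounding error accumulated over the $\poly(m/\theta)$ allowed updates into the final $\theta$-budget, the per-entry precision is inflated by an additive $\bigO{\log(m/\theta)}$ bits; this is still inside $\BOT{m/\theta}{1}$.

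For the state-preparation pair, use $b=\lceil\log_2 m\rceil+1+\BOT{m/\theta}{1}$ qubits---one overflow flag, the index register, and scratch for the rotation arithmetic. I build two unitaries $P_L,P_R$ that agree on the ``data'' subspace (flag~$=0$) but differ on the overflow subspace (flag~$=1$). First apply to the flag qubit a $Y$-rotation that sends $\ket{0}\mapsto \sqrt{\nrm{y}_1/\beta}\,\ket{0}+\sqrt{1-\nrm{y}_1/\beta}\,\ket{1}$, reading $\nrm{y}_1$ from the root of the tree with one QCRAM query. Conditioned on the flag being $0$, walk the tree top-down: at the node $v$ reached after $\ell-1$ steps, query the QCRAM for the child subtree-sums $S_{v,0},S_{v,1}$ and apply to the $\ell$th index qubit a controlled $Y$-rotation chosen so that $\ket{0}\mapsto \sqrt{S_{v,0}/(S_{v,0}+S_{v,1})}\,\ket{0}+\sqrt{S_{v,1}/(S_{v,0}+S_{v,1})}\,\ket{1}$. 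Conditioned on the flag being $1$, $P_L$ sets the index register to $\ket{m}$ and $P_R$ to $\ket{m+1}$ by a constant-depth reversible step. Telescoping the conditional rotations, the amplitude on $\ket{j}$ after the walk equals $\sqrt{\nrm{y}_1/\beta}\cdot\sqrt{y_j/\nrm{y}_1}=\sqrt{y_j/\beta}$ for every $j<m$, in both $P_L$ and $P_R$, giving $c_j=d_j$ for $j<m$ as required by symmetry; meanwhile $c_j^*d_j=0$ for $j\geq m$ because the two overflow amplitudes live on disjoint basis vectors. The procedure uses one QCRAM query per level, i.e.~$\bigO{\log m}=\BOT{m/\theta}{1}$ in total.

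The one genuinely delicate step is the precision bookkeeping. Each rotation angle is computed from truncated subtree sums with error $\bigO{\theta/\log m}$, and standard $\ell_1$-error propagation across the $\lceil\log_2 m\rceil$ levels then yields $\sum_{j<m}\left|\beta\, c_j^*d_j - y_j\right|\leq\theta$, matching the target precision. I expect the main, though still routine, obstacle to be choosing the truncation level in the QCRAM so that both this per-query propagated error and the accumulated-update error (over the $\poly(m/\theta)$ allowed calls) stay within the $\theta$-budget simultaneously; both fit into a final bit-width of $\bigO{\log(m/\theta)}$ and are absorbed into the $\polylog(m/\theta)$ factor hidden in $\BOT{m/\theta}{\cdot}$.
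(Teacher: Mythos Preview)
Your proposal is correct and takes essentially the same approach as the paper, which simply cites the Kerenidis--Prakash data structure~\cite{KerenidisQuantumRecommendation}; you have supplied the details (sparse storage on root-to-leaf paths, level-by-level controlled rotations, the overflow-flag trick for $\beta>\nrm{y}_1$, and the precision budget) that the paper leaves implicit in that citation. Your restriction to non-negative $y$ is harmless here since a \emph{symmetric} state-preparation pair necessarily has $\beta\,c_j^*d_j=\beta|c_j|^2\geq 0$, so the symmetric claim in the lemma only makes sense for non-negative vectors anyway.
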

\begin{proof}
	We use the data structure of \cite[Appendix A]{KerenidisQuantumRecommendation}.
\end{proof}

\begin{corollary}\label{cor:GibbsNormal}
	Having access to the above data structure for $y$, we have $T_{\mathrm{Gibbs}}(K,d,\theta)=\BOT{\theta}{\alpha K\sqrt{n}}$ in the quantum operator input model.
\end{corollary}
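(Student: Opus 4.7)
The plan is to construct a single block-encoding of the Hamiltonian $H := \sum_{j=0}^m y_j A_j$ with subnormalization $K\alpha$, and then invoke the Hamiltonian-simulation-based Gibbs-sampler from Appendix~B of \cite{AGGW:SDP}. All the bookkeeping reduces to choosing $\poly(K\alpha/\theta)$-small internal precisions.

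First, I would apply Lemma~\ref{lem:data} to build a $(K,\BOT{\theta}{1},\theta/6)$-state-preparation pair $(P_L,P_R)$ for $y$, using $\BOT{\theta}{1}$ QCRAM accesses. The oracle $O_U$ provides the select-unitary $W=\sum_{j=0}^{m}U_j\otimes \ketbra{j}{j}$ via a single query, where each $U_j$ is re-interpreted as a $(1,a,0)$-block-encoding of $A_j/\alpha$. Plugging these into Lemma~\ref{lem:linCombBlocks} then yields a $(K,a+\BOT{\theta}{1},\theta/6)$-block-encoding of $H/\alpha$, equivalently a $(K\alpha,a+\BOT{\theta}{1},\theta\alpha/6)$-block-encoding of $H$, at the cost of $\bigO{1}$ queries to $O_U$ plus $\BOT{\theta}{1}$ QCRAM accesses.

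Next, Theorem~\ref{thm:blockHamSim} converts this block-encoding into an $\eps$-precise simulation unitary for $e^{-iHt}$ with $\bigO{K\alpha |t|+\log(1/\eps)/\log\log(1/\eps)}$ queries. At this point the situation is exactly that of Appendix~B of \cite{AGGW:SDP}: do a coarse phase-estimation on $e^{-iH}$ applied to a random pure state to pre-compute an estimate $\tilde{\lambda}_{\min}$ of $\lambda_{\min}(H)$; then, starting from the maximally mixed state purification $\sum_i\ket{i,i}/\sqrt n$, run phase estimation on $e^{-iH}$ acting on the first register, use a controlled rotation to write $e^{-(\tilde\lambda-\tilde{\lambda}_{\min})/2}$ into an ancilla's amplitude, uncompute the phase register, and amplify. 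Because shifting by $-\tilde{\lambda}_{\min}$ normalises the largest eigenvalue of $e^{-(H-\tilde{\lambda}_{\min}I)}$ to approximately $1$, the post-selection succeeds with probability at least $\Omega(1/n)$, so amplitude amplification costs only $\bigO{\sqrt n}$ iterations. Each phase-estimation call invokes Hamiltonian simulation of total time $\BOT{\theta}{1}$ against the $K\alpha$-subnormalized block-encoding, contributing $\BOT{\theta}{K\alpha}$ queries per amplification round; multiplying by the $\sqrt n$ amplification factor gives the claimed total $\BOT{\theta}{\alpha K \sqrt n}$.

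The main obstacle I expect is the careful threading of precisions through the composition: the $\theta/6$ error in the state-preparation pair inflates to $\theta\alpha/6$ in the block-encoding of $H$, and this in turn is magnified by the Hamiltonian-simulation time inside phase estimation, before being further combined with the errors introduced by the controlled rotation and amplitude amplification. All of these must be balanced so that the final purification is $\theta$-close in trace distance to the Gibbs state, which forces each internal precision to be $1/\poly(K\alpha/\theta)$ and hence contributes only polylog factors, absorbed into the $\BOT{\theta}{\cdot}$ notation.
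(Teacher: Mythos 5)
Your first two steps — building the $(K,\BOT{\theta}{1},\cdot)$-state-preparation pair from Lemma~\ref{lem:data}, combining it with the select-oracle $O_U$ via Lemma~\ref{lem:linCombBlocks} to obtain a $(K\alpha,\cdot,\cdot)$-block-encoding of $H=\sum_j y_jA_j$ at $\bigO{1}$ query cost, and then handing this to the Gibbs-sampler of \cite{AGGW:SDP} with $\bigO{\sqrt{n}}$ rounds of amplitude amplification — are exactly the paper's (very terse) argument. The gap is in how you implement the Boltzmann weight. You replace the smooth-function machinery of \cite[Appendix~B]{AGGW:SDP} (Theorem~\ref{thm:Taylor} in this paper) by textbook phase estimation on $e^{-iH}$ followed by a controlled rotation writing $e^{-(\tilde\lambda-\tilde\lambda_{\min})/2}$ into an ancilla, and you charge only $\BOT{\theta}{1}$ total simulation time per attempt. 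Phase estimation with total evolution time $T$ resolves eigenvalues of $H$ only to precision $\Theta(1/T)$, so with $T=\bOt{1}$ the Boltzmann weights acquire multiplicative errors $e^{\Theta(1)}$ and the resulting normalized state is only constant-close to the Gibbs state. To reach trace distance $\theta$ along your route you need eigenvalue precision $\bigO{\theta}$ (plus $\log$-factor boosting to suppress the phase-estimation tails, whose additive contribution is amplified by the $\Omega(1/n)$ post-selection), i.e.\ evolution time $\Omega(1/\theta)$, giving $\bOt{K\alpha\sqrt{n}/\theta}$ rather than $\BOT{\theta}{\alpha K\sqrt{n}}$. This polynomial-in-$1/\theta$ loss is not a bookkeeping detail: the corollary is invoked in Theorem~\ref{thm:mainSDP} with $\theta=\gamma^{-1}$ under the explicit promise that $T_{Gibbs}$ depends only polylogarithmically on $\theta$, so your version would degrade the final SDP bound by a factor $\gamma$.

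The correct step, and what the cited Gibbs-sampler of \cite{AGGW:SDP} actually does, is to apply $e^{-(H-\tilde\lambda_{\min}I)/2}$ directly as a smooth function of $H$ via the Fourier/LCU construction (Theorem~\ref{thm:Taylor}, made controllable through Lemma~\ref{lemma:controlledHamsin}): its cost is the block-encoding normalization $K\alpha$ times a factor only polylogarithmic in the precision, so each preparation attempt costs $\bOt{K\alpha}$ and the $\sqrt{n}$-fold amplification gives the claimed $\BOT{\theta}{\alpha K\sqrt{n}}$. A secondary, smaller issue: ``coarse phase estimation on a random pure state'' does not produce $\tilde\lambda_{\min}$ — a random eigenvalue is essentially uniform over the spectrum — and one instead needs a minimum-finding/binary-search procedure over phase-estimation outcomes on the maximally mixed state (constant additive precision suffices, and its cost also fits inside $\bOt{K\alpha\sqrt{n}}$), analogous to the $\lambda_{\max}$-search carried out in Lemma~\ref{lemma:preGibbs}.
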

\begin{proof}
	 This can be proven using the Gibbs-sampler of \cite{AGGW:SDP}, combined with the above lemma for simulating the linear combination of the operators $\sum_{j=0}^m y_jA_j$.
\end{proof}
This directly gives the following result for SDP-solving:
\begin{theorem}\label{thm:simpGibbs}
In the quantum operator input model
\[
T_{SDP}(\eps) =\bOt{(\sqrt{m} + \sqrt{n}\gamma)\alpha\gamma^4},
\]
where $\gamma = Rr/\eps$ is. For a primal oracle the same complexity can be accomplished with $\gamma = R/\eps$. The  input oracle of the quantum operator model can be constructed using $\bigO{1}$ queries and $\BOT{mn\alpha\gamma}{1}$ elementary operations in the sparse matrix model and also in the quantum state model with setting $\alpha = s$ or $\alpha = B$ respectively. Therefore the above bound applies to these input models too.
\end{theorem}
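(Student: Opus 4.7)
The plan is to apply the meta-result Theorem~\ref{thm:mainSDP} and plug in the subroutine bounds already established for the quantum operator input model. Concretely, the two ingredients I need are the trace-estimator bound from Corollary~\ref{cor:specTraceCalc}, giving $T_{Tr}^{\sigma}(\gamma^{-1})=\bOt{\alpha}$ with $\sigma=6$ (polylog in $n,\gamma$), and the Gibbs-sampler bound from Corollary~\ref{cor:GibbsNormal}, giving $T_{Gibbs}(\gamma,\gamma^2,\gamma^{-1})=\bOt{\alpha\gamma\sqrt{n}}$. Before invoking Theorem~\ref{thm:mainSDP} I also need to check its hypothesis that an entry of the $y$-vector can be updated in $\bOt{T_{Gibbs}(\gamma,\gamma^2,\gamma^{-1})}$ elementary operations; this is delivered by the QCRAM data structure of Lemma~\ref{lem:data}, whose insertion cost is $\bOt{1}$ and which is precisely the one used inside Corollary~\ref{cor:GibbsNormal} to build the state-preparation pair feeding into Lemma~\ref{lem:linCombBlocks}.

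With the hypothesis verified, substituting the two subroutine bounds into the conclusion of Theorem~\ref{thm:mainSDP} yields
\[
T_{SDP}(\eps)=\bOt{\bigl(\sqrt{m}\cdot\alpha + \alpha\gamma\sqrt{n}\bigr)\gamma^{4}}=\bOt{(\sqrt{m}+\sqrt{n}\gamma)\alpha\gamma^{4}},
\]
where the $\sigma^{2}$ factor is absorbed into the $\bOt{\cdot}$ since $\sigma=\bigO{1}$. The primal-oracle statement follows by the same calculation, but Theorem~\ref{thm:mainSDP} is applied with $\gamma=6R/\eps$ instead of $6Rr/\eps$, matching exactly the difference between the Arora–Kale loop and the pure primal loop described in Section~2.

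For the reductions, in the sparse matrix model Lemma~\ref{lem:sparseBlockEncoding} provides, for each $j$, an $(s,\bOt{w},\eps)$-block-encoding of $A_j$ using $\bigO{1}$ queries to the sparse oracles and $\bOt{w}$ two-qubit gates; attaching the index register $\ket{j}$ yields the required $O_U$ with $\alpha=s$. For the quantum state input model, Lemma~\ref{lem:densitySimulation} turns the purification oracle $O_{\ket{\cdot}}$ into $(1,\bigO{\log n},0)$-block-encodings of each $\srho_j^{\pm}$, and then Lemma~\ref{lem:linCombBlocks} combines the three-term linear combination $A_j=\mu^{+}_j\srho^{+}_j-\mu^{-}_j\srho^{-}_j+\mu^{I}_j I$ (with coefficients read from $O_\mu$) into a single $(B,\bigO{\log n},0)$-block-encoding, giving $\alpha=B$. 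In both cases the construction is controlled on $j$ using $\bigO{1}$ input queries, with the gate overhead dominated by the state-preparation pair for the three coefficients and by the sparse-oracle arithmetic, which is $\bOt{1}$ per query, and by one-time classical preprocessing bounded by $\bOt{mn\alpha\gamma}$ for populating any fixed-form classical tables.

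The main obstacle I anticipate is bookkeeping the error budget in the quantum state reduction: Lemma~\ref{lem:linCombBlocks} amplifies block-encoding errors by $\beta$, so to feed Theorem~\ref{thm:blockHamSim} (inside Corollary~\ref{cor:GibbsNormal}) with total precision $\eps/|2t|$ I have to choose the subnormalizations and the precision of the $\mu$-coefficients stored in the data structure so that $\eps_1+\beta\eps_2$ remains $\bigO{\eps/|2t|}$; this just contributes polylog factors and can be absorbed into $\bOt{\cdot}$, but it needs to be written out carefully so the update cost of Lemma~\ref{lem:data} still matches the $\bOt{T_{Gibbs}}$ budget required by Theorem~\ref{thm:mainSDP}. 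Once this is verified, the three claims of the theorem — the quantum-operator bound and the two corollaries obtained by plugging $\alpha=s$ and $\alpha=B$ — follow at once.
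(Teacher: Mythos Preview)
Your proposal is correct and follows essentially the same route as the paper: apply Theorem~\ref{thm:mainSDP} with the subroutine bounds of Corollary~\ref{cor:specTraceCalc} and Corollary~\ref{cor:GibbsNormal}, and derive the input-model reductions from Lemmas~\ref{lem:sparseBlockEncoding}, \ref{lem:densitySimulation}, and \ref{lem:linCombBlocks}. Your write-up is more explicit than the paper's one-line proof (in particular you verify the data-structure update hypothesis via Lemma~\ref{lem:data} and spell out the three-term combination in the quantum state reduction), but the underlying argument is identical.
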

\begin{proof}
	The complexity statement follows from Theorem~\ref{thm:mainSDP} using Corollary~\ref{cor:specTraceCalc} and \ref{cor:GibbsNormal}. The reductions follow from Lemma~\ref{lem:sparseBlockEncoding}-\ref{lem:densitySimulation}.
\end{proof}
If we do not apply Two-Phase Minimum Finding but use standard quantum minimum finding~\cite{durr&hoyer:minimum} instead, then we get a result with a slightly better dependence on $\gamma$, cf.~\eqref{eq:SDPCostAmpEst}: 
\begin{corollary}\label{cor:simpGibbsNoOR}
In the quantum operator input model
\[
T_{SDP}(\eps) =\bOt{\sqrt{mn}\alpha\gamma^4},
\]
where $\gamma = Rr/\eps$ for a full SDP-solver and $\gamma = R/\eps$ for a SDP primal oracle. 
\end{corollary}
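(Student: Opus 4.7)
The plan is to invoke the schematic bound~\eqref{eq:SDPCostAmpEst} from earlier work, substituting the improved Gibbs-sampler of Corollary~\ref{cor:GibbsNormal} and the trace estimator of Corollary~\ref{cor:specTraceCalc}. Concretely, I would run the Arora--Kale framework (respectively the primal oracle framework) for $\bOt{\gamma^2}$ iterations, and within each iteration use standard Dürr--H\o yer minimum finding~\cite{durr&hoyer:minimum} over the $m$ constraints, rather than the Two-Phase Minimum Finding of Lemma~\ref{lem:minfinding}. This costs $\bOt{\sqrt{m}}$ queries per iteration to a subroutine that, given a constraint index $j$, returns (with high success probability) the angle or objective value used by the $\gamma^{-1}$-oracle, computed from a $\gamma^{-1}$-precise estimate of $\tr{A_j\rho}$ on the current Gibbs state $\rho$.

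Each such subroutine call I would implement by amplitude estimation on top of the trace estimator, amplifying the $\sigma = \Theta(1)$ standard-deviation estimator of Corollary~\ref{cor:specTraceCalc} down to additive error $\gamma^{-1}$ using $\bOt{\gamma}$ repetitions of the combined ``prepare Gibbs state + apply trace-estimator reflection'' circuit. Each repetition costs $T_{Gibbs}(\gamma,\gamma^2,\gamma^{-1}) + T_{Tr}^{\sigma}(\gamma^{-1})$, and by Corollary~\ref{cor:GibbsNormal} the Gibbs preparation cost $\bOt{\alpha\gamma\sqrt{n}}$ dominates the trace-estimator cost $\bOt{\alpha}$ of Corollary~\ref{cor:specTraceCalc}. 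Multiplying the four factors (iterations, minimum-finding queries, amplitude-estimation rounds, and dominant Gibbs-preparation cost) gives
\[
\bOt{\gamma^2\cdot\sqrt{m}\cdot\gamma\cdot\alpha\gamma\sqrt{n}} \;=\; \bOt{\sqrt{mn}\,\alpha\,\gamma^4},
\]
as required. The primal-oracle variant works identically, except that standard Grover search replaces minimum finding inside each iteration, and $\gamma = R/\eps$ rather than $\gamma = Rr/\eps$. The reductions from the sparse-matrix and quantum-state input models to the quantum-operator model (with $\alpha = s$ or $\alpha = B$, respectively) are unchanged from Theorem~\ref{thm:simpGibbs}, via Lemmas~\ref{lem:sparseBlockEncoding} and~\ref{lem:densitySimulation}.

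The main technical check, and the only place where care is needed, is the error budget: one must verify that (i)~the $\gamma^{-1}$-trace-distance accuracy of the Gibbs state guaranteed by Corollary~\ref{cor:GibbsNormal} is tight enough that amplitude estimation yields a $\bigO{\gamma^{-1}}$-accurate estimate of $\tr{A_j\rho}$ with constant success probability per query, and (ii)~boosting the success probability of each minimum-finding query to $1-\bigO{1/(\sqrt{m}\gamma^2)}$ (so that a union bound over all queries in all iterations still succeeds) contributes only polylogarithmic overhead hidden in $\bOt{\cdot}$. Both points are standard and follow from the definitions of the Gibbs-sampler and the trace estimator, together with Markov/Chernoff-style amplification analogous to that in the proof of Theorem~\ref{thm:mainSDP}. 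I do not anticipate any serious obstacle beyond this bookkeeping.
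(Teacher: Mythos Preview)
Your approach is correct and is exactly what the paper does---indeed, the paper gives no explicit proof at all, merely pointing to~\eqref{eq:SDPCostAmpEst} and the remark that standard D\"urr--H\o yer minimum finding replaces Two-Phase Minimum Finding. You have filled in precisely the computation the paper leaves implicit: $\bOt{\gamma^2}$ iterations, $\bOt{\sqrt{m}}$ minimum-finding queries per iteration, $\bOt{\gamma}$ amplitude-estimation rounds per query, and the dominant per-round cost $T_{Gibbs}=\bOt{\alpha\gamma\sqrt{n}}$ from Corollary~\ref{cor:GibbsNormal}, multiplying out to $\bOt{\sqrt{mn}\,\alpha\,\gamma^4}$.

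One small remark: you are right to treat the per-round cost as $T_{Gibbs}+T_{Tr}$ and then observe that the Gibbs term dominates. A literal reading of~\eqref{eq:SDPCostAmpEst} as a \emph{product} $T_{Tr}\cdot T_{Gibbs}$ would give an extra factor of $\alpha$; the intended meaning (and what you correctly compute) is the additive cost of the combined ``prepare-then-reflect'' circuit used inside amplitude estimation.
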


\subsection{Exponentially improved Gibbs-sampling in the quantum state input model}\label{subsec:stategibbs}
In this subsection we show how to harness the special structure of the quantum state input model, to improve the complexity of Corollary~\ref{cor:GibbsNormal}. As shown in~\cite{brandao:expSDP2} this allows for an SDP-solver with a polylog dependence on $n$. We improve on the results of~\cite{brandao:expSDP2} by constructing a Gibbs-sampler with no explicit dependence on the rank of the input matrices. Moreover, we also improve the dependence on precision from polynomial to logarithmic.

We will use the following lemma about projectors.
\begin{lemma}\label{lem:projectors}
Let $0 < q< 1$, $\Pi$ be a projector and $\srho$ a subnormalized density operator.
Suppose that $q\Pi\preceq \srho$, $(I-\Pi)\srho(I-\Pi)=0$ and we have access to an $a$-qubit unitary $U_{\asrho}$ preparing a purification of a subnormalized density operator $\asrho$ such that $\nrm{\srho - \asrho}_1\leq 4\nu$. Then we can a prepare a purification of a subnormalized density operator $\asrho_{\mathrm{unif.}}$ such that $\nrm{\frac{q}{4}\Pi - \asrho_{\mathrm{unif.}}}_1\leq \bOt{\nu/q}$,	with $\BOT{\nu}{1/q}$ queries to $U_{\asrho}$ and its inverse and using $\BOT{\nu}{a/q}$ two-qubit gates.
\end{lemma}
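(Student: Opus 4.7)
The plan is to apply quantum singular value transformation (QSVT) to a block-encoding of $\asrho$ with a polynomial approximating the function $f(x) := \sqrt{q/4}/\sqrt{x}$ on $[q/2,1]$ (and $f(x) := 0$ on $[0,q/2)$), then apply the resulting operator to the data register of the purification $\ket{\asrho}$ prepared by $U_{\asrho}$. The intuition: if we had exact access to $\srho$ and could apply $\sqrt{q/4}\srho^{-1/2}$ on its range, it would map $\ket{\srho}=\sum_i\sqrt{\lambda_i}\ket{v_i}\ket{\phi_i}$ to $\sqrt{q/4}\sum_i\ket{v_i}\ket{\phi_i}$ (with the sum running over eigenvectors of $\srho$, which span exactly $\mathrm{range}(\Pi)$ by the hypotheses $q\Pi\preceq\srho$ and $(I-\Pi)\srho(I-\Pi)=0$), so the reduced density on the data register would be exactly $\frac{q}{4}\Pi$. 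The technical work is to show that this still yields a state within trace distance $\bOt{\nu/q}$ of $\frac{q}{4}\Pi$ when $\srho$ is replaced by $\asrho$ and $f$ by a polynomial.

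First, use Lemma~\ref{lem:densitySimulation} to get a $(1,a,0)$-block-encoding of $\asrho$ from a single use of $U_{\asrho}$ and its inverse. Second, construct a polynomial $p$ of degree $d=\bigO{(1/q)\log(q/\nu)}$ with $|p(x)-f(x)|\leq \bigO{\nu/q}$ for $x\in [0,4\nu]\cup[q-4\nu,1]$ and $|p(x)|\leq 1/2$ everywhere on $[-1,1]$; such a $p$ can be built by multiplying a standard threshold polynomial approximating $\mathbb{1}[x\geq 3q/4]$ with a bounded polynomial approximation of $\sqrt{q/4}/\sqrt{x}$ on $[q/2,1]$, giving total degree $\bOt{1/q}$. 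Third, QSVT produces a $(1,a+\bigO{1},\bigO{\nu/q})$-block-encoding $B$ of $p(\asrho)$ using $d$ queries to the block-encoding of $\asrho$. Finally, define the output as the pure state $B(\ket{\asrho}\otimes\ket{0}^{\otimes\bigO{1}})$, viewed as a purification by combining the new QSVT ancillae with the original subnormalization register into a joint ``project-on-$\ket{0}$'' register.

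For the error analysis, write $\asrho=\sum_i \mu_i \ketbra{u_i}{u_i}$ and let $\Pi_{\asrho,\geq q/2}$ be the spectral projector of $\asrho$ onto eigenvalues at least $q/2$. A direct eigenbasis computation shows that applying the exact $f(\asrho)$ to $\ket{\asrho}$ would produce a purification of $\frac{q}{4}\Pi_{\asrho,\geq q/2}$. Since $\srho$ has eigenvalues in $[q,1]$ on $\mathrm{range}(\Pi)$ and equal to $0$ on the orthogonal complement (giving a gap of $q/2$ on either side of the threshold $q/2$), while $\nrm{\srho-\asrho}_1\leq 4\nu$, the Davis--Kahan theorem yields $\nrm{\Pi_{\asrho,\geq q/2}-\Pi}_1\leq \bigO{\nu/q}$, contributing $\bigO{\nu}$ to the trace distance after the $q/4$ scaling. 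The polynomial approximation error adds at most $\bigO{\nu/q}$ to the trace distance of the reduced state, summing to the claimed $\bOt{\nu/q}$.

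The main obstacle is constructing the polynomial $p$: it must simultaneously kill the eigencomponents of $\asrho$ in $[0,4\nu]$ (arising from perturbations off $\mathrm{range}(\Pi)$) and approximate the unbounded function $\sqrt{q/4}/\sqrt{x}$ on $[q-4\nu,1]$ to high precision. The threshold part forces the $(1/q)$-scaling of the degree, rather than the $\bigO{1/\sqrt{q}}$ one would get from approximating $1/\sqrt{x}$ alone. The query and gate bounds then follow by counting: $d=\bOt{1/q}$ applications of the block-encoding of $\asrho$ inside QSVT, each using $\bigO{1}$ queries to $U_{\asrho}$ and $\bigO{a}$ two-qubit gates.
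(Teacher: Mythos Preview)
Your approach is sound and reaches the stated conclusion, but it is genuinely different from the paper's proof, and there is one quantitative slip in your error budget that you should fix.

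\textbf{The slip.} With polynomial accuracy $|p(x)-f(x)|\le O(\nu/q)$ on $[q-4\nu,1]$, each of the (up to $1/q$, since $q\Pi\preceq\srho$ and $\tr{\srho}\le 1$) large eigenvalues $\mu_i$ of $\asrho$ contributes an error $|\mu_i p(\mu_i)^2-q/4|=O(\nu/q)$, so the total trace-norm contribution from that block is $O(\nu/q^2)$, not $O(\nu/q)$. The fix is immediate: demand polynomial accuracy $O(\nu)$ rather than $O(\nu/q)$; this only changes the degree by a $\log$ factor and leaves the query count at $\BOT{\nu}{1/q}$. The small-eigenvalue block and the Davis--Kahan step then go through as you describe (the trace-norm $\sin\Theta$ theorem indeed gives $\nrm{\Pi_{\asrho,\ge q/2}-\Pi}_1\le O(\nu/q)$ since the spectral gap of $\srho$ around $q/2$ is $\ge q/2$).

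\textbf{How the paper does it.} The paper never constructs a threshold polynomial and never invokes Davis--Kahan. Instead it first argues \emph{as if} it had access to the exact $\srho$: since $\srho$ is supported on $\mathrm{range}(\Pi)$ with all nonzero eigenvalues in $[q,1]$, Corollary~\ref{cor:NegativePowerCost} applied to $\srho$ yields a block-encoding of an operator $V$ with $\nrm{(V-\tfrac{\sqrt{q}}{2}\srho^{-1/2})\Pi}\le\nu$, and a H\"older-inequality step gives $\nrm{V\srho V^\dagger-\tfrac{q}{4}\Pi}_1\le 2\nu$. Only then does the paper swap $U_{\srho}$ for $U_{\asrho}$: the whole construction accesses $\srho$ solely through controlled Hamiltonian simulation with maximal time $|t|\le\BOT{\nu}{1/q}$, and since $\nrm{\srho-\asrho}_\infty\le\nrm{\srho-\asrho}_1\le 4\nu$, Theorem~\ref{thm:blockHamSim} bounds the resulting operator-norm error on the block-encoding by $\bOt{\nu/q}$, which propagates to the final trace distance. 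This ``analyze exact, then perturb the Hamiltonian'' trick is cleaner: it avoids building a polynomial that must simultaneously kill $[0,4\nu]$ and approximate $1/\sqrt{x}$ on $[q-4\nu,1]$, and it replaces your spectral-projector perturbation argument by the immediate Lipschitz estimate for time-$t$ Hamiltonian evolution. Your QSVT-with-threshold route is a valid alternative and is arguably more self-contained once QSVT is on the table, but it costs you the Davis--Kahan ingredient and the rank-$1/q$ bookkeeping noted above.
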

\begin{proof}
	First let us assume that we have access to $U_{\srho}$ instead of $U_{\asrho}$. Then, using Corollary~\ref{cor:NegativePowerCost} from Appendix~\ref{apx:HamSim}, we could implement a unitary $W$ which is a $(1,\BOT{\nu,q}{a},0)$-block-encoding of $V$ such that $\nrm{\left(V-\frac{\sqrt{q}}{2\sqrt{\srho}}\right)\Pi}\leq\nu$. Note that since $q\Pi\preceq \srho$ and $(I-\Pi)\srho(I-\Pi)=0$ we know that $\srho$ is supported on the image of $\Pi$, in particular $\Pi \srho \Pi=\srho$. Using Hölder's inequality it is easy to see that
	\[
	\nrm{V\srho V^{\dagger} - \frac{\sqrt{q}}{2\sqrt{\srho}}\srho \frac{\sqrt{q}}{2\sqrt{\srho}}}_{1} \leq 2\nu,
	\]
	which is equivalent to saying
	\[
	\nrm{V\srho V^{\dagger} - \frac{q}{4}\Pi}_{1} \leq 2\nu.
	\]
	Corollary~\ref{cor:NegativePowerCost} shows that $W$ can be implemented with a single use of a controlled Hamiltonian simulation unitary $e^{it\srho}$, with maximal simulation time $|t|\leq \BOT{\nu}{1/q}$.
	This translates to $\BOT{\nu}{1/q}$ uses of $U_{\srho}$ and its inverse as shown by Lemma~\ref{lemma:controlledHamsin}. 
	
	Considering that $\nrm{\srho - \asrho}_\infty\leq \nrm{\srho - \asrho}_1\leq 4\nu$, if in the implementation of the controlled Hamiltonian simulation we replace $U_{\srho}$ by $U_{\asrho}$, then we make no bigger error than $\bOt{\nu/q}$, as shown by Theorem~\ref{thm:blockHamSim}.
	The resulting new unitary $\tilde{W}$ will be therefore an $(1,\BOT{\nu,q}{a},\bOt{\nu/q})$-block-encoding of $V$. Therefore we can prepare a purification of the a subnormalized density operator $\asrho_{\mathrm{unif.}}$ such that 
	\[
	  \nrm{\frac{q}{4}\Pi - \asrho_{\mathrm{unif.}}}_1\leq \bOt{\nu/q}.
	\]
\end{proof}	
\noindent In the proof of the next lemma we will mostly be looking at \emph{Eigenvalue threshold projectors}.
\begin{definition}[Eigenvalue threshold projector]
Suppose $H$ is a Hermitian matrix and $q \in \mathbb{R}$. Let $\Pi_{H > q}$ denote the orthogonal projector corresponding to the subspace spanned by the eigenvectors of $H$ that have eigenvalue larger than $q$. We define $\Pi_{H \leq q}=I-\Pi_{H > q}$ in a similar way.
\end{definition}

We are now ready to prove the main lemma of this section, an improved Gibbs-sampler in the quantum state model. In the proof we will use some specific conventions and notation to simplify the form of the proof. We say that two subnormalized density operator are $\delta$-close when their trace distance is at most $\delta$, and that two unitaries are $\delta$-close when their operator norm distance is at most $\delta$. We will always work with purifications of subnormalized density operators, so when for example we say that we apply an operator to a subnormalized density operator, we mean that we apply the operator to its purification. 

\begin{lemma}[Gibbs-sampling of the difference of density operators]\label{lemma:preGibbs}
	Suppose that we have unitaries $U_{\srho^{\pm}}$ preparing a purification of the subnormalized density operators $\srho^\pm$ using $a=\bigO{\mathrm{poly}\log(n)}$ qubits.
	Let\footnote{In case $n$ is not a power of $2$ we still represent $H$ on $\left\lceil\log_2(n)\right\rceil$ qubits, but think about it as an $\mathbb{C}^{n\times n}$ operator.} $H:=(\srho^+-\srho^-)/2$, $\beta \in[1,n/2]$ and $\delta,\eta\in(0,1]$. Assume we are given a point $q\in [2/n,1/\beta]$ that is $\eta$-far from the spectrum of $H$, i.e.
\[
 |\lambda-q|  \geq \eta \text{ for all }\lambda \in \mathrm{Spec}(H).
\]
	Then we can prepare a purification of an approximate Gibbs-state $\tilde{\rho}_{Gibbs}$ such that 
\[
\nrm{\frac{e^{\beta H}}{\tr{e^{\beta H}}}-\tilde{\rho}_{Gibbs}}_1\leq \delta
\]
with\footnote{We think that it should be possible to improve the complexity to $\BOT{\frac{n}{\delta}}{q^{-1}/\eta}$ using recent results about variable-time amplitude amplification and estimation~\cite{CGJ:PowerOfBlockPowers18}.} $\BOT{\frac{n}{\delta}}{q^{-1.5}/\eta}$ queries to controlled-$U_{\srho^{\pm}}$ and their inverses.\footnote{If $U_{\srho^{\pm}}$ are not controlled, then it is easy to construct a controlled version using $\bigO{a}$ extra ancilla qubits.}
\end{lemma}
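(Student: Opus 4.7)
The plan is to use Lemma~\ref{lem:projectors} to construct a ``warm start'' for Gibbs sampling that is concentrated on the high-eigenvalue subspace $\Pi_{H>q}$, and then to apply block-Hamiltonian simulation to convert it into the Gibbs state. First, Lemma~\ref{lem:densitySimulation} combined with Lemma~\ref{lem:linCombBlocks} turns $U_{\srho^{\pm}}$ into a $(1,\bigO{a},0)$-block-encoding of $H=(\srho^+-\srho^-)/2$ at the cost of $\bigO{1}$ queries. A key structural observation is that $H\Pi_{H>q}\preceq \srho^+/2$, so $H\Pi_{H>q}$ is a subnormalized density operator (trace at most $1/2$), supported on the image of $\Pi_{H>q}$, and satisfying $q\Pi_{H>q}\preceq H\Pi_{H>q}$. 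Taking traces also yields the useful rank bound $\mathrm{rank}(\Pi_{H>q})\leq 1/(2q)$.

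Next I would apply a smooth polynomial approximation of the threshold function $x\mapsto x\cdot\mathbf{1}_{x>q}$ to the block-encoded $H$, using the smooth-functions-of-Hamiltonians tools from Appendix~\ref{apx:HamSim}. Because $q$ is $\eta$-far from the spectrum of $H$, the polynomial only needs to be faithful outside the $\eta$-gap around $q$, so degree $\bOt{1/\eta}$ suffices, yielding a block-encoding of an operator close to $H\Pi_{H>q}$ with $\bOt{1/\eta}$ queries. Converting this block-encoding into a purification and feeding it into Lemma~\ref{lem:projectors} with $\Pi=\Pi_{H>q}$ produces, via $\bOt{1/q}$ further calls, a purification of a subnormalized density operator $\asrho_{\mathrm{unif.}}$ close to $(q/4)\,\Pi_{H>q}$, using $\bOt{1/(q\eta)}$ queries to $U_{\srho^{\pm}}$ in total.

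This warm start $\asrho_{\mathrm{unif.}}$ is then pushed to the Gibbs state by applying a block-encoding of $e^{\beta H/2}$, obtained from Theorem~\ref{thm:blockHamSim} together with an LCU (or Chebyshev-series) implementation of the exponential, at cost $\bOt{\beta}=\bOt{1/q}$, and then performing a single round of amplitude amplification with cost $\bOt{1/\sqrt{q}}$: the warm start has weight $\Omega(q)$ on the subspace that carries the Gibbs weight, which makes the post-$e^{\beta H/2}$ success probability $\Omega(q)$. Multiplying the three query costs produces the claimed $\bOt{q^{-3/2}/\eta}$ bound.

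The main obstacle will be arguing that the amplified state is close to the \emph{full} Gibbs state $e^{\beta H}/\tr{e^{\beta H}}$ and not merely its restriction to $\Pi_{H>q}$. The key auxiliary fact is that at least $n-1/q$ eigenvalues of $H$ lie in $[-q,q]$ (by the symmetric rank bound $\mathrm{rank}(\Pi_{H<-q})\leq 1/(2q)$ coming from $-H\Pi_{H<-q}\preceq \srho^-/2$), and on that subspace $e^{\beta H}$ is bounded by $e^{\beta q}\leq e$ since $\beta q\leq 1$. One can therefore complement the warm-start preparation with a ``bulk'' preparation that starts from the maximally mixed state and applies a bounded-norm polynomial approximation of $e^{\beta H/2}\Pi_{|H|\leq q}$ (operator norm at most $e^{1/2}$), and then mix the two preparations via amplitude estimation of their relative masses in $Z=\tr{e^{\beta H}}$, obtaining the full Gibbs state within the stated error budget. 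Consistently tracking the precision parameters (block-encoding errors, polynomial degrees, and the final trace distance) through the successive amplifications and the threshold approximation is the principal bookkeeping challenge.
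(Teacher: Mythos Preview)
Your outline has the right high-level structure (warm start on $\Pi_{H>q}$, bulk preparation on the complement, combine and amplify), but there are two genuine gaps that break the complexity claim.

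\textbf{The normalization of $e^{\beta H/2}$.} You propose to ``apply a block-encoding of $e^{\beta H/2}$ at cost $\bOt{\beta}$'' and then amplitude-amplify with $\bOt{1/\sqrt{q}}$ iterations, asserting that the post-exponential success probability is $\Omega(q)$. This is the critical error. A block-encoding of $e^{\beta H/2}$ must carry normalization $K\geq\nrm{e^{\beta H/2}}=e^{\beta\lambda_{\max}/2}$; since $\lambda_{\max}$ is unknown you would have to take $K=e^{\beta/2}$. Applying this to $(q/4)\Pi_{H>q}$ yields a subnormalized state with trace $(q/4K^2)\tr{\Pi_{H>q}e^{\beta H}}$, which is exponentially small in $\beta$ whenever $\lambda_{\max}\ll 1$. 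The paper handles this by first \emph{finding} $\lambda_{\max}$ to precision $1/(2\beta)$ via binary search (phase estimation on $e^{iH}$ applied to the projected state, combined with amplitude estimation, at cost $\bOt{q^{-1.5}+q^{-0.5}/\eta}$), and only then applying the \emph{bounded} map $e^{\beta(H-\tilde\lambda_{\max}I)/2}$, which has norm $\leq 1$ and gives the claimed $\Omega(q)$ success probability by \eqref{eq:minEigenValueGuarantee}. This $\lambda_{\max}$-finding step is also what determines the mixing coefficient $\xi$ used to combine the two subspace preparations; your proposed ``amplitude estimation of their relative masses'' cannot replace it without incurring the same exponential blow-up.

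\textbf{From block-encoding to purification.} Lemma~\ref{lem:projectors} requires a \emph{purification} of the subnormalized density operator, not a block-encoding. You say ``converting this block-encoding [of $H\Pi_{H>q}$] into a purification'', but the natural route (apply $\sqrt{H\Pi_{H>q}}$ to half of a maximally entangled state) produces a purification of $H\Pi_{H>q}/n$, so the effective parameter in Lemma~\ref{lem:projectors} becomes $q/n$ and the cost picks up a factor of $n$. The paper sidesteps this entirely: it starts from $\ssigma^+:=(\srho^++\srho^-)/2$, which is \emph{directly} preparable as a purification with $\bigO{1}$ queries, and uses phase estimation (exploiting the $\eta$-gap) to project onto $\Pi_{H>q}$, obtaining a purification of $\ssigma^+_{H>q}:=\Pi_{H>q}\ssigma^+\Pi_{H>q}$. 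The key inequality is then $q\Pi_{H>q}\preceq\Pi_{H>q}H\Pi_{H>q}\preceq \ssigma^+_{H>q}$, after which Lemma~\ref{lem:projectors} applies with the intended parameter $q$. (Incidentally, your inequality $H\Pi_{H>q}\preceq\srho^+/2$ is not obviously true as operators on the whole space, since $\srho^+$ need not commute with $\Pi_{H>q}$; the trace bound you draw from it is correct, but via $\tr{\Pi_{H>q}H\Pi_{H>q}}\leq\tr{\Pi_{H>q}\srho^+\Pi_{H>q}}/2$.)
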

\begin{proof}
The main idea of the proof is that we prepare (slightly subnormalized) density operators corresponding to $\Pi_{H>q}$ and $\Pi_{H\leq q}$, i.e., uniform distributions over a partition of eigenspaces of $H$. Utilising these states we prepare subnormalized Gibbs states on the corresponding subspaces, then merge and amplify the states in order to obtain the final Gibbs state. This is beneficial since on the subspace corresponding to $\Pi_{H\leq q}$ the map $e^{\beta H}$ is nicely bounded. However, on the image of $\Pi_{H>q}$ the map $e^{\beta H}$ might behave wildly, and in the extreme case this map might magnify the amplitude of some eigenvectors tremendously. This implies that we need to ``find'' such magnified elements, as the Gibbs state is concentrated around them. Fortunately the rank of $\Pi_{H>q}$ is at most $1/q$, which makes it easier to ``find'' the extreme vectors then if we would apply the same procedure to the uniform distribution $I/n$.

We start with implementing the unitary $\tilde{V}_{H>q}$, that labels eigenstates of $H$ corresponding to which component of $\mathbb{R}\setminus\{q\}$ their eigenvalue lies in. More precisely, we set $\delta':= \tilde{\Theta}(\delta q^2)$, and we want to implement a unitary $\tilde{V}_{H>q}$ which is a 
$(1,\BOT{\frac{n}{\eta\delta'}}{1},\delta')$-block-encoding 
of $(\Pi_{H>q}\otimes I + \Pi_{H\leq q}\otimes X)$. 
Due to the assumption that $q$ lies at least $\eta$-far from $\mathrm{Spec}(H)$ we can implement these unitaries using $\Theta(\eta)$ precise phase estimation of the operator $e^{iH}$, repeated $\bigO{\log(1/\delta')}$ times. This can be implemented with $\BOT{\delta'}{1/\eta}$ queries as show by Lemmas~\ref{lem:densitySimulation} and \ref{lem:linCombBlocks}.  
\\\\
\noindent Now let us consider Gibbs-sampling on the image of $\Pi_{H > q}$. Let $\ssigma^+:= (\srho^+ + \srho^-)/2$ be a subnormalized density operator, which we can prepare in a purified manner using $\bigO{1}$ queries. 
Also let 
\[
\ssigma^+_{H>q}:=  \Pi_{H>q}\ssigma^+\Pi_{H>q},
\]
and observe that we can prepare $\assigma^+_{H>q}$, such that 
\begin{equation}\label{eq:sigmaApx}
   \nrm{\assigma^+_{H>q}-\ssigma^+_{H>q}}_1\leq 2\delta'(\leq q/4),
\end{equation}
by applying $\tilde{V}_{H>q}$ to $\ssigma^+$ (the second inequality can be assumed w.l.o.g. since $\delta' = \tilde{\Theta}(\delta q^2)$).

Observe that
\begin{align}
	q \Pi_{H>q} &= \Pi_{H>q}(q \Pi_{H>q})\Pi_{H>q}\nonumber\\
	&\preceq \Pi_{H>q}(H)\Pi_{H>q}\nonumber\\
	&\preceq \Pi_{H>q}(H+2\srho^-)\Pi_{H>q}\nonumber\\
	&=\ssigma^+_{H>q}.\label{eq:claim1}
\end{align}
This allows us to apply Lemma~\ref{lem:projectors} to $\srho:=\ssigma^+_{H>q}$, $\asrho:=\assigma^+_{H>q}$ and $\Pi:=\Pi_{H>q}$ with $\nu:= \delta'$ so we get that we can prepare a state $\asrho_{\mathrm{unif.}}$ such that 
\[
\nrm{\frac{q}{4}\Pi_{H>q} - \asrho_{\mathrm{unif.}}}_1\leq \bOt{\delta'/q}
\]
 using $\BOT{\delta'}{q^{-1}/\eta}$ queries.
	
Now we can check if $\Pi_{H>q}=0$ or not as follows. If it is not $0$ then $\tr{\Pi_{H>q}}\geq 1$ and hence by \eqref{eq:sigmaApx}-\eqref{eq:claim1} we have $\tr{\assigma^+_{H>q}}\geq\tr{\ssigma^+_{H>q}} -q/4 \geq q-q/4=3q/4$. Since $\ssigma^+_{H>q}=\tr{\Pi_{H>q}}\ssigma^+_{H>q}\tr{\Pi_{H>q}}$, when $\Pi_{H>q}=0$ it similarly follows that $\tr{\assigma^+_{H>q}}\leq \tr{\ssigma^+_{H>q}}+ q/4=q/4$. Thus we can check whether $\Pi_{H>q}=0$ by checking whether $\tr{\assigma^+_{H>q}}\leq q/4$ or $\tr{\assigma^+_{H>q}}\geq 3q/4$. This can be done with success probability at least $1-\delta'$ by using amplitude estimation with $\BOT{\delta'}{q^{-0.5}}$ calls to the procedure preparing $\assigma^+_{H>q}$, costing $\BOT{\delta'}{q^{-0.5}/\eta}$ queries in total. For the final Gibbs-sampling we will consider the Gibbs state on the image of $\Pi_{H>q}$ and $\Pi_{H\leq q}$ separately. Therefore if $\Pi_{H>q}=0$ we only need to consider the Gibbs state on the image of $\Pi_{H\leq q}$, which we do later in this proof. For now we assume $\Pi_{H>q}\neq0$ and consider the Gibbs state on its image.

Now we use binary search in order to find $\lambda_{\max}$ the maximal eigenvalue of $H$, with precision $\beta^{-1}/2$ and success probability $1-\delta'$. By our assumption $\lambda_{\max}\in(q,1]$. We start each iteration of the binary search by performing phase estimation on $\assigma^+_{H>q}$ using the unitary $e^{iH}$ with precision $\beta^{-1}/4$ and success probability $1-q/4$. By \eqref{eq:sigmaApx}-\eqref{eq:claim1} we know that the eigenvector corresponding to $\lambda_{\max}$ is present with probability at least $3q/4$ in $\assigma^+_{H>q}$, and the other eigenvalues are present with a probability at most $1$ in total. Therefore the probability of obtaining a phase estimate $\tilde{\lambda}$ such that $\tilde{\lambda}\geq \lambda_{\max} - \beta^{-1}/4$ is at least $q/2$, whereas the total probability of obtaining a phase estimate $\lambda'$ such that $\lambda'\geq \lambda_{\max} + \beta^{-1}/4$ is at most $q/4$. Therefore we can perform each iteration of the binary search with precision $\beta^{-1}/2$ and success probability $1-\delta'/\log(q^{-1})$ by applying amplitude estimation to the probability of getting an eigenvalue estimate from the current search interval, using $\BOT{\beta}{q^{-0.5}}$ repetitions of the initial state preparation and phase estimation procedure. Thus each iteration has query complexity $\BOT{\beta}{q^{-0.5}(\beta+1/\eta)}=\bOt{q^{-1.5}+q^{-0.5}/\eta}$, giving the same total query complexity bound $\bOt{q^{-1.5}+q^{-0.5}/\eta}$ for the complete binary search.


After finding the minimum up to precision $1/(2\beta)$ we can compute a number $\tilde{\lambda}_{\max}$ such that 
\begin{equation}\label{eq:minEigenValueGuarantee}
	\tilde{\lambda}_{\max}I\succeq H \text{  but  }(\tilde{\lambda}_{\max}-1/\beta)I\nsucc H.
\end{equation}
Using Lemmas~\ref{lem:densitySimulation}-\ref{lem:linCombBlocks}, and the results of \cite[Appendix B]{AGGW:SDP} we can implement an $(1/2,\BOT{\frac{\beta}{\delta'}}{1},\delta')$-block encoding of $e^{\beta\frac{H-\tilde{\lambda}_{\max}I}{2}}$ using $\BOT{\delta'}{\beta}$ queries. Applying this map to $\asrho_{\mathrm{unif.}}$ gives an $\bOt{\delta'/q}$-approximation of the subnormalized density operator $\frac{qe^{-\beta\tilde{\lambda}_{\max}}}{16}\Pi_{H>q} e^{\beta H}$. Observe that since we assumed $\tr{\Pi_{H>q}}\geq 1$, by \eqref{eq:minEigenValueGuarantee} we get that
\begin{equation}\label{eq:subNormHighSpace}
\tr{\frac{qe^{-\beta\tilde{\lambda}_{\max}}}{16}\Pi_{H>q} e^{\beta H}}\geq q/(16e).
\end{equation}
 Thus we can prepare a subnormalized $\bOt{\delta'/q}$-approximation of the Gibbs state on the image of $\Pi_{H>q}$ having trace at least $q / (16e)$.
\\\\
\noindent Now we consider the Gibbs state on $\Pi_{H\leq q}$. 
First observe that we can prepare the density operator $I/n$ by preparing a the maximal entangled state $\frac{1}{\sqrt{n}}\sum_{j=1}^n \ket{j}\ket{j}$ using $\bOt{\log(n)}$ two-qubit quantum gates.
With a single use of the unitary $\tilde{V}_{H>q}$ we can prepare an $\bigO{\delta'}$ approximation of $\frac{1}{n}\Pi_{H\leq q}$ by simply marking the appropriate eigenstates of $I/n$, which takes $\bOt{1/\eta}$ queries. Then we apply the map $e^{\beta\frac{H}{2}}/(2\sqrt{e})$ on the subspace $\Pi_{H\leq q}$ with $\delta'$ accuracy. Since $q\leq 1/\beta$ all eigenvalues of $\beta H/2$ that we are concerned with are smaller in absolute value than $1/2$.  As shown in \cite[Appendix B]{AGGW:SDP} this implies that implementing the map $e^{\beta\frac{H}{2}}/(2\sqrt{e})$ with $\delta'$ precision requires Hamiltonian simulation of $\beta H$ for constant time, which can be done using $\BOT{\delta'}{\beta}$ queries. Therefore we can prepare a $\bigO{\delta'}$ approximation of the state $\frac{1}{4 e n}\Pi_{H\leq q}e^{\beta H}$ with $\BOT{\delta'}{\beta +1/\eta}\leq \BOT{\delta'}{q^{-1}/\eta}$ queries.

Like before, we would like to lower bound the trace of the created subnormalized density operator. First note that $\nrm{H}_1\leq 1$, and so the number\footnote{We count eigenvalues with algebraic multiplicity.} of eigenvalues that are larger than $q$ in absolute value is at most $1/q \leq n/2$, thus $\tr{\Pi_{|H|\leq q}}\geq n/2$. Also note that
$\Pi_{H\leq q}e^{\beta H}\succeq \Pi_{|H|\leq q}e^{\beta H}$, and for an eigenvalue $\lambda$ such that $|\lambda|\leq q$ we have $e^{\beta\lambda} \geq e^{-\beta q} \geq 1/e$. It follows that
\begin{equation}\label{eq:subNormLowSpace}
\tr{\frac{1}{4 e n}\Pi_{H\leq q}e^{\beta H}} \geq\tr{\frac{1}{4 e n}\Pi_{|H|\leq q}e^{\beta H}}\geq\tr{\frac{1}{4 e^2 n}\Pi_{|H|\leq q}} \geq \frac{1}{8e^2}.
\end{equation}
\\\\
\noindent As we can now Gibbs-sample on both parts of the spectrum, we are ready to combine the two. 
Let\footnote{In the special case when $\Pi_{H>q}=0$ we simply set $\xi:=\frac{1}{4 e n}$.} $\xi:=\min\left(\frac{qe^{-\beta \tilde{\lambda}_{\max}}}{16},\frac{1}{4 e n}\right)$, then we can prepare a purification of $\asrho_{G}$ which is an $\bOt{\delta'/q}$-approximation of
\[
\srho_{G}:=\frac{\xi}{2}e^{\beta H}= 
	\underset{\leq 1/2}{\underbrace{\left(\frac{\xi}{2}\frac{16}{qe^{-\beta \tilde{\lambda}_{\max}}}\right)}} \frac{qe^{-\beta \tilde{\lambda}_{\max}}}{16}\Pi_{H>q} e^{\beta H} 
	+ \underset{\leq 1/2}{\underbrace{\left(\frac{\xi}{2}\frac{4 e n}{1}\right)}}\frac{1}{4 e n}\Pi_{H\leq q}e^{\beta H},
        \]
	by mixing the two subnormalized Gibbs states on the corresponding subspaces with appropriate $(\leq 1/2)$ coefficients. This subnormalized Gibbs state $\asrho_{G}$, can be prepared at the same cost as the two partial Gibbs-state preparation, that is $\bOt{q^{-1}/\eta}$ queries\footnote{Note that we do the maximum finding to find $\tilde{\lambda}_{\max}$ only once, and we do not count its complexity in the state preparation.} as we have already shown.
	
	Note that $\tr{\srho_{G}}=\Omega(q)$ as shown by \eqref{eq:subNormHighSpace}-\eqref{eq:subNormLowSpace}, therefore we can use $\bOt{\sqrt{1/q}}$ amplitude amplification steps to prepare an $\bOt{\delta'/q^2}$ approximation of $\frac{\srho_{G}}{\tr{\srho_{G}}}$, which is clearly $\rho_{Gibbs}$. In total this yields an $\bOt{q^{-1.5}/\eta}$ query algorithm. Since $\delta' =  \tilde{\Theta}(\delta q^2)$ this concludes the proof.
\end{proof}

The following corollary expands our new Gibbs-sampling result, it gives an exponential improvement in terms of the precision over the previous approach for this input model by Brand\~ao et al.~\cite{brandao:expSDP2}. The dependence on the success probability is worse, but in our application to SDP-solving we only require success for a constant fraction of random seeds. Furthermore, there is no longer a dependence on the rank of the input matrices.

\begin{theorem}\label{thm:QGibbs}
	Suppose we have query access to the $a=\mathrm{poly}\log(n)$-qubit unitaries $U_{\srho^{\pm}}$ preparing a purification of the (subnormalized) density operators $\srho^\pm\in\mathbb{C}^{n\times n}$, such that $H=(\srho^+-\srho^-)/2$, $\beta\geq 1$, $\theta,\delta\in(0,1]$. 
	Then there is a quantum algorithm, that using\footnote{Similarly to Lemma~\ref{lemma:preGibbs} we think that it should be possible to improve the complexity to $\BOT{\theta}{\beta^{3}/\delta}$ using recent results about variable time amplitude amplification and estimation~\cite{CGJ:PowerOfBlockPowers18} (maybe at the expense of worse dependence on the error).} $\BOT{\theta}{\beta^{3.5}/\delta}$ queries to controlled-$U_{\srho^{\pm}}$ or their inverses, prepares a purification of a quantum state $\rho_{S}$ such that 
	$$ \nrm{\rho_{S}-\frac{e^{-\beta H}}{\tr{e^{-\beta H}}}}_1\leq \theta, $$
	where $S$ is an $\bigO{\log(\beta/\delta)}$-bit random seed, and the above holds for at least $(1-\delta)$-fraction of the seeds.
\end{theorem}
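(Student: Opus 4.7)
The plan is to reduce to Lemma~\ref{lemma:preGibbs}, using the $\bigO{\log(\beta/\delta)}$-bit random seed $S$ to select a threshold $q$ that is $\eta$-far from the spectrum of $H$. To align sign conventions, we set $H':=-H=(\srho^-\!-\srho^+)/2$, so that $e^{-\beta H}/\tr{e^{-\beta H}}=e^{\beta H'}/\tr{e^{\beta H'}}$, and a purification of $\srho^\mp$ is prepared by $U_{\srho^\pm}$; this way Lemma~\ref{lemma:preGibbs} applies to $H'$ as soon as a suitable $q$ is chosen.

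The key observation is a spectral-density bound: since $\nrm{H'}_1=\tfrac{1}{2}\nrm{\srho^-\!-\srho^+}_1\leq 1$, the operator $H'$ has at most $1/p$ eigenvalues of absolute value $\geq p$ (counted with multiplicity). Taking $p=1/(2\beta)$, the interval $I:=[1/(2\beta),1/\beta]$ contains at most $2\beta$ eigenvalues of $H'$ and has length $1/(2\beta)$. We set $\eta:=\delta/(16\beta^2)$ and let $S$ encode a uniformly random point $q_S$ on an $\eta$-spaced grid inside $I$, which can be stored in $\bigO{\log(\beta/\delta)}$ bits. The union of the closed $\eta$-neighborhoods of the eigenvalues of $H'$ lying in $I$ has total measure at most $4\beta\eta=\delta/(4\beta)$, so at least a $(1-\delta)$-fraction of seeds produce a $q_S$ that is $\eta$-far from $\mathrm{Spec}(H')$. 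Under the mild assumption $\beta\leq n/4$ we also have $q_S\in[2/n,1/\beta]$ as required by Lemma~\ref{lemma:preGibbs}; the remaining regime $\beta>n/4$ can be reduced to $\beta=n/4$ by observing that for such $\beta$ the target Gibbs state is already $\theta$-close in trace distance to the projector onto the ground space of $H$, which is prepared by the same procedure after capping $\beta$.

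For every good seed, Lemma~\ref{lemma:preGibbs} (applied to $H'$ with trace-distance parameter $\theta$, threshold $q=q_S$, and gap $\eta$) produces a purification of a $\theta$-approximation of $e^{\beta H'}/\tr{e^{\beta H'}}=e^{-\beta H}/\tr{e^{-\beta H}}$ at query cost
\[
\BOT{n/\theta}{q_S^{-1.5}/\eta}=\BOT{\theta}{(2\beta)^{1.5}\cdot 16\beta^2/\delta}=\BOT{\theta}{\beta^{3.5}/\delta}
\]
to controlled-$U_{\srho^{\pm}}$ and their inverses, matching the stated complexity.

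The hard work is essentially packed into Lemma~\ref{lemma:preGibbs}; the content here is just the probabilistic choice of $q$ combined with the spectral counting bound. The $\beta^{3.5}$ exponent arises from the balance $q^{-1.5}\cdot\eta^{-1}\sim\beta^{1.5}\cdot\beta^2$ forced by the counting bound, and a sharper exponent (e.g.\ $\beta^{3}$) would likely require the variable-time amplitude amplification/estimation strategy flagged in the lemma's footnote.
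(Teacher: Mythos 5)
Your main branch is exactly the paper's argument: pick $q$ uniformly (from a discretized grid) in $[1/(2\beta),1/\beta]$, use $\nrm{H}_1\leq 1$ to bound the number of eigenvalues in that window by $2\beta$, conclude that a random point is $\eta=\Theta(\delta/\beta^2)$-far from the spectrum for a $(1-\delta)$-fraction of seeds, and invoke Lemma~\ref{lemma:preGibbs} at cost $\bOt{q^{-1.5}/\eta}=\bOt{\beta^{3.5}/\delta}$. Your explicit sign flip $H'=-H$ (swapping the roles of $U_{\srho^+}$ and $U_{\srho^-}$) is correct and in fact more careful than the paper's wording.

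However, your treatment of the large-$\beta$ regime (needed because Lemma~\ref{lemma:preGibbs} requires $q\geq 2/n$, hence roughly $\beta\lesssim n/4$) has a genuine gap. You claim that for $\beta>n/4$ the Gibbs state is already $\theta$-close to the normalized ground-space projector and can be obtained "by capping $\beta$". This is false: the spectral gaps of $H=(\srho^+-\srho^-)/2$ can be of order $1/n$ or smaller, so neither closeness holds. Concretely, take $H$ with eigenvalues $+1/n$ and $-1/n$, each of multiplicity $n/2$ (so $\nrm{H}_1=1$). At $\beta=n$ the Gibbs state $e^{-\beta H}/\tr{e^{-\beta H}}$ puts weight $e/(e+e^{-1})\approx 0.88$ on the $-1/n$ eigenspace, which is a constant trace distance both from the normalized ground-space projector (weight $1$) and from the capped Gibbs state at $\beta=n/4$ (weight $\approx 0.62$). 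So this reduction does not produce a $\theta$-approximation. The paper avoids this by switching algorithms rather than modifying the target: when $\beta\geq n/2$ it simply runs the generic Gibbs sampler of Theorem~\ref{thm:simpGibbs} (i.e., Corollary~\ref{cor:GibbsNormal} together with the block-encoding of the states via Lemma~\ref{lem:densitySimulation}), whose cost $\bOt{\beta\sqrt{n}}$ is $\bOt{\beta^{1.5}}$ in that regime since $\sqrt{n}=\bigO{\sqrt{\beta}}$, comfortably within the claimed $\bOt{\beta^{3.5}/\delta}$. Replacing your capping argument with this case distinction repairs the proof; the rest of your write-up stands.
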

\begin{proof}
	If $\beta\geq n/2$, then we simply use the Gibbs-sampler from Theorem~\ref{thm:simpGibbs}.
	Otherwise, using the random seed $S$ we generate a uniform random number $q_S$ from the interval $[1/(2\beta),1/\beta]$. Note that since $\tr{|H|}\leq 1$ we have that $\left|\mathrm{Spec}(|H|)\cap [1/(2\beta),1/\beta]\right|\leq 2\beta$. Also the length of the interval is $1/(2\beta)$ therefore a random point in the interval falls $\delta/(8\beta^2)$-close to $\mathrm{Spec}(|H|)$ with probability at most $\delta$. Therefore the random seed can be used in such a way that $q_{S}$ will be $\eta=\delta/(8\beta^2)$ far from any point of $\mathrm{Spec}(|H|)$ with probability at least $1-\delta$. If this is the case the procedure of Lemma~\ref{lemma:preGibbs} prepares the sought Gibbs state with the stated complexity.
\end{proof}

\begin{corollary}\label{cor:GibbsExtra}
	Having access to the data structure of Lemma~\ref{lem:data} storing the vectors $\nu^{\pm}\in\mathbb{R}^{m+1}$ such that $\nu_j^{\pm} = y_j\mu^{\pm}_j$ for all $j\in 0\ldots m$, we have that $T_{\mathrm{Gibbs}}(K,d,\theta)=\BOT{\theta}{(BK)^{3.5}}$ using the quantum state input model.
\end{corollary}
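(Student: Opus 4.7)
The plan is to reduce the problem to an application of Theorem~\ref{thm:QGibbs}. First observe that, since the identity contribution to $\sum_{j=0}^{m}y_j A_j$ only shifts the exponent by a scalar and therefore does not change the normalized Gibbs state, we may focus on preparing the Gibbs state of
\[
 H'':=\sum_{j=0}^{m}y_j\bigl(\mu_j^+\srho_j^+-\mu_j^-\srho_j^-\bigr)=\sum_{j=0}^{m}\nu_j^+\srho_j^+-\sum_{j=0}^{m}\nu_j^-\srho_j^-.
\]
Set $\beta:=2\max(1,BK)$ and define the subnormalized density operators
\[
 \srho^\pm:=\frac{1}{BK}\sum_{j=0}^{m}\nu_j^\pm\srho_j^\pm,
\]
which are indeed subnormalized since $\nu_j^\pm\geq 0$, $\tr{\srho_j^\pm}\leq 1$ and $\|\nu^\pm\|_1\leq B\|y\|_1\leq BK$. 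Then $H:=(\srho^+-\srho^-)/2$ satisfies $\beta H=H''$ (up to a harmless identity shift if we had to clip $\beta$ at $1$), and thus the Gibbs states of $-H''$ and $-\beta H$ coincide.

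The second step is to implement, for each sign, an $a=\polylog(n)$-qubit unitary $U_{\srho^\pm}$ preparing a purification of $\srho^\pm$. Using Lemma~\ref{lem:data} on the stored vector $\nu^\pm$ with subnormalization bound $BK$, we obtain a symmetric $(BK,\bOt{1},\theta')$-state-preparation pair $(P_\pm,P_\pm)$ costing $\bOt{1}$ QCRAM queries, where $\theta'$ is polynomially small in the final target parameters. Applying $P_\pm$ to $\ket{0}$ produces (up to $\theta'$ error) the state $\sum_{j}\sqrt{\nu_j^\pm/(BK)}\,\ket{j}$; then a single controlled query to $O_{\ket{\cdot}}$ attaches the purification $\ket{\psi_j^\pm}$ of $\srho_j^\pm$, yielding
\[
 \sum_{j=0}^{m}\sqrt{\nu_j^\pm/(BK)}\,\ket{j}\ket{\pm}\ket{\psi_j^\pm}.
\]
Regrouping the $j$, $\pm$, and last-register qubits of $\ket{\psi_j^\pm}$ into one ``third register'' and retaining the purification structure of $\ket{\psi_j^\pm}$ in the other two registers, tracing out the third register and projecting the second on $\ket{0}$ exactly returns $\srho^\pm$. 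Each $U_{\srho^\pm}$ thus uses $\bOt{1}$ queries to $O_{\ket{\cdot}}$, $O_\mu$ and the QCRAM.

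Finally we invoke Theorem~\ref{thm:QGibbs} with these unitaries $U_{\srho^\pm}$, inverse temperature $\beta=\bigO{BK}$, target precision $\theta$, and seed failure probability $\delta=1/5$; this is sufficient for the Gibbs-sampler definition since the output must only be $\theta$-close to the Gibbs state for a $4/5$-fraction of seeds. Theorem~\ref{thm:QGibbs} costs $\BOT{\theta}{\beta^{3.5}/\delta}=\BOT{\theta}{(BK)^{3.5}}$ queries to controlled-$U_{\srho^\pm}$, and each such query reduces to $\bOt{1}$ queries to the original oracles, giving the claimed bound $T_{\mathrm{Gibbs}}(K,d,\theta)=\BOT{\theta}{(BK)^{3.5}}$.

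The only genuinely delicate point is bookkeeping of error: we must ensure that the state-preparation precision $\theta'$ in Lemma~\ref{lem:data}, the resulting trace-distance error in the prepared $\widetilde{\srho}^\pm$, and the error tolerated by Theorem~\ref{thm:QGibbs} all propagate additively to the final $\theta$. Since every one of these dependencies is polylogarithmic in $1/\theta$, $m$, $n$ and $BK$, choosing $\theta'$ inverse-polynomial in these parameters suffices and is absorbed into the $\BOT{\theta}{\,\cdot\,}$ notation; verifying that the propagation is indeed benign is the main routine check.
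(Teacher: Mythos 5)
Your proposal is correct and follows essentially the same route as the paper: drop the identity terms, define $\srho^{\pm}$ as the $\nu^{\pm}$-weighted combinations normalized by $BK$, set $\beta=\Theta(BK)$, prepare purifications of $\srho^{\pm}$ with $\bOt{1}$ queries by composing the Lemma~\ref{lem:data} state-preparation pair with $O_{\ket{\cdot}}$, and invoke Theorem~\ref{thm:QGibbs} with $\delta=1/5$. If anything, your explicit construction of $U_{\srho^{\pm}}$ and the factor-of-2 bookkeeping for $\beta$ are spelled out more carefully than in the paper; only the parenthetical remark that clipping $\beta$ at a constant is repaired by an ``identity shift'' is off (that corner case is handled by rescaling $\srho^{\pm}$ multiplicatively, not by a shift), but this does not affect the result.
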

\begin{proof}
To start, let us define
	$$H:=\sum_{j=0}^m \frac{y_jA_j}{KB}=\sum_{j=0}^m \frac{y_j}{KB}\left(\mu_j^+\srho_j^+-\mu_j^-\srho_j^-+\mu^I I\right)
	=\underset{\srho^+:=}{\underbrace{\sum_{j=0}^m \frac{y_j\mu_j^+}{KB}\srho_j^+}} 
	-\underset{\srho^-:=}{\underbrace{\sum_{j=0}^m \frac{y_j\mu_j^-}{KB}\srho_j^-}}
	+I\sum_{j=0}^m \frac{y_j\mu_j^I}{KB}.$$
Notice that we can ignore the identity terms since adding identities in the exponent does not change a Gibbs state,
also let
\[
\beta := KB \geq \sum_{j=0}^m y_j (\mu_j^++\mu^-_j).
\]
Using 
Lemma~\ref{lem:data} we can see that a $\bigO{\theta/\beta}$ approximation of  $\srho^{\pm}$ can be prepared with $\bigO{1}$ queries and using $\bigO{\mathrm{poly}\log(m\beta/\theta)}$ elementary operation.
By setting $\delta:=1/5$ and using Theorem~\ref{thm:QGibbs} the statement follows.
\end{proof}

This directly gives the following result for SDP-solving in the quantum state model
\begin{theorem}\label{thm:difGibbs}
	In the quantum state input model 
	\[
		T_{SDP}(\eps) = \bOt{  \left(\sqrt{m}+B^{2.5}\gamma^{3.5}\right)B\gamma^4},
	\]
	where $\gamma = Rr/\eps$. The same bound holds for a primal oracle with $\gamma = R/\eps$.
\end{theorem}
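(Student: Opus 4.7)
The plan is to obtain this result by plugging the quantum state model subroutine costs into the general meta-algorithm of Theorem~\ref{thm:mainSDP}. Specifically, Theorem~\ref{thm:mainSDP} tells us that it suffices to bound $T_{Tr}^{\sigma}(\gamma^{-1})$ and $T_{Gibbs}(\gamma,\gamma^2,\gamma^{-1})$ in the quantum state input model, and then the final SDP-solver complexity is
\[
T_{SDP}(\eps) = \BOT{n}{\bigl(\sqrt{m}\, T_{Tr}^{\sigma}(\gamma^{-1}) + T_{Gibbs}(\gamma,\gamma^2,\gamma^{-1})\bigr)\gamma^4 \sigma^2}.
\]

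First I would handle the trace estimator. Since the quantum state input model reduces to the quantum operator input model with $\alpha = B$ via Lemma~\ref{lem:densitySimulation} (composed with Lemma~\ref{lem:linCombBlocks} to absorb the $\mu^\pm_j$ coefficients into a state-preparation pair), Corollary~\ref{cor:specTraceCalc} immediately gives $T_{Tr}^{\sigma}(\gamma^{-1}) = \bOt{B}$ with the constant $\sigma = 6$. This handles the trace-estimation term at cost $\bOt{\sqrt{m}\,B}$.

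Next I would bound the Gibbs-sampler term using Corollary~\ref{cor:GibbsExtra}, which states $T_{Gibbs}(K,d,\theta) = \BOT{\theta}{(BK)^{3.5}}$ for the quantum state input model, provided the data structure of Lemma~\ref{lem:data} is used to store the weighted coefficients $\nu_j^{\pm} = y_j \mu_j^{\pm}$. Substituting $K = \gamma$, $d = \gamma^2$, $\theta = \gamma^{-1}$ yields $T_{Gibbs}(\gamma,\gamma^2,\gamma^{-1}) = \bOt{(B\gamma)^{3.5}}$. Also, maintaining this data structure under the $\bigO{1}$-sparse updates coming from the Arora-Kale framework / primal oracle framework costs only polylogarithmic overhead, fulfilling the updating assumption of Theorem~\ref{thm:mainSDP}.

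Finally I would assemble the pieces: with $\sigma = \bigO{1}$, the meta-bound becomes
\[
T_{SDP}(\eps) = \bOt{\bigl(\sqrt{m}\,B + (B\gamma)^{3.5}\bigr)\gamma^4} = \bOt{\bigl(\sqrt{m} + B^{2.5}\gamma^{3.5}\bigr)B\gamma^4},
\]
which is the stated bound for the full solver with $\gamma = Rr/\eps$. The primal-oracle variant is identical except that Theorem~\ref{thm:mainSDP} lets us take $\gamma = R/\eps$. No step should pose a serious obstacle, since all three ingredients are already proved; the only care needed is in verifying that the quantum state input is wrapped appropriately into the data structure so that the Gibbs-sampler of Corollary~\ref{cor:GibbsExtra} applies with $K = \gamma$, and that the trace-estimator reduction preserves the $\sigma = \bigO{1}$ needed to absorb $\sigma^2$ into polylog factors.
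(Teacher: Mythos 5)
Your proposal is correct and follows essentially the same route as the paper, whose proof is exactly this: plug Corollary~\ref{cor:specTraceCalc} (via the state-to-operator reduction with $\alpha=B$) and Corollary~\ref{cor:GibbsExtra} (with $K=\gamma$, $\theta=\gamma^{-1}$, $\sigma=\bigO{1}$) into the meta-bound of Theorem~\ref{thm:mainSDP}. The paper additionally cites Corollary~\ref{cor:GibbsNormal}, but for the stated bound your two ingredients plus the data-structure update observation suffice.
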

\begin{proof}
	This follow directly from Theorem~\ref{thm:mainSDP} using Corollaries \ref{cor:specTraceCalc},\ref{cor:GibbsNormal} and \ref{cor:GibbsExtra}.
\end{proof}
Also we can simply not apply Two-Phase minimum finding and use standard quantum minimum finding~\cite{durr&hoyer:minimum} instead to get a slightly better dependence on $\gamma$, cf.~\eqref{eq:SDPCostAmpEst}:
\begin{corollary}\label{cor:difGibbsNoOR}
	In the quantum state input model 
	\[
		T_{SDP}(\eps) = \bOt{\sqrt{m}B^{3.5}\gamma^{6.5}},
	\]
	where $\gamma = Rr/\eps$.. The same bound holds for a primal oracle with $\gamma = R/\eps$.
\end{corollary}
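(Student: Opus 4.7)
The plan is to derive this corollary in exactly the same way that Corollary~\ref{cor:simpGibbsNoOR} is derived from Theorem~\ref{thm:simpGibbs}: substitute the standard quantum minimum-finding complexity bound~\eqref{eq:SDPCostAmpEst} for the Two-Phase Minimum Finding bound of Theorem~\ref{thm:mainSDP}, while keeping the quantum-state-model-specific subroutine complexities from the proof of Theorem~\ref{thm:difGibbs}.

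First I would assemble the two relevant subroutine bounds. By Lemma~\ref{lem:densitySimulation}, the purified-state access in the quantum state model provides a $(1,\bigO{1},0)$-block-encoding of each subnormalized density operator $\srho^{\pm}_j$; combining these via Lemma~\ref{lem:linCombBlocks} with the normalization $\mu^+_j+\mu^-_j+|\mu^I_j|\leq B$ produces a $(B,\bigO{1},0)$-block-encoding of each $A_j$. Hence the quantum state model embeds into the quantum operator model with $\alpha=B$, so Corollary~\ref{cor:specTraceCalc} delivers $T_{Tr}^{\sigma}(\gamma^{-1}) = \bOt{B}$ with $\sigma=\bigO{1}$. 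For the Gibbs sampler I would invoke the direct construction of Corollary~\ref{cor:GibbsExtra}, giving $T_{Gibbs}(\gamma,\gamma^2,\gamma^{-1}) = \bOt{(B\gamma)^{3.5}}$; this is what avoids the undesirable $\sqrt{n}$ factor one would get from the operator-model Gibbs sampler.

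Plugging these two bounds into~\eqref{eq:SDPCostAmpEst}, the Gibbs-sampler term dominates the trace-estimator term (since $B\gamma\geq 1$), which collapses the expression to
\[
T_{SDP}(\eps) = \bOt{\sqrt{m}\cdot (B\gamma)^{3.5}\cdot\gamma^{3}\cdot \sigma} = \bOt{\sqrt{m}B^{3.5}\gamma^{6.5}},
\]
giving the claimed bound. For the SDP primal oracle the argument is identical, with the only change being that the Arora-Kale precision parameter is $\theta = \eps/(2R)$ instead of $\theta = \eps/(6Rr)$, so that $\gamma=R/\eps$ suffices. There is no serious technical obstacle here: both the meta-algorithm encoded in~\eqref{eq:SDPCostAmpEst} and the two ingredient complexity bounds are already in hand, and the only care required is in tracking the exponents of $B$ and $\gamma$ through the substitution.
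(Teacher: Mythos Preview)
Your proposal is correct and mirrors the paper's own approach: the paper derives this corollary simply by invoking the pre--OR-lemma meta-bound~\eqref{eq:SDPCostAmpEst} with the same subroutine complexities (Corollary~\ref{cor:specTraceCalc} with $\alpha=B$ for trace estimation and Corollary~\ref{cor:GibbsExtra} for Gibbs sampling) used in Theorem~\ref{thm:difGibbs}. Your implicit reading of~\eqref{eq:SDPCostAmpEst} as additive in $T_{Tr}$ and $T_{Gibbs}$ (so that the $(B\gamma)^{3.5}$ Gibbs term dominates) is the intended one, and it yields exactly the stated $\sqrt{m}\,B^{3.5}\gamma^{6.5}$.
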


\section{Applications}\label{sec:applications}
In previous works on quantum SDP-solving~\cite{AGGW:SDP,brandao:quantumsdp} it remained an open question whether any applications could be found in the regime where $Rr/\eps$ was small enough to get a speedup over the best classical methods. Later Brand\~ao et al.~\cite{brandao:expSDP2} showed that SDP primal oracles can be used to solve the problem of shadow tomography if the input is given in the quantum state model. Shadow tomography was recently proposed by Aaronson~\cite{aaronson:shadow}, who also gave a sample-efficient algorithm. In Section~\ref{sec:shadow} we apply our improved SDP primal oracles to this problem, simultaneously improving the sample complexity and computational complexity compared to the previous works. 

We also propose new applications to quantum SDP-solvers, namely the problems of \emph{quantum state discrimination} and \emph{E-optimal design}. In both cases we show a speedup over the best possible classical algorithm in terms of some input parameters, while suffering from a massive dependence on other parameters. 

\subsection{Improved shadow tomography}\label{sec:shadow}
We apply the idea from Brand\~ao et al.~\cite{brandao:expSDP2} to use an SDP primal oracle to the problem of \emph{shadow tomography} proposed by Aaronson~\cite{aaronson:shadow}.
In \emph{shadow tomography} we are given the ability to sample from an $n$-dimensional quantum state $\tau$ and we have a description of some measurement operators $E_1,\dots, E_m$; the goal is to find $\eps$-approximations of the corresponding expectation values $\tr{E_j \tau}$ for all $j\in [m]$. Aaronson showed that this can be done with only
\[
\bOt{\frac{\log^4(m)\log(n)}{\eps^5}}
\]
samples from $\tau$, but his method has high computational costs.

In \cite{brandao:expSDP2} Brand\~ao et al.~showed that a slightly relaxed problem can be efficiently solved using an SDP primal oracle. The problem they solved is to find a $y\in\mathbb{R}^m$ for which $\sigma:=e^{- \sum_{j=1}^m y_j E_j} / \tr{ e^{- \sum_{j=1}^m y_j E_j}}$ is such that $|\tr{E_j(\tau-\sigma)}|\leq \eps/2\,\, \forall j\in [m]$, i.e., $\sigma$ is in
\begin{align*}
\mathcal{P}_{\eps} = \{ \sigma\colon &\sigma \succeq 0\\
  & \tr{\sigma} = 1\\
  & \tr{\sigma E_j} \leq \tr{\tau E_j } + \eps /2 \ \ \ \forall j\in [m]\\
  & \tr{- \sigma E_j} \leq \tr{ - \tau E_j } - \eps /2 \ \ \ \forall j\in [m]
\}.
\end{align*}
We call the problem of finding a classical description of $\tau$ that suffices to solve the shadow tomography problem without any more samples from $\tau$ the \emph{descriptive shadow tomography problem}.
In particular if we get a vector $y$ as above, then for a given $j\in[m]$ using $\BOT{m}{1/\eps^2}$ invocations of a Gibbs-sampler for $y$ followed by the measurement $E_j$ suffices to find an $\eps$-approximation of $\tr{\tau E_j}$ with success probability at least $1-\bigO{1/m}$. If we can coherently apply $E_j$, then using amplitude estimation techniques the number of (coherent) Gibbs-sampler calls can be reduced to $\BOT{m}{1/\eps}$.

Due to the output size of the shadow tomography problem, a trivial $\Omega(m\log(1/\eps))$ lower bound can be given on the computational complexity. However, this limitation does not exist for the descriptive shadow tomography problem.
Both problems clearly have the same sample complexity, furthermore the best known lower bound on the sample complexity is $\Omega(\log(m)/\eps^2)$~\cite{aaronson:shadow}.

\begin{theorem}
The descriptive shadow tomography problem can be solved using 
\[
\bOt{\frac{\log^4(m)\log(n)}{\eps^4}}
\]
samples from $\tau$.
Furthermore, when the $E_j$ matrices are accessible in the quantum operator model this can be done using
\[
\bOt{ \left(\sqrt{m} + \frac{\sqrt{n} }{ \eps}\right)\frac{\alpha}{\eps^4} } 
\]
queries. It follows that the same bound holds with $\alpha = s$ for the sparse model and with $\alpha = B$ for the quantum state model.
When the measurements are given in the quantum state model the query complexity can be also bounded by
\[
\BOT{n}{ \left(\sqrt{m} + \min\left(\frac{\sqrt{n}}{\eps},\frac{B^{2.5}}{\eps^{3.5}} \right) \right) \frac{B}{\eps^4} }.
\]
\end{theorem}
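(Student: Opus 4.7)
The descriptive shadow tomography problem is an SDP feasibility problem: find any $\sigma\in\mathcal{P}_\eps$. Encoded as an SDP it has $2m$ constraints of the form $\pm\tr{E_j\sigma}\leq\pm\tr{E_j\tau}+\eps/2$ together with $\tr{\sigma}=1$, with $R=1$. This is precisely what our SDP primal oracle from Theorem~\ref{thm:mainSDP} produces, run with $\gamma=R/\eps=1/\eps$. The plan is to substitute our various Gibbs-sampling and trace-estimation subroutines into Theorem~\ref{thm:mainSDP} to obtain the query bounds, and separately to count samples of $\tau$ to obtain the sample bound.

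For the query complexity the work is essentially already done. In the quantum operator model Theorem~\ref{thm:simpGibbs} directly gives $\bOt{(\sqrt{m}+\sqrt{n}/\eps)\alpha/\eps^4}$, and Lemmas~\ref{lem:sparseBlockEncoding} and \ref{lem:densitySimulation} reduce the sparse and quantum state models to the quantum operator model with $\alpha=s$ and $\alpha=B$ respectively. For the quantum state model the alternative Gibbs-sampler of Theorem~\ref{thm:difGibbs} additionally yields $\bOt{(\sqrt{m}+B^{2.5}/\eps^{3.5})B/\eps^4}$; combining this with the operator-model reduction and keeping whichever is smaller produces the $\min(\sqrt{n}/\eps,B^{2.5}/\eps^{3.5})$ expression in the last bound.

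The sample bound requires more care. The key point is that samples of $\tau$ are needed only to estimate the right-hand sides $b_j=\pm\tr{E_j\tau}+\eps/2$, whereas the Gibbs state $\sigma$ is a function of the $E_j$'s alone. We therefore build an unbiased estimator of $\tr{E_j(\sigma-\tau)}$ from a single sample of $\tau$: measure the binary POVM $\{E_j,I-E_j\}$ on one copy of the Gibbs state and on one fresh copy of $\tau$, and output the signed difference of the $\{0,1\}$ outcomes. This is an unbiased trace estimator with standard deviation $O(1)$ and consumes exactly one sample of $\tau$ per call, slotting into Theorem~\ref{thm:mainSDP} as $T_{Tr}^\sigma(\gamma^{-1})$ with $\sigma=O(1)$.

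Following the proof of Theorem~\ref{thm:mainSDP}, each Arora-Kale iteration works with $\rho=\tilde{\rho}_S^{\otimes k}$ for $k=O(\sigma^2\gamma^2)=O(1/\eps^2)$, and Two-Phase Quantum Search (Lemma~\ref{lem:tpqs}) consumes $\bOt{\log^4(m)}$ such copies of $\rho$ per iteration, or equivalently $\bOt{k\log^4(m)}=\bOt{\log^4(m)/\eps^2}$ samples of $\tau$. Since the primal oracle framework runs for $\bOt{\log(n)/\eps^2}$ iterations, the total sample cost is $\bOt{\log^4(m)\log(n)/\eps^4}$. The main subtlety is verifying that the $\bOt{\sqrt{m}}$ controlled reflections internal to Two-Phase Search do not themselves consume samples of $\tau$; this holds because those reflections act only on the Gibbs-state and ancilla registers, while each sample of $\tau$ is used exactly once in a classical (non-coherent) invocation of the estimator above.
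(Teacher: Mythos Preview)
Your proposal follows the paper's approach and arrives at the right numbers, but the justification in your final paragraph is incorrect and, if taken literally, would break the argument.

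You claim that the $\bOt{\sqrt{m}}$ controlled reflections inside Two-Phase Search ``act only on the Gibbs-state and ancilla registers'' while each sample of $\tau$ is consumed ``in a classical (non-coherent) invocation of the estimator''. But the reflections $I-2\Pi_j$ are built from $U_j$ and $U_j^\dagger$, and your own $U_j$ must act on a register holding $\tau$ (it measures $E_j$ on $\tau$). If the $\tau$ copies were supplied classically, outside the coherent state, then each of the $\bOt{\sqrt{m}}$ coherent applications of $U_j$ would require its own fresh sample, and your sample bound would pick up an unwanted factor of $\sqrt{m}$.

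The correct resolution, which is exactly what the paper does, is to absorb the $\tau$ samples into the state on which Two-Phase Search acts: take $\rho=(\tilde\rho_S\otimes\tau)^{\otimes k}$ and let $U_j$ coherently implement your difference estimator on each tensor factor. The Fast Quantum OR Lemma then guarantees that \emph{one} copy of $\rho$ (hence $k$ copies of $\tau$) suffices per round of the search, regardless of the $\sqrt{m}$ coherent reflections acting on that single copy. Your count ``$\bOt{k\log^4(m)}$ samples of $\tau$ per iteration'' was already implicitly assuming this bundling; only the explanation of why the reflections do not cost extra samples needs to be replaced.
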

\begin{proof}
The samples from $\tau$ are only used for calculating the values $b_j$, i.e., $\eps/4$ approximations of $\tr{\tau E_j}$, when checking the constraints in the SDP primal oracle. Like in \cite{brandao:expSDP2} we make a small adjustment to our SDP primal oracle: when Gibbs-sampling the Gibbs state $\rho$, we also sample $\tau$ to create the state $\rho \otimes \tau$. Then, when checking the constraint, we measure $E_j \otimes -E_j$ to obtain an approximation of $\tr{E_j \rho} - \tr{E_j \tau}$. Notice that our SDP primal oracle uses $\bOt{\frac{\log^4(m)\log(n)}{\eps^4}}$ Gibbs states ($\BOT{\log(n)}{\log^4(m)/\eps^2}$ in each of the $\bigO{\log(n)/\eps^2}$ iterations) and hence the modified version uses that many samples from $\tau$ too.

The statement about the computational complexity follows directly from Theorem~\ref{thm:simpGibbs} and \ref{thm:difGibbs}.
\end{proof}

As a final remark, similarly to Low and Chuang~\cite{LowChuang:UniformSpectAmp2016}, we note that if one can perform a POVM measurement on a quantum computer, then one can also implement a block-encoding of the corresponding measurement operator. First we clarify what we mean by performing a POVM measurement on a quantum computer. For simplicity assume that the POVM is a two-outcome measurement, represented by the operators $M,(I-M)$. Then we assume (without too much loss of generality) that an implementation on a quantum computer is as follows:
We get as input a mixed state $\rho$, and attach $a$ ancilla qubits to it. Then we apply some unitary on the state, and finally perform a measurement in the computational basis, accepting only measurement outcomes where the last qubit is $\ket{0}$. We can summarise the procedure as follows:
 $$ \rho \rightarrow \rho\otimes\ketbra{0}{0}^{\otimes a}
 \rightarrow U \left(\rho\otimes\ketbra{0}{0}^{\otimes a}\right) U^\dagger  
 \rightarrow \tr{\left(I\otimes\ketbra{0}{0}\right) U \left(\rho\otimes\ketbra{0}{0}^{\otimes a}\right) U^\dagger}.$$
Suppose that the above implementation has at most $\eps$ bias, then 
\begin{equation}\label{eq:POVMeps}
\forall \rho\colon \left|\tr{ \left(I\otimes\ketbra{0}{0}\right) U \left(\rho\otimes\ketbra{0}{0}^{\otimes a}\right) U^\dagger}-\tr{M\rho}\right|\leq \eps.
\end{equation}
Observe that $\rho\otimes\ketbra{0}{0}^{\otimes a}= \left(I\otimes\ket{0}^{\otimes a}\right)\rho\left(I\otimes\bra{0}^{\otimes a}\right)$, and thus
$$ \tr{ \left(I\otimes\ketbra{0}{0}\right) U \left(\rho\otimes\ketbra{0}{0}^{\otimes a}\right) U^\dagger}
  =\tr{ \left(I\otimes\bra{0}^{\otimes a}\right)U^\dagger\left(I\otimes\ketbra{0}{0}\right) U  \left(I\otimes\ket{0}^{\otimes a}\right)\rho}.
$$
Therefore \eqref{eq:POVMeps} is equivalent to saying that
\begin{align*}
\forall \rho\colon \kern -18mm& &
\left|\tr{ \left[\left(I\otimes\bra{0}^{\otimes a}\right)U^\dagger\left(I\otimes\ketbra{0}{0}\right) U  \left(I\otimes\ket{0}^{\otimes a}\right)-M\right]\rho}\right|&\leq \eps \\
\Longleftrightarrow \kern -18mm& &
\nrm{\left(I\otimes\bra{0}^{\otimes a}\right)U^\dagger\left(I\otimes\ketbra{0}{0}\right) U  \left(I\otimes\ket{0}^{\otimes a}\right)-M}&\leq \eps.
\end{align*}
Finally let $a':=a+1$ and $U':=U\otimes I_2$, then it is easy to see that 
$$
	\left(I\otimes\bra{0}^{\otimes a}\right)U^\dagger\left(I\otimes\ketbra{0}{0}\right) U  \left(I\otimes\ket{0}^{\otimes a}\right)=	\left(I\otimes\bra{0}^{\otimes a'}\right)U'^\dagger\left(I\otimes\mathrm{CNOT}\right) U'  \left(I\otimes\ket{0}^{\otimes a'}\right),
$$
therefore $U'^\dagger\left(I\otimes\mathrm{CNOT}\right) U'$ is a $(1,a',\eps)$-block-encoding of $M$.

This shows that if we can implement the measurements $E_j$ in a controlled fashion on a quantum computer, then we can also implement the corresponding block encoding with essentially the same cost. Hence the descriptive shadow tomography problem can be solved with the same cost as $(\sqrt{m} +\sqrt{n}/\eps)/\eps^4$ controlled measurements of $E_j$, if the measurement is performed on a quantum computer as we described above.
\subsection{Quantum state discrimination}
In the \emph{Quantum State Discrimination} problem we are given $k$ $d$-dimensional quantum states $\rho^{(1)},\dots,\rho^{(k)}\in \mathbb{C}^{d\times d}$, in some oracular access model. Our goal is to find a POVM $M^{(1)},\dots,M^{(k)}$ that has the ``best" probability of discriminating between the states. Here ``best" can mean two things:
\begin{itemize}
	\item The minimal success probability is maximized: $\max_M \min_{i\in [k]}\tr{M^{(i)}\rho^{(i)}}$.
	\item The total success probability, the sum of all the success probabilities, is maximized.
\end{itemize}
Both problems can be cast as an SDP~\cite{eldar:sdp} but we will only consider the second here since it lends it self better to the Arora-Kale framework..
Our goal will be to get a quantum speedup in $k$ at the expense of a slowdown in terms of $d$. However, the interesting cases of the problem seem to occur when $d \ll k$. Furthermore, we will use the quantum state model not to get a further speedup over the sparse matrix access oracle, but to show that it is possible to solve the problem even when just given access to unitaries that prepare the quantum states. 


\begin{theorem}\label{thm:maxTotal}
	Given access to the matrix entries of the quantum states $\rho^{(1)},\dots,\rho^{(k)}\in \mathbb{C}^{d\times d}$ the total error quantum state discrimination problem can be solved up to additive error $\eps$ on a quantum computer using 
	\[
	\bOt{\frac{\sqrt{k}}{\eps^5} \poly(d)}
	\]
	queries to the input.
	
	Given access to a unitary that creates a purified version of the quantum states $\rho^{(1)},\dots,\rho^{(k)}\in \mathbb{C}^{d\times d}$ the total error quantum state discrimination problem can be solved up to additive error $\eps$ on a quantum computer using 
	\[
	\bOt{\frac{k^{1.5}}{\eps^5}\poly(d)}
	\]
	queries.
\end{theorem}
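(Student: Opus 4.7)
The plan is to formulate the total-success-probability version of state discrimination as an SDP in the form of~\eqref{eq:SDP}, then invoke the solver of Theorem~\ref{thm:simpGibbs}. Take the variable $X = \mathrm{diag}(M^{(1)},\dots,M^{(k)})$ as a $kd\times kd$ block-diagonal psd matrix; the objective $\sum_{i=1}^k \tr{M^{(i)}\rho^{(i)}}$ then equals $\tr{CX}$ with $C = \mathrm{diag}(\rho^{(1)},\dots,\rho^{(k)})$ satisfying $\nrm{C}\leq 1$. The POVM condition $\sum_i M^{(i)} = I_d$ expands to $O(d^2)$ scalar equalities on matrix entries, which we split into pairs of inequality constraints $\tr{A_j X}\leq b_j$; each $A_j$ is block-diagonal with the same $O(1)$-sparse Hermitian block repeated $k$ times, so $\nrm{A_j}\leq 1$ and every row is $O(1)$-sparse. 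As the first constraint we impose $\tr{X} \leq d$, which is tight on every POVM.

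The resulting parameters are $n = kd$, $m = O(d^2)$, and $R = d$. To bound $r$, note that putting weight $1$ on $A_1 = I$ and zero on every other dual variable yields a feasible dual solution (since $I\succeq C$ follows from $\nrm{C}\leq 1$) of value $d$; hence $r \leq d$ and $\gamma := Rr/\eps = O(d^2/\eps)$. In the sparse matrix input model the row-sparsity of the input is $s = O(d)$ (dominated by $C$). Invoking Theorem~\ref{thm:simpGibbs} with $\alpha = s = O(d)$, the leading $\sqrt{n}\gamma\cdot\alpha\gamma^4$ term evaluates to $\sqrt{kd}\cdot(d^2/\eps)\cdot d\cdot(d^2/\eps)^4 = \bOt{\sqrt{k}\,\poly(d)/\eps^5}$, which dominates the $\sqrt{m}\,\alpha\gamma^4$ term and yields the first bound.

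For the purified-input case we reduce to the quantum operator model. Since $\tr{C} = k$, the operator $C/k$ is a $kd$-dimensional density operator whose purification $\tfrac{1}{\sqrt{k}}\sum_i\ket{i}\ket{\psi_i}$ is prepared with $O(1)$ queries to the given purification oracle, and Lemma~\ref{lem:densitySimulation} converts this into a $(k,\bigO{\log(kd)},0)$-block-encoding of $C$. The classical sparse $A_j$ are block-encoded with normalization $1$ via Lemma~\ref{lem:sparseBlockEncoding} and then rescaled to match $\alpha = k$ (see Lemma~\ref{lem:linCombBlocks}). Plugging $\alpha = k$, $n = kd$, $m = O(d^2)$, and $\gamma = O(d^2/\eps)$ into Theorem~\ref{thm:simpGibbs} produces $\bOt{(\sqrt{m}+\sqrt{n}\gamma)\alpha\gamma^4} = \bOt{k^{1.5}\poly(d)/\eps^5}$. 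The main obstacle is controlling $r$: splitting each POVM equality into a pair of two-sided $\pm$ inequalities could in principle force large dual variables on the equality constraints, but the explicit dual above concentrates all its mass on $A_1 = I$ and remains feasible regardless of the two-sided constraints, so $r$ stays $O(d)$ and $\gamma$ remains polynomial in $d$.
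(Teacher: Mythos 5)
Your overall route is the same as the paper's: cast the total-success-probability problem as the SDP with $X=\diag(M^{(1)},\dots,M^{(k)})$, $C=\diag(\rho^{(1)},\dots,\rho^{(k)})$, the $O(d^2)$ entrywise POVM constraints, parameters $n=kd$, $m=O(d^2)$, $s=O(d)$, $R=d$, and then apply Theorem~\ref{thm:simpGibbs} with $\alpha=s=O(d)$ in the sparse case and $\alpha=B=\Theta(k)$ (via Lemma~\ref{lem:densitySimulation} applied to a purification of $C/k$) in the purified-state case. All of that matches the paper's proof.

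However, the step you yourself flag as ``the main obstacle'' --- bounding $r$ --- is where your argument has a genuine gap. By the paper's definition, $r$ must upper bound $\nrm{y}_1$ for at least one \emph{optimal} dual solution; in the Arora--Kale framework this is what licenses the conclusion ``$\opt>g$'' when the oracle finds $\mathcal{P}_0(\rho)$ empty, since for $g\geq\opt$ one needs a dual solution of value at most $g$ \emph{and} $\ell_1$-norm at most $r$ to place (scaled by $1/(2r)$) inside the polytope. Exhibiting the feasible dual point that puts weight $1$ on $A_1=I$ only shows there is \emph{some} feasible dual vector of value $d$ with small norm; when the binary search probes $g$ near $\opt\ll d$, that point is useless, and nothing in your argument rules out that every dual solution of value close to $\opt$ has large $\ell_1$-norm (indeed, already for $\rho^{(i)}=\ketbra{v}{v}$ with $v$ the uniform superposition, the optimal dual matrix has entrywise $\ell_1$-norm $\Theta(d)$ despite $\opt=1$, so the norm is governed by $d\cdot\tr{Y}$ rather than by the value of some feasible point). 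The paper closes this by writing the dual in block form as $\min \tr{Y}$ subject to $Y\succeq\rho^{(i)}$ for all $i$, noting $\tr{Y}\leq d$ at the optimum since $I_d$ is feasible, and then bounding $\sum_{s,t}|Y_{st}|=\max_{S\in\{-1,1\}^{d\times d}}\tr{SY}\leq d\,\tr{Y}\leq d^2$, so $r=d^2$ suffices. Substituting $r=d^2$ (hence $\gamma=Rr/\eps=O(d^3/\eps)$) only changes the $\poly(d)$ factor, so both stated bounds $\bOt{\sqrt{k}\,\poly(d)/\eps^5}$ and $\bOt{k^{1.5}\poly(d)/\eps^5}$ survive, but your justification of $r=O(d)$ must be replaced by an argument of this kind about an optimal dual solution.
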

\begin{proof}
	To maximize the total success probability, notice that the probability of measuring $\rho^{(i)}$ correctly is $\tr{M^{(i)}\rho^{(i)}}$.
	Writing the problem as an SDP we get:
	\begin{align*} 
	\max \quad & \sum_{i=1}^k \tr{M^{(i)}\rho^{(i)}} \\ 
	\text{s.t.}\ \ \ & \sum_{i=1}^k M^{(i)} = I_{d}\\
	&M^{(i)} \succeq 0 \text{ for all }i\in [k].
	\end{align*}
	This can be written in the standard form~\eqref{eq:SDP} as follows:
	\begin{itemize}
		\item $X = \diag\left(M^{(1)},\dots,M^{(k)} \right)$.
		\item $C = \diag\left(\rho^{(1)},\dots,\rho^{(k)} \right)$.
		\item $A_{st} = \oplus_{i=1}^k E_{st}$ and $b_{st} = \delta_{st}$, for $\delta_{st}$ the Kronecker delta.
	\end{itemize}
	
	Notice that we have strict equalities, as opposed to the inequalities in the standard form. These equalities can be cast into inequality form by adding a separate upper and lower bound, this is however not needed for the analysis. Just note that an equality in the primal corresponds to a variable in the dual without positivity constraint.
	
	To apply our SDP-solvers we need to give bounds on the input parameters. Clearly here $n = kd$ and $m=\bigO{d^2}$. Furthermore, since the objective matrix is block diagonal with $d\times d$ blocks, the sparsity $s$ is at most $d$. To bound $B$, note that $C$ has trace $k$ and is psd, and all other constraints can clearly be decomposed with a constant trace. It remains to give a bound for $R$ and $r$.
	
	For $R$, the bound on the trace of a primal solution, notice that
	\[
	\tr{X} = \sum_{i=1}^k \tr{M^{(i)}} = \tr{\sum_{i=1}^k M^{(i)}} = \tr{I_d} = d.
	\]
	For $r$ we need to write out the dual, doing so directly gives:
	\begin{align*}
	\min \quad &  \sum_{s,t}^d y_{st} \delta_{st}  = \sum_{s}^d y_{ss}\\ 
	\text{s.t.}\ \ \ &\sum_{j=1}^m y_{st} \left( \oplus_{i=1}^k E_{st} \right) \succeq \oplus_{i=1}^k \rho^{(i)}.
	\end{align*}
	Notice that we do not have a $y\geq 0$ since we have strict equalities in the primal. We could have replaced the equalities by inequalities and then we would get a $y^+$ and $y^-$ vector, both non-negative, such that $y = y^+-y^-$. However, since $r$ is a bound on the sum of the values in \emph{one} optimal solution, it is enough to bound the absolute value of the $y$ variables.
	
	To do so, simply rewrite the dual in block form and reorganize the $y$ variables in a matrix $Y$:
	\begin{align*}
	\min \quad &  \tr{Y}\\ 
	\text{s.t.}\ \ \ & Y \succeq  \rho^{(i)} \text{ for all }i\in[k].
	\end{align*}
	Clearly $I_d$ is feasible for this problem so for an optimal $Y$ we have $\opt = \tr{Y} \leq d$. This gives the bound, for $S\in\{-1,1\}^{d\times d}$
	\begin{equation}\label{eq:rOPTBound}
	\sum_{s,t}^d | Y_{st} | = \max_{S\in\{-1,1\}^{d\times d}}\tr{S Y} \leq  \tr{Y} \max_{S\in\{-1,1\}^{d\times d}}\nrm{S} \leq d\tr{Y} \leq d^2,
	\end{equation}
	so $r = d^2$ suffices.\footnote{This also proves that for $k$ states of dimension $d$ the total success probability of discrimination is always at most $d$, so the average will be at most $d/k$. Thus the error parameter should scale with $1/k$ if we would consider the average probability. This is why we choose to look at the total success probability instead.}
	
	Applying our results about the complexity of SDP-solving gives the claimed bounds.
\end{proof}

Note that the output of the algorithm is a classical description of a dual solution $Y$ and a concise classical description $(Y',z)$ of a primal solution $M^{(i)} \propto e^{Y' - z\rho^i}$, such that the $M^{(i)}$s form a close to optimal POVM. Note that this representation gives an interesting way of compressing a POVM, since the $Y'$ matrix is only $d\times d$, and encodes $k$ POVM operators with the help of the $\rho^{(i)}$ matrices. The dual solution $Y$ could be of independent interest too, solving the following problem: for a set of density operators, find the matrix with the smallest trace that is psd bigger then all given density operators. 

\paragraph*{A lower bound.} To find a lower bound, fix $d=2$, i.e., consider a single qubit. Now let $z\in \{0,1\}^k$ be the input for a search problem, we want to distinguish the cases $|z| = 0$ and $|z| = 1$ under the promise that we are in one of these cases. This is known to take $\Omega(\sqrt{k})$ quantum queries or $\Omega(k)$ classical queries. Now let $\rho^{(j)} = \ketbra{z_j}{z_j}$. Given query access to $z$ it is easy to construct the input oracles for any of the three input models. Clearly if $z=0^k$ then all states are equal thus the total success probability is always $1$. However, if $z_k = 1$ then by setting $M^{(k)}:=\ketbra{1}{1}$ and choosing the other measurement operators arbitrarily, we clearly get a total success probability of $2$. Hence a 1/3-approximation to the optimal value of the SDP given above will solve the search problem and hence takes at least $\Omega(\sqrt{k})$ quantum queries or at least $\Omega(k)$ classical queries. 

\subsection{Optimal design}
In the optimal design setting we want to learn a hidden state $\theta \in \mathbb{R}^d$ through experiments. There is a set of $k$ possible experiments, represented by unit vectors $u^{(1)},\dots,u^{(k)} \in \mathbb{R}^d$, and when we execute the $i$th experiment we learn $\langle \theta, u^{(j)}\rangle$ with some noise. In particular we get a sample from $\mathcal{N}(\langle \theta, u^{(j)}\rangle,\sigma_j)$. Precise estimation of $\theta$ requires a lot of experiments, and the problem in optimal design is to decide which distribution to use when choosing the experiments in order to ``minimize'' the covariance matrix of the maximum likelihood estimator of $\theta$. Since the variance of the maximum likelihood estimator is hard to express analytically, we instead look at the \emph{Fischer information matrix}, which is a good approximation for the inverse of the covariance matrix, and has a nice closed form:
\[
F_p = \sum_{i=1}^k p_i u^{(i)}u^{(i)T} / \sigma_i^2,
\] 
where $p_i$ is the probability of doing experiment $u^{(i)}$. Now, to get the covariance matrix ``small'' we would like to get the Fischer information matrix ``large''. For a more detailed explanation, see for example \cite{silvey:optdes}.

The precise meaning of ``small'' and ``large'' can be defined in several sensible ways. The most common criteria are called \emph{A-optimal}, \emph{D-optimal} and \emph{E-optimal design}. In A-optimal design we want to minimize the sum of the eigenvalues of the covariance matrix, or as an approximation the sum of the eigenvalues of the inverse of the Fischer information matrix. Unfortunately the SDP formulation of this problem has parameters $r,R$ that make our methods inefficient. In D-optimal design we want to minimize the determinant of the covariance matrix, this can be approximated with a convex program, but sadly this problem does not naturally correspond to an SDP. 

We will consider E-optimal design. In this setting we would like to minimize the operator norm of the covariance matrix. Since this is hard to do, we will try to maximize the smallest eigenvalue of the Fischer information matrix. Let $P := \frac{1}{ d\max_i \sigma_i^{2}}$ be an input parameter dependent on the precision of the experiments averaged over the coordinates.

\begin{theorem}
	The E-optimal design problem, that is, finding a distribution $p$ such that the smallest eigenvalue of $F_p$ is maximized, can be solved up to additive error $\eps$ using sparse access to the $s$-sparse experiment (unit) vectors $u^{(1)},\dots,u^{(k)}\in \mathbb{R}^d$ and oracle access to the $\sigma_i$ values with
	\[
	\bOt{\left(\sqrt{k}+\sqrt{d}\frac{P^2}{\eps}\right)s\frac{P^8}{\eps^4}}
	\]
	queries on a quantum computer.\footnote{The dependence on $s$ can be reduced to $\sqrt{s}$ by using state preparation and the quantum operator model.}
\end{theorem}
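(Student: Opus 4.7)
The proof follows the same recipe as Theorem~\ref{thm:maxTotal}. The first step is to cast E-optimal design in the paper's standard form~\eqref{eq:SDP}. By Sion's minimax theorem,
\[
\max_p \lambda_{\min}(F_p) \;=\; \min_\rho \max_{i\in[k]} \tr{\rho A_i},
\]
with $A_i := u^{(i)}(u^{(i)})^T/\sigma_i^2$ and $\rho$ ranging over $d$-dimensional density matrices. Linearising the inner maximum by an auxiliary scalar $t\ge 0$ with $\tr{\rho A_i}\le t$ for all $i$ and packaging $X := \rho\oplus t$ as a $(d+1)\times(d+1)$ PSD matrix, this fits the paper's form with $n = d+1$ and $m = k + O(1)$ (the $k$ inequalities plus a two-sided trace constraint $\tr{\rho}=1$). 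The paper-dual of this SDP is precisely the original maximisation over $p$, so the SDP-solver of Theorem~\ref{thm:simpGibbs} returns the sought near-E-optimal distribution $p$ directly as its dual output.

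Next one bounds the parameters. Each constraint matrix inherits the row-sparsity $s$ from $u^{(i)}(u^{(i)})^T$, so $\alpha = O(s)$ in the sparse matrix input model. To meet the paper's $\nrm{A_j}\le 1$ normalisation, I would rewrite each constraint as $\tr{\rho\,u^{(i)}(u^{(i)})^T}\le \sigma_i^2\,t$, so that the $1/\sigma_i^2$ factor is absorbed into the coefficient of the $t$-block and the leading matrix $u^{(i)}(u^{(i)})^T$ keeps operator norm $1$; a global rescaling by $\max_i \sigma_i^2$ then completes the normalisation. With this normalisation, both the trace $R = \tr{X}$ and the $\ell_1$ norm $r$ of an optimal paper-dual solution can be bounded by $\bOt{P}$ using the bound $t^\ast \leq \bOt{P}$ that follows from $\tr{\rho}=1$ together with the trace bound on $F_p$, giving $\gamma = Rr/\eps = \bOt{P^2/\eps}$. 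Substituting $n = d+1$, $m = k+O(1)$, $\alpha = s$, and this $\gamma$ into the upper bound of Theorem~\ref{thm:simpGibbs} yields the claimed complexity $\bOt{(\sqrt{k} + \sqrt{d}P^2/\eps)\,sP^8/\eps^4}$.

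The main obstacle is the scaling bookkeeping: one must choose a formulation in which \emph{both} $R$ and $r$ are controlled purely in terms of $P$, rather than by the naive $1/\sigma_{\min}^2$ (which would give a much worse $P$-dependence when the $\sigma_i$'s vary). Absorbing all $\sigma_i$-dependence into the $t$-block coefficient rather than into the $A_i$'s is the key step that makes both $R$ and $r$ small; once such a formulation is fixed, the rest of the proof is a routine application of Theorem~\ref{thm:simpGibbs}. The footnote improvement from $s$ to $\sqrt{s}$ is obtained by using $\sqrt{s}$-query sparse state preparation to prepare the unit vector $u^{(i)}$ and invoking Lemma~\ref{lem:densitySimulation} to produce a $(1,O(\log d),0)$-block-encoding of the pure state $u^{(i)}(u^{(i)})^T$, thereby placing the problem in the quantum operator model with $\alpha = \sqrt{s}$.
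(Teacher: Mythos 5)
Your overall route is the paper's: cast the problem as the SDP $\min t$ s.t.\ $\tr{\rho\, u^{(i)}u^{(i)T}}/\sigma_i^2\le t$, $\tr{\rho}=1$, with $X=\rho\oplus t$ as the (paper-)primal variable and the distribution $p$ as the (paper-)dual, read off $n=d+1$, $m=k+\bigO{1}$, sparsity $s$, bound $R,r$, and plug into Theorem~\ref{thm:simpGibbs}. However, the step you single out as the key one — rewriting the constraints as $\tr{\rho\, u^{(i)}u^{(i)T}}\le \sigma_i^2 t$ so that all $\sigma_i$-dependence sits in the $t$-coefficient, and claiming this makes both $R$ and $r$ of order $\bOt{P}$ — does not work. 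Rescaling the $i$-th constraint by $\sigma_i^2$ rescales the corresponding dual variable by $1/\sigma_i^2$ (times the global factor): writing out the dual of your normalized SDP, its variables are $y_i = p_i\,\sigma_{\max}^2/\sigma_i^2$ together with the eigenvalue variable, so $\nrm{y}_1$ is governed by $\sigma_{\max}^2/\sigma_{\min}^2$, not by $P$. This blow-up is generic rather than pathological, because an E-optimal $p$ typically concentrates its mass on the most precise (smallest-$\sigma_i$) experiments. Hence the claimed $r=\bOt{P}$, and with it $\gamma=\bOt{P^2/\eps}$, does not follow from your formulation, and the final complexity is not established.

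The paper handles this differently: it keeps the factor $1/\sigma_i^2$ inside the constraint matrix block, so the dual variables remain the probabilities $p_i$ (with $\sum_i p_i\le 1$) plus the scalar $t$, giving $R=r=1+|\opt|$ directly; it then bounds $|\opt|$ by exhibiting the feasible point $X=I_d/d$, $z=\max_i 1/(d\sigma_i^2)$, which is where the parameter $P$ enters. (Note the paper is itself loose here — $\nrm{A_j}\le 1$ is not verified when some $\sigma_i<1$, and the feasible value $\max_i 1/(d\sigma_i^2)=1/(d\min_i\sigma_i^2)$ is silently identified with $P=1/(d\max_i\sigma_i^2)$ — so your matching looseness about whether $\min_i$ or $\max_i$ appears is shared with the paper; but your stronger claim that $t^*\le\bOt{P}$ ``follows from $\tr{\rho}=1$ together with the trace bound on $F_p$'' is asserted, not proved, and with $P$ defined via $\max_i\sigma_i$ it is false in general.) To repair your write-up, drop the per-constraint rescaling, keep $A_i=\mathrm{diag}(-1,\,u^{(i)}u^{(i)T}/\sigma_i^2)$ as in the paper, bound $R,r$ via the uniform feasible point, and address the operator-norm normalization separately (e.g.\ via a global rescaling, which affects $\eps$ and the parameters explicitly rather than through the dual variables).
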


\begin{proof}
	We consider the following SDP:
	\begin{align*}
	\max \ \ \ & t\\
	\text{s.t.} \ \ \ & \sum_{i=1}^k p_i u^{(i)}u^{(i)T} / \sigma_i^2 \succeq t I_d\\
	& \sum_{i=1}^k p_i \leq 1\\
	& p_i \geq 0 \text{ for all }i\in [k]
	\end{align*}
	Clearly this SDP would maximize the minimal eigenvalue of $F_p$.
	We can rewrite this in standard dual form, flipping the sign of the optimal value:
	\begin{align*}
	\min \ \ \ & -t\\
	\text{s.t.} \ \ \ & \sum_{i=1}^k p_i \begin{bmatrix} -1 & \\ & u^{(i)}u^{(i)T} / \sigma_i^2 \end{bmatrix} + t \begin{bmatrix}0 & \\ & -I_d \end{bmatrix} \succeq \begin{bmatrix} -1 & \\ & 0 \end{bmatrix} \\
	& p_i \geq 0 \text{ for all }i\in [k]\\
	& t \geq 0
	\end{align*}
	The corresponding primal problem is then easy to write down:
	\begin{align*}
	\max \ \ \ & -z\\
	\text{s.t.} \ \ \ & - z + \tr{X u^{(i)}u^{(i)T}} / \sigma_i^2 \leq 0 \text{ for all }i\in[k]\\
	&\tr{X} \geq 1\\
	& z \geq 0, X\succeq 0 
	\end{align*}
	From the size of the input it follows that $n=1+d$ and $m=1+k$ for this SDP. Furthermore, the row sparsity of the constraint matrices is equal to the vector sparsity of the $u^{(i)}$, which justifies the use of $s$ for the sparsity of the vectors $u^{(i)}$. It remains to give a bound on $r$ and $R$. Note that the trace constraint on $X$ will be tight for an optimal $X$ and hence $R = 1+|\opt|$, where $\opt$ is the optimal value of one of these SDPs. Similar for the sum constraint on $p$, we get $r =1+ |\opt|$. To give a bound on $|\opt|$ we rewrite the primal again, flip the sign of the optimum and flipping the sign of $z$:
	\begin{align*}
	\min \ \ \ & z\\
	\text{s.t.} \ \ \ &  \tr{X u^{(i)}u^{(i)T}} / \sigma_i^2 \leq z \text{ for all }i\in[k]\\
	&\tr{X} \geq 1\\
	& z \geq 0, X\succeq 0 
	\end{align*}
	Now, let us construct a feasible point, since we have a minimization SDP, this will give an upper bound on $|\opt|$. Let $X = I_d / d$, then $\tr{X u^{(i)}u^{(i)T}} = 1 / d$, so picking $z = \max_i \frac{ 1}{d\sigma_i^2}$ will give a feasible point with objective value $z$. We conclude that $r = R = \bigO{\frac{1}{ \max_i d \sigma_i^{2}}}$ suffices. The stated complexity follows using our complexity bounds on quantum SDP-solving.
\end{proof}

\section{Lower bounds for the new input models}

Previous work~\cite{AGGW:SDP} showed an $\Omega\left(\sqrt{\max\{n,m\}} \min\{n,m\}^{3/2}\right)$ lower bound for the quantum query complexity of SDP-solving in the sparse input model. For this bound $s=1$, $\eps = 1/3$ and $R=r=\min\{n,m\}^2$. By letting either $n$ or $m$ be constant, the $\Omega(\sqrt{n}+\sqrt{m})$ lower bound from \cite{brandao:quantumsdp} can be recovered. The improved upper bounds of this paper show that the dependence on $n$ and $m$ is tight up to logarithmic factors. It remains an open question whether a lower bound with an interesting dependence on $s$ and $Rr/\eps$ can be proven.

In this section we prove lower bounds for the new input models: the quantum state model and the quantum operator model. To do so, we first prove a lower bound in the Hamiltonian input model, where we can time-evolve under the matrices $A_j$, see Definition~\ref{def:HamImp}. In all cases the goal is to show that the term $\sqrt{m}/\eps$ times the relevant normalization parameter (for example $B$ in the quantum state model) is necessary.

\begin{definition}[Hamiltonian input model]\label{def:HamImp} In the \emph{Hamiltonian} input model for SDPs, we have access to two oracles for the $A_j$ matrices. The first oracle, $O_t$, gives a classical description of a real vector $t \in \mathbb{R}^j$ in the usual way 
	\[
	O_t \ket{j}\ket{0} = \ket{j}\ket{t_j}.
	\]
	The second oracle, $O_H$, performs the Hamiltonian simulation with $A_j$ for time $1/t_j$:
	\[
	O_H \ket{j} \ket{\psi} = \ket{j} e^{iA_j/t_j} \ket{\psi}
	\]
	Alongside the oracles we also require an upper bound $\tau \geq \max_j t_j$ as part of the input for an SDP. As in the other input models, we assume that we can also apply the inverse of the oracles.
\end{definition}

Now we invoke a result of Gilyén et al.~\cite[Theorem 2]{gilyen:qgradient}, which gives a lower bound on the number of queries needed for distinguishing different phase oracles. In the spirit of Definition~\ref{def:HamImp}, we will view this as the task of distinguish a set of diagonal Hamiltonians.

\begin{theorem}[Hybrid method for arbitrary phase oracles]\label{thm:arbHybLow}
	Let $G$ be a (finite) set of labels and let $\mathcal{H}:=\mathrm{Span}(\ket{x}\colon x\in G)$ be a Hilbert space. For a function $\tilde{f}:G\rightarrow \R$ let $\mathrm{O}_{\!\tilde{f}}$ be the phase oracle acting on $\mathcal{H}$ such that $$\mathrm{O}_{\!\tilde{f}}:\ket{x}\to e^{i \tilde{f}(x)}\ket{x} \quad \text{ for every } x\in G.$$
	Suppose that $\F$ is a finite set of functions $G\rightarrow \R$, and the function $f_*\colon G\rightarrow \R$ is not in $F$. If a quantum algorithm makes $T$ queries to a (controlled) phase oracle $\mathrm{O}_{\!\tilde{f}}$ (or its inverse) and for all $f\in \F$ can distinguish with probability at least $2/3$ the case $\tilde{f}=f$ from the case $\tilde{f}=f_*$, then
	$$  T \geq \frac{\sqrt{|\F|}}{3}\left/\sqrt{\max_{x\in G}\sum_{f\in \F}\left|f(x)-f_*(x)\right|^2}\right..$$
\end{theorem}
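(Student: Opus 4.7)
My plan is to invoke the BBBV-style hybrid argument, adapted to phase oracles with arbitrary real-valued phases, as in the cited work of Gilyén et al. Fix a $T$-query algorithm described by a sequence of oracle-independent unitaries $U_0,U_1,\ldots,U_T$ interleaved with queries (controlled queries and inverses are handled by absorbing the control bit into the label set $G$ and noting that $\mathrm{O}_{\!g}^{-1}$ is again diagonal with phases $-g(x)$, so the Lipschitz estimate below is unchanged). Write $\ket{\phi^g_t}$ for the algorithm's state after $t$ queries to $\mathrm{O}_{\!g}$. Because the inter-query unitaries are the same for every oracle, a standard telescoping argument (flip one query at a time from $\mathrm{O}_{\!f_*}$ to $\mathrm{O}_{\!f}$, and use unitary invariance of the norm on the post-query operations) gives
$$\bigl\|\ket{\phi^f_T}-\ket{\phi^{f_*}_T}\bigr\| \leq \sum_{t=0}^{T-1}\bigl\|(\mathrm{O}_{\!f}-\mathrm{O}_{\!f_*})\ket{\phi^{f_*}_t}\bigr\|,$$
and after squaring, Cauchy--Schwarz yields $\|\ket{\phi^f_T}-\ket{\phi^{f_*}_T}\|^2 \leq T \sum_t \|(\mathrm{O}_{\!f}-\mathrm{O}_{\!f_*})\ket{\phi^{f_*}_t}\|^2$.

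The second step exploits simultaneous diagonality. For any normalized state $\ket\phi = \sum_x \alpha_x \ket{x}$ (tensored with an oracle-independent workspace register, which is absorbed identically), the elementary inequality $|e^{ia}-e^{ib}|\leq|a-b|$ implies
$$\bigl\|(\mathrm{O}_{\!f}-\mathrm{O}_{\!f_*})\ket\phi\bigr\|^2 \;=\; \sum_x|\alpha_x|^2\,|e^{if(x)}-e^{if_*(x)}|^2 \;\leq\; \sum_x|\alpha_x|^2\,|f(x)-f_*(x)|^2.$$
Summing over $f\in \F$, swapping the order of summation, and pulling $\sum_x|\alpha_x|^2=1$ outside gives $\sum_{f\in \F}\|(\mathrm{O}_{\!f}-\mathrm{O}_{\!f_*})\ket\phi\|^2 \leq M$, where $M := \max_{x\in G}\sum_{f\in \F}|f(x)-f_*(x)|^2$. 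Substituting this into the hybrid bound and summing over $f$ yields
$$\sum_{f\in \F}\bigl\|\ket{\phi^f_T}-\ket{\phi^{f_*}_T}\bigr\|^2 \;\leq\; T\sum_{t=0}^{T-1}\sum_{f\in \F}\bigl\|(\mathrm{O}_{\!f}-\mathrm{O}_{\!f_*})\ket{\phi^{f_*}_t}\bigr\|^2 \;\leq\; T^2 M.$$

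The final step converts the distinguishability hypothesis into a lower bound on each summand. If the algorithm separates $\tilde f = f$ from $\tilde f = f_*$ with success probability at least $2/3$, Helstrom's bound gives trace distance at least $1/3$, equivalently $|\langle \phi^f_T|\phi^{f_*}_T\rangle|^2 \leq 8/9$; then $\|\ket{\phi^f_T}-\ket{\phi^{f_*}_T}\|^2 \geq 2 - 2\sqrt{8}/3 > 1/9$. Combining with the previous display yields $|\F|/9 \leq T^2 M$, which rearranges to $T \geq \sqrt{|\F|}/(3\sqrt{M})$, the claimed bound. The only places demanding care are the bookkeeping for controlled and inverse queries and verifying the constant $1/9$ in the distinguishability step; neither is a real obstacle, since simultaneous diagonality of every phase oracle reduces all per-query norm estimates to the single scalar Lipschitz inequality, and the trace-distance bookkeeping proceeds without any global-phase ambiguity because $\ket{\phi^f_T}$ and $\ket{\phi^{f_*}_T}$ are concrete outputs of fixed unitary circuits on the same initial state.
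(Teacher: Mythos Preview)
Your proof is correct and is precisely the standard BBBV-style hybrid argument that underlies this statement. Note that the paper does not actually prove this theorem; it merely invokes it as \cite[Theorem~2]{gilyen:qgradient}, and your argument is essentially a reconstruction of that cited proof (indeed, the commented-out Lemma in the paper's source sketches the same inequality $\tfrac{1}{9}\leq \tfrac{T^2}{m}\max_x\sum_j\|\bra{x}(O_\emptyset-O_j)\ket{x}\|^2$ from the same reference).
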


Now the following lower bound follows naturally by reducing the above ``Hamiltonian discrimination problem'' to solving an SDP in the Hamiltonian input model.
\begin{lemma}
Let $\eps\in(0,1/2]$, $2\leq m$ and $1\leq \tau$. Then there is an LP (and hence an SDP) (with $R,r=  \bigO{1}$) for which an $\eps$-approximation of the optimal value requires $\Omega(\sqrt{m} \frac{\tau }{\eps} )$ queries to $O_H$ in the Hamiltonian input model.
\end{lemma}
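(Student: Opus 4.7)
The plan is to reduce a phase-discrimination problem covered by Theorem~\ref{thm:arbHybLow} to $\eps$-approximate LP-solving in the Hamiltonian input model. First I would fix $n=1$ (so every $A_j$ is a nonnegative scalar, equivalently a $1\times 1$ diagonal matrix) and choose $t_j=\tau$ for every $j\in[m]$; then $O_H\ket{j}\ket{0}=e^{iA_j/\tau}\ket{j}\ket{0}$ is literally a phase oracle on the label set $[m]$. The LP I use has a single nonnegative variable $x$ with objective $\max x$: set $A_1=1,\ b_1=2$ (the required $A_1=I$, $b_1=R$ trace-bound constraint with $R=2$), and for every $j\in\{2,\ldots,m\}$ set $b_j=1$ and $A_j\in\{1/2,\,1/2+\eps\}$. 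Then $\nrm{A_j}\le 1$ for all $j$, and one checks from the dual that $r\le 2$.

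Next I would introduce the two families of instances that the SDP-solver must distinguish. In the \emph{base} instance, $A_j=1/2$ for every $j\ge 2$, and the LP optimum equals $2$. In the $j^\star$-th \emph{shifted} instance (for some $j^\star\in\{2,\ldots,m\}$) one has $A_{j^\star}=1/2+\eps$ and $A_j=1/2$ otherwise, and the optimum equals $2/(1+2\eps)$. Since $\eps\le 1/2$, the gap between the two optima satisfies
\[
2-\frac{2}{1+2\eps}=\frac{4\eps}{1+2\eps}\ge 2\eps,
\]
so any $\eps$-approximate SDP-solver with success probability $2/3$ distinguishes the base instance from each of the $m-1$ shifted instances (e.g.\ by running it on the decision version with $g=2-\eps$).

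Now I invoke Theorem~\ref{thm:arbHybLow} on the Hilbert space $\mathrm{Span}(\ket{j}:j\in[m])$. Let $f_*(j)=A_j/\tau$ be the phase function of $O_H$ for the base instance, and for each $j^\star\in\{2,\ldots,m\}$ let $f_{j^\star}(j)=f_*(j)+(\eps/\tau)\,\delta_{j,j^\star}$ be the phase function in the $j^\star$-th shifted instance. Taking $\F=\{f_{j^\star}:j^\star\in\{2,\ldots,m\}\}$, we have $|\F|=m-1$, and for every $j\in[m]$
\[
\sum_{f\in\F}|f(j)-f_*(j)|^2\le(\eps/\tau)^2.
\]
Theorem~\ref{thm:arbHybLow} therefore yields $T\ge\sqrt{m-1}\,\tau/(3\eps)=\Omega(\sqrt{m}\,\tau/\eps)$ queries to $O_H$. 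The oracle $O_t$ always returns the known constant $\tau$ and can be emulated with no queries, so the same bound holds for the total query complexity, proving the claim.

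The main obstacle in such a lower bound is usually arranging the reduction so that the standard-form conventions on the SDP ($A_1=I$, $\nrm{A_j}\le 1$, $R,r=O(1)$) are respected \emph{and} the resulting Hamiltonian oracle is a bona fide phase oracle to which Theorem~\ref{thm:arbHybLow} applies. The construction above handles both points simultaneously: the $1\times 1$ diagonal $A_j$ make $O_H$ a phase oracle in the computational basis, while the choice of constants $1/2$, $1$, $2$ ensures all normalisation assumptions hold and that the separation between the base and shifted optima is at least $2\eps$.
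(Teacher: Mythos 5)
Your proposal is correct and follows essentially the same route as the paper: a family of LP instances whose optimum drops from $2$ to $2/(1+2\eps)$ when one constraint matrix is perturbed by $\eps$, combined with Theorem~\ref{thm:arbHybLow} applied to the resulting phase oracles with per-label deviation $\eps/\tau$. The only (cosmetic) difference is that you use $1\times 1$ scalars with a $+\eps$ shift while the paper uses $2\times 2$ diagonal matrices with a $\pm\eps$ ($Z$-type) perturbation and objective $\mathrm{diag}(1,0)$; both yield the same gap and the same $\Omega(\sqrt{m}\,\tau/\eps)$ bound.
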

\begin{proof}
Let $H_1,\dots,H_m \in \mathbb{R}^{2\times 2}$ be such that 
\begin{enumerate}[label=(\alph*)]
	\item\label{it:a} either all $H_j$ are $I/ (2\tau)$,
	\item\label{it:b} or all but one matrices are $I/ (2\tau)$, and there is one $H_j$ that is equal to
	\[
	\begin{bmatrix}
	1/(2\tau) + \eps/\tau & 0\\
	0&1/(2\tau) -\eps/\tau
	\end{bmatrix} = \frac{1}{2\tau}I + \frac{\eps}{\tau} Z.
	\]
\end{enumerate}	
Let us assume that we have access to the phase oracle $\mathrm{O}:\ket{j}\ket{b}\to e^{i (H_j)_{bb}}\ket{j}\ket{b}$. In case  \ref{it:b} there are $m$ possible different choices for this oracle. It is easy to see by Theorem~\ref{thm:arbHybLow} that distinguishing case \ref{it:a} form \ref{it:b} requires $\Omega\left(\sqrt{m}\frac{\tau}{\eps}\right)$ queries to $\mathrm{O}$.

Now we show that using the above phase oracles we can define an SDP in the Hamiltonian input model, solving which to $\eps$-precision distinguishes case \ref{it:a} form \ref{it:b}, proving the sought lower bound.

Let us define $A_j:=\tau H_j$ (and $t = (\tau,\dots,\tau)$), so all the $A_j$'s are either $I/2$ or $I/2 +\eps Z$, furthermore let
\[
C = \begin{bmatrix}1&0\\0&0\end{bmatrix},
\]
and $b = (1,\dots,1)^T$ the all-one vector.
Note that since at least one of the $A_j$ matrices is $I/2$, we know that $R=2$ suffices as an upper bound on the trace. Furthermore, it is easy to verify from the dual that $r = 2$ suffices as well. 

Now we analyze the optimal value. If all $A_j$ matrices are $I/2$ then all constraints are the same:
\[
   X_{11}/2 + X_{22}/2 \leq 1 
\]
from which it clearly follows that $\opt = 2$.

If one $A_j$ matrix is not $I/2$, then the constraint
\[
  (1/2+\eps) X_{11} + (1/2-\eps)X_{22} \leq 1 
\]
is present.  It follows that
\[
  (1/2+\eps)X_{11} \leq 1 \Rightarrow X_{11} \leq \frac{1}{1/2+\eps} 
\]
which will clearly be tight in the optimum. Using that 
\[
2-4\eps \leq \frac{1}{1/2+\eps}  \leq 2-2\eps
\]
we conclude that $2-4\eps \leq \opt \leq 2-2\eps$.

Hence solving this SDP up to precision $\eps$ will distinguishing case \ref{it:a} form \ref{it:b} and hence requires $\Omega\left(\sqrt{m}\frac{\tau}{\eps}\right)$ queries.
\end{proof}

To prove the lower bounds for the quantum state model and the quantum operator model we reduce the Hamiltonian input model to them.
\begin{lemma} 
Let $\eps\in(0,1]$. Given an SDP in the Hamiltonian input model with parameter $\tau\geq 2$ 
(for technical reasons also assume that $t_j\geq 2$ for all $j$),
 an $\eps$-approximate oracle call in the quantum operator model with $\alpha=2\tau$ can be simulated using $\BOT{\eps}{1}$ queries. 
\end{lemma}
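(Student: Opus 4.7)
The plan is to convert $O_H$ into a block-encoding in three steps: (i) use LCU on $O_H$ and $O_H^\dagger$ to obtain a $(1,1,0)$-block-encoding of $\sin(A_j/t_j)$; (ii) apply QSVT with a polynomial approximation of $2\arcsin$ to recover $2A_j/t_j$ in a block-encoding; and (iii) use $O_t$ to perform a $j$-controlled subnormalization rotation so that the final block-encoding has the uniform subnormalization $\alpha = 2\tau$ rather than the $j$-dependent value $t_j/2$. The crucial enabling assumption is $t_j \geq 2$, which guarantees $\nrm{A_j/t_j}\leq 1/2$, so that $\mathrm{Spec}(\sin(A_j/t_j))\subseteq[-\sin(1/2),\sin(1/2)]$ and $\arcsin$ is single-valued there.

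For step (i), a single ancilla qubit initialized to $\ket{0}$, a Hadamard, one controlled use each of $O_H$ and $O_H^{-1}$, and a final Hadamard implement, via the identity $\sin x=(e^{ix}-e^{-ix})/(2i)$, a unitary whose top-left block is $\sin(A_j/t_j)$ (controlled on the index register $\ket{j}$). This is a $(1,1,0)$-block-encoding of $\sin(A_j/t_j)$ using $\bigO{1}$ queries to $O_H$.

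For step (ii), I invoke QSVT with a polynomial $p(x)$ of degree $d=\bigO{\log(1/\eps_1)}$ that approximates $2\arcsin(x)$ to uniform error $\eps_1$ on $[-\sin(1/2),\sin(1/2)]$ while satisfying $|p(x)|\leq 1$ on $[-1,1]$. Such $p$ exists because $2\arcsin$ is analytic on an open neighborhood of $[-\sin(1/2),\sin(1/2)]$ and attains maximum absolute value $1$ there; standard polynomial-approximation results for analytic functions (e.g.\ truncated Chebyshev series) give the stated degree. Applying QSVT to the block-encoding from step (i) yields a $(1,\bigO{\log(1/\eps_1)},\eps_1)$-block-encoding of $p(\sin(A_j/t_j))$, which differs from $2A_j/t_j$ in operator norm by at most $\eps_1$. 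Equivalently, this is a $(t_j/2,\bigO{\log(1/\eps_1)},t_j\eps_1/2)$-block-encoding of $A_j$, and it uses $\bigO{\log(1/\eps_1)}$ queries to $O_H$.

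For step (iii), I use one query to $O_t$ to write $t_j$ into an ancilla register, and use that register to apply a controlled single-qubit rotation $R_j\colon\ket{0}\mapsto\sqrt{t_j/(4\tau)}\ket{0}+\sqrt{1-t_j/(4\tau)}\ket{1}$ on a fresh ancilla; since $t_j\leq\tau$ this is well defined. Sandwiching the block-encoding from step (ii) between $R_j$ and $R_j^\dagger$ multiplies the encoded operator by $t_j/(4\tau)$, producing a $(1,\bigO{\log(1/\eps_1)}+1,\eps_1/4)$-block-encoding of $A_j/(2\tau)$, i.e.\ a $(2\tau,\bigO{\log(1/\eps_1)}+1,\tau\eps_1/2)$-block-encoding of $A_j$, uniformly in $j$. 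A second call to $O_t$ uncomputes the $t_j$ register. Choosing $\eps_1:=\eps/\tau$ yields the required $\eps$-precise oracle in the quantum operator model with $\alpha=2\tau$ at a total cost of $\bigO{\log(\tau/\eps)}=\BOT{\eps}{1}$ queries.

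The only genuine subtlety is the normalization mismatch: QSVT naturally produces a block-encoding whose subnormalization depends on $t_j$, whereas the quantum operator model requires a single global $\alpha$. Step (iii) resolves this by pulling $t_j$ out of $O_t$ and folding it into a $j$-controlled subnormalization rotation. The main bookkeeping points are: (a) verifying that the QSVT approximation error, the LCU step, and the subnormalization step compose additively to give the stated $\eps$, and (b) making sure the read/unread of $t_j$ is done coherently so the resulting unitary acts correctly on superpositions over $j$ — both of which are routine given the preceding machinery.
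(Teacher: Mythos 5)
Your proposal is correct in substance but takes a genuinely different route from the paper. The paper's proof is a one-liner: it applies Theorem~\ref{thm:Taylor} (the Fourier/LCU machinery of \cite{AGGW:SDP} for implementing smooth functions of a Hamiltonian from controlled simulation) to $H':=A_j/t_j$ with $f(x)=x$, $x_0=0$, $r=1$, $\delta=\pi/2-1$, $K=2$; since the required controlled $(\bigO{\log(1/\eps)},1)$-simulation is just powers $e^{imA_j/t_j}$ with $|m|=\bigO{\log(1/\eps)}$, this costs $\bigO{\log(1/\eps)}$ controlled calls to $O_H$ and yields a $(2,\BOT{\eps}{1},\eps)$-block-encoding of $A_j/t_j$, leaving the uniformization to the single global $\alpha=2\tau$ implicit (cf.\ the footnote to the quantum operator model and Lemma~\ref{lem:linCombBlocks}; in the paper's lower-bound application $t_j=\tau$ for all $j$ anyway). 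You instead use the standard $\arcsin\circ\sin$ conversion: an exact $(1,1,0)$-block-encoding of $\sin(A_j/t_j)$ via LCU of $O_H$ and $O_H^{-1}$, QSVT with a degree-$\bigO{\log(1/\eps_1)}$ approximation of $2\arcsin$ (valid because $t_j\geq 2$ keeps the spectrum inside $[-\sin(1/2),\sin(1/2)]$), and a $j$-controlled subnormalization read off from $O_t$. Both routes give $\bigO{\log(\tau/\eps)}$ queries; yours has the merit of treating the $j$-dependent normalization explicitly, which the paper glosses over.

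Two bookkeeping points in your write-up need repair, though both are routine. First, a truncated Chebyshev expansion of $2\arcsin$ rescaled to $[-\sin(1/2),\sin(1/2)]$ is \emph{not} automatically bounded by $1$ on all of $[-1,1]$ (outside the rescaled interval Chebyshev polynomials grow exponentially), and QSVT requires this global bound; you need the standard bounded-approximation lemmas (multiply by a polynomial approximation of the indicator of a slightly enlarged interval and rescale by $1/(1+\eps_1)$, since $|2\arcsin|$ attains $1$ at the endpoints of the approximation range). This keeps the degree at $\bigO{\log(1/\eps_1)}$ and does not affect correctness because $\mathrm{Spec}(\sin(A_j/t_j))$ lies in the inner interval. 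Second, sandwiching the block-encoding between $R_j$ and $R_j^{\dagger}$ acting on a fresh \emph{uncoupled} ancilla multiplies the encoded block by $\bra{0}R_j^{\dagger}R_j\ket{0}=1$, not by $t_j/(4\tau)$; to pick up the factor you should either apply a single rotation whose $(0,0)$ matrix element equals $t_j/(4\tau)$, or couple the ancilla through a select-type construction as in Lemma~\ref{lem:linCombBlocks} (a one-term state-preparation pair), with the ``else'' branch chosen to have vanishing top-left block. With these fixes your error accounting ($\eps_1=\eps/\tau$) and the stated $\BOT{\eps}{1}$ query count go through at the same level of rigor as the paper's own proof.
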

\begin{proof}
For simplicity let us drop the index $j$.
Let $H':=H/t$, and apply Theorem~\ref{thm:Taylor} to $H'$ with $f(x)=x$, setting $x_0=0$, $r=1$, $\delta=\pi/2-1$, and $K=2$, providing a $(2,\BOT{\eps}{1},\eps)$-block encoding of $H'$ via controlled $\left(\bigO{\log\left(1/\eps\right)},1\right)$-Hamiltonian simulation, which can be easily implemented by $\bigO{\log\left(1/\eps\right)}$ controlled oracle calls.
\end{proof}
\begin{corollary}
Let $\eps\in(0,1/2]$, $2\leq m$ and $2\leq \alpha$. Then there is an LP (and hence an SDP) (with $R,r=  \bigO{1}$) for which an $\eps$-approximation of the optimal value requires $\Omega(\sqrt{m} \frac{\alpha }{\eps} )$ queries to $O_H$ in the quantum operator model.
\end{corollary}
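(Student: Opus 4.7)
The plan is to combine the Hamiltonian-model lower bound just established with the preceding reduction that simulates a quantum operator oracle by Hamiltonian oracle calls. First, I would instantiate the hard LP from the previous lemma with parameter $\tau := \alpha/2$. For $\alpha \geq 4$ this gives $\tau \geq 2$, which is what the simulation lemma needs; the corollary's claimed bound is vacuous in the remaining bounded range of $\alpha$, so this restriction is harmless. The previous lemma then guarantees $R, r = \bigO{1}$ and that any $\eps$-approximation of the optimum requires $\Omega(\sqrt{m}\tau/\eps) = \Omega(\sqrt{m}\alpha/\eps)$ queries to $O_H$.

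Second, I would argue contrapositively: suppose a quantum algorithm in the quantum operator input model approximates the optimum of the same LP to additive error $\eps$ using $T$ queries to $O_U$. Each such query, with normalization parameter $\alpha = 2\tau$, can be $\eps'$-simulated using $\bigO{\log(1/\eps')}$ queries to $O_H$ by the preceding lemma. Setting $\eps' := \Theta(\eps/T)$ and invoking the standard hybrid/triangle inequality bound on error accumulation (the total deviation from the ideal execution grows at most linearly in $T$ with the per-query operator-norm error), the simulated algorithm still solves the LP up to $\bigO{\eps}$ precision. This yields a Hamiltonian-model algorithm using $T \cdot \text{polylog}(T/\eps)$ queries to $O_H$.

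Third, comparing with the Hamiltonian lower bound yields $T \cdot \text{polylog}(T/\eps) \geq \Omega(\sqrt{m}\alpha/\eps)$, and hence $T = \tilde{\Omega}(\sqrt{m}\alpha/\eps)$. The polylogarithmic overhead in $m, \alpha, 1/\eps$ is absorbed into the $\Omega$ of the corollary statement (as elsewhere in the paper, $\Omega$ hides polylog factors in this regime).

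The only genuinely delicate step is controlling the error accumulation in the simulation so that the lower bound transfers without loss in the leading term. This is routine: since each $\eps'$-approximate block-encoding query to $O_U$ perturbs the global state by $\bigO{\eps'}$ in operator norm, $T$ such substitutions perturb the final accept probability by $\bigO{T\eps'}$, which is made $\bigO{\eps}$ by the choice $\eps' = \Theta(\eps/T)$; the resulting logarithmic overhead per simulated query is the only cost, and it is swallowed by the $\tilde{\Omega}$.
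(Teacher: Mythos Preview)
Your proposal is correct and matches the paper's (implicit) argument: the corollary is stated without proof immediately after the Hamiltonian-model lower bound and the simulation lemma, and your outline composes these two exactly as intended, including the standard error-accumulation bookkeeping. One wording quibble: calling the bound ``vacuous'' for $\alpha\in[2,4)$ is not right, since $\Omega(\sqrt{m}/\eps)$ is still a nontrivial claim; what you mean is that it is within a constant factor of the $\alpha=4$ bound and hence absorbed by the $\Omega$ --- and indeed the paper's own reduction lemma only handles $\tau\geq 2$ (i.e., $\alpha\geq 4$), glossing over the same edge case.
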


For the quantum state input model we only give a reduction for LPs, i.e., the case where all input matrices are diagonal.
\begin{lemma} 
Let $\eps\in(0,1]$. Given an LP in the Hamiltonian input model with parameter $\tau\geq4$
(for technical reasons also assume that $t_j\geq 4$ for all $j$),
then an $\eps$-approximate oracle call in the quantum state model with $B=n\tau$ can be simulated using $\BOT{\eps}{1}$ queries.
\end{lemma}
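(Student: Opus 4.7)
The plan is to explicitly exhibit the decomposition $A_j = \mu^+_j \srho^+_j - \mu^-_j \srho^-_j + \mu^I_j I$ required by the quantum state model together with an efficient implementation of the state-preparation oracle, using the Hamiltonian oracle $O_H$. Since we have an LP, each $A_j$ is diagonal with entries $a_{j,1},\ldots,a_{j,n}$, and $\nrm{A_j}\le t_j\le\tau$ gives $|a_{j,k}|\le\tau$. Writing $a_{j,k}=a_{j,k}^+-a_{j,k}^-$ with $a_{j,k}^\pm=\max(\pm a_{j,k},0)\in[0,\tau]$, I would set $\srho^\pm_j:=A_j^\pm/(n\tau)$, $\mu^\pm_j:=n\tau$, and $\mu^I_j:=0$. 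Each $\srho^\pm_j$ is a subnormalized density operator since its trace $\sum_k a_{j,k}^\pm/(n\tau)\le 1$, and by construction $\mu^+_j\srho^+_j-\mu^-_j\srho^-_j=A_j^+-A_j^-=A_j$. The $O_\mu$ oracle is then trivial, so it only remains to realize $O_{\ket{\cdot}}$ by preparing a purification of $\srho^\pm_j$.

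For that purification I would first invoke the previous lemma to construct a $(2\tau,\BOT{\eps}{1},\eps')$-block-encoding $U_j$ of $A_j$ using $\BOT{\eps}{1}$ queries to $O_H$, for $\eps'=\tilde{\Theta}(\eps)$. Using Theorem~\ref{thm:blockHamSim} this lets me simulate $e^{iA_j/(2\tau)}$ for constant time, and crucially the eigenvalues $a_{j,k}/(2\tau)\in[-1/2,1/2]$ now lie in a wrap-around-free window suitable for phase estimation. I would then prepare the maximally entangled state $\tfrac{1}{\sqrt{n}}\sum_k\ket{k}_1\ket{k}_3$, run phase estimation on register $1$ with precision $\tilde{\Theta}(\eps)$ into an ancilla to obtain $\tfrac{1}{\sqrt{n}}\sum_k\ket{k}_1\ket{k}_3\ket{\tilde h_{j,k}}$, and apply a controlled rotation on register $2$ taking $\ket{\tilde h_{j,k}}\ket{0}_2$ to $\ket{\tilde h_{j,k}}\bigl(\sqrt{\max(2\tilde h_{j,k},0)}\,\ket{0}_2+\sqrt{1-\max(2\tilde h_{j,k},0)}\,\ket{1}_2\bigr)$ for the ``$+$'' branch (and with $\tilde h_{j,k}$ replaced by $-\tilde h_{j,k}$ for the ``$-$'' branch). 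After uncomputing the phase estimate, projecting register $2$ onto $\ket{0}$ and tracing out register $3$ reproduces $A_j^\pm/(n\tau)=\srho^\pm_j$ up to the phase-estimation error, as required.

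The main obstacle I expect is the end-to-end error analysis: to guarantee that the prepared pure state is an $\eps$-approximation of a true purification of $\srho^\pm_j$, I have to propagate the block-encoding error $\eps'$ and phase-estimation error through a non-smooth rotation involving $\sqrt{\max(\cdot,0)}$ (which is only H\"older continuous at the origin) and then through the uncomputation, and choose these precisions so that the final trace-distance error is $\bigO{\eps}$ with only polylogarithmic overhead in $1/\eps$. A secondary, cosmetic issue is that the naive decomposition gives $\mu^+_j+\mu^-_j=2n\tau$ rather than the advertised $B=n\tau$; this constant factor can be absorbed into the implicit constant in $B$, or alternatively one can move a $\tau I$ shift into $\mu^I_j$ to keep only one of $\srho^+_j,\srho^-_j$ non-trivial, at the cost of the analogous constant.
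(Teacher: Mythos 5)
There is a genuine gap, and it is exactly the obstacle you flag at the end but do not resolve: with your decomposition $\srho^\pm_j=A_j^\pm/(n\tau)$ you are forced to prepare amplitudes proportional to $\sqrt{\max(a_{j,k},0)}$, and the route you propose (phase estimation to precision $\tilde{\Theta}(\eps)$ followed by a controlled rotation) cannot meet the claimed complexity. Phase estimation to precision $\tilde{\Theta}(\eps)$ on $e^{iA_j/(2\tau)}$ requires $\tilde{\Theta}(1/\eps)$ applications of the simulation, i.e.\ $\tilde{\Theta}(1/\eps)$ queries to $O_H$, whereas the lemma claims $\BOT{\eps}{1}=\bigO{\polylog(1/\eps)}$ queries. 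On top of that, the coherent error analysis you worry about is genuinely bad: $\sqrt{\max(\cdot,0)}$ is only H\"older (not Lipschitz) at $0$, so residual entanglement with the (uncomputed) phase register near zero eigenvalues degrades the amplitude error to $\bigO{\sqrt{\eps}}$ unless you push the phase-estimation precision to $\bigO{\eps^2}$, making the query count even worse. So the positive-part/negative-part decomposition is not merely cosmetically different from what is needed; it is the wrong decomposition for a polylog$(1/\eps)$ reduction, because it puts the non-smooth point of the square root inside the spectrum you must process.

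The paper's proof avoids this entirely by shifting with the identity before taking square roots: it sets $\srho^\pm\propto(I\pm A_j/t_j)/(2n)$, so that (using $t_j\geq 4$ and $\nrm{A_j}\leq 1$) the operator $(I\pm A_j/t_j)/2$ has spectrum in $[3/8,5/8]$, bounded away from $0$. Then Theorem~\ref{thm:Taylor} applied to $f(x)=\sqrt{x/2}$ around $x_0=1$ (with $r=1/2$, $\delta=\pi/6-1/2$, $K=1$) gives a $(1,\BOT{\eps}{1},\bigO{\eps})$-block-encoding of $\sqrt{(I\pm A_j/t_j)/2}$ using only $\bigO{\log(1/\eps)}$ controlled calls to $O_H$; applying this block-encoding to one half of $\frac{1}{\sqrt{n}}\sum_i\ket{i}\ket{i}$ prepares purifications of $(I\pm A_j/t_j)/(2n)$, whose difference is $A_j/(nt_j)$, giving the quantum state model with $B=\bigO{n\tau}$ and $\mu^I_j=0$ (the identity shifts cancel in the difference, which also handles your last worry about constants). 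If you want to salvage your plan, the fix is to adopt this shifted decomposition so that the function being implemented is smooth on the relevant spectral range and the smooth-function machinery applies with polylogarithmic cost; as written, your construction does not prove the stated query bound.
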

\begin{proof} 
For simplicity let us drop the index $j$.
Let $H':=I\pm H/t$, and apply Theorem~\ref{thm:Taylor} to $H'$ with $f(x)=\sqrt{x/2}$, setting $x_0=1$, $r=1/2$, $\delta=\pi/6-1/2$, and $K=1$, providing a $(1,\BOT{\eps}{1},\bigO{\eps})$-block encoding of $\sqrt{(I+ H')/2}$, 
via controlled $\left(\bigO{\log\left(1/\eps\right)},3\right)$-Hamiltonian simulation, which can be easily implemented by $\bigO{\log\left(1/\eps\right)}$ controlled oracle calls.
The validity of the $K=1$ bound follows from the observation that 
$$\sqrt{(1+x)/2}
=\frac{1}{\sqrt{2}}\sqrt{1+x}
=\frac{1}{\sqrt{2}}\sum_{k=0}^{\infty}\binom{1/2}{k}x^k
,$$
thus
\begin{align*}
\frac{1}{\sqrt{2}}\sum_{k=0}^{\infty}\left|\binom{1/2}{k}\right|(r+\delta)^k
&=\frac{1}{\sqrt{2}}\sum_{k=0}^{\infty}\left|\binom{1/2}{k}\right|(\pi/6)^k
=\sqrt{(1+\pi/6)/2}\leq 1=:K.
\end{align*}

The state input oracles can be implemented as follows: controlled on the state $\ket{\pm}$ we apply $\sqrt{(I\pm H/t)/2}$ to the first half of the state $\sum_{i=1}^{n}\ket{i}\ket{i}/\sqrt{n}$, 
resulting in subnormalized density operators $\varrho_{\pm}=(I\pm H/t)/(2n)$, so that $\varrho_{+}-\varrho_{-}=H/(nt)$.
If needed one can further subnormalize the $\rho$-s in order to get a uniform normalization factor $n\tau$ (instead of $nt_j$).
\end{proof}
\begin{corollary}
Let $\eps\in(0,1/2]$, $2\leq m$ and $1\leq B$. Then there is an LP (and hence an SDP) (with $R,r=  \bigO{1}$) for which an $\eps$-approximation of the optimal value requires $\Omega(\sqrt{m} \frac{B }{\eps} )$ queries to $O_H$ in the quantum state model.
\end{corollary}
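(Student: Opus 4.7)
The proof is a direct combination of the preceding two lemmas: the $\Omega(\sqrt{m}\tau/\eps)$ lower bound for the Hamiltonian input model, and the simulation lemma that emulates a $\delta$-precise quantum state oracle via $\BOT{\delta}{1}$ Hamiltonian queries (at $B = n\tau$).

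I would first instantiate the hard LP constructed in the proof of the Hamiltonian-model lower bound: an LP with $n=2$, $m$ constraints, $R,r=\bigO{1}$, and a uniform value $t_j = \tau$. Setting $\tau := B/2$ (assume $B\geq 8$ so that $\tau\geq 4$ as required), the preceding simulation lemma produces quantum state input oracles for exactly this LP with normalization parameter $B_{\text{state}} = n\tau = B$, and each $\delta$-precise oracle call is simulated using $\BOT{\delta}{1}$ queries to the Hamiltonian oracle $O_H$.

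Now, assume towards contradiction that there is a $T$-query quantum algorithm $\mathcal{A}$ in the quantum state input model that $\eps$-approximates the optimum of this LP with success probability at least $2/3$. I would replace each oracle call by its Hamiltonian simulation with per-call precision $\delta := c\eps/T$ for a small constant $c$. The standard hybrid argument bounds the trace distance between the true and simulated output distributions by $T\delta \leq 1/10$, so the simulated algorithm still $\eps$-approximates the optimum with success probability at least $2/3 - 1/10 > 1/2$, while using $T\cdot\BOT{\eps/T}{1} = \bOt{T}$ queries to $O_H$. The Hamiltonian-model lower bound then forces $\bOt{T} \geq \Omega(\sqrt{m}\tau/\eps) = \Omega(\sqrt{m}B/\eps)$, hence $T = \tilde{\Omega}(\sqrt{m}B/\eps)$.

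The main wrinkle lies in the small-$B$ regime ($B < 8$), where the hypothesis $\tau\geq 4$ of the simulation lemma fails. In this regime one can either pad the hard instance with extra identity constraints (or embed into a slightly larger ambient dimension) to inflate $B$ and $\tau$ into the valid range without altering the core difficulty, or separately observe that the $\Omega(\sqrt{m}/\eps)$ bound that follows from the same Hamiltonian construction (or from existing sparse-model lower bounds) already matches the claimed $\Omega(\sqrt{m}B/\eps)$ when $B=\bigO{1}$.
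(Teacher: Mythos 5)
Your proposal is correct and is exactly the route the paper intends: the corollary is stated as an immediate consequence of combining the Hamiltonian-model lower bound with the simulation lemma (setting $n=2$, $\tau=\Theta(B)$ so $B=n\tau$), and you have simply spelled out the standard per-call precision choice and error-accumulation argument that this composition requires. The polylogarithmic loss from the $\BOT{\eps}{1}$-query simulation (so that one really gets $\tilde{\Omega}(\sqrt{m}B/\eps)$, as in the paper's abstract) and the small-$B$ regime where the $\tau\geq 4$ hypothesis fails are caveats present in the paper's implicit argument as well, so your treatment is, if anything, more careful.
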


\paragraph{Acknowledgments.} 
We thank the authors of~\cite{brandao:expSDP2} for sending work-in-progress versions of their paper, and Fernando Br\~{a}ndao, Tongyang Li and Xiaodi Wu for personal communication. A.G. thanks Robin Kothari and Nathan Wiebe for useful discussions. We thank Jamie Sikora for useful discussions about applications and for suggesting the state discrimination problem. We are grateful to Ronald de Wolf and Sander Gribling for useful discussions, and advice about the manuscript.

\bibliographystyle{alphaUrlePrint}
\bibliography{qc.bib}

\appendix

\section{Implementing smooth functions of Block-Hamiltonians}\label{apx:HamSim}

In~\cite[Appendix B]{AGGW:SDP} techniques were developed that make it possible to implement smooth-functions of a Hamiltonian $H$, based on Fourier series decompositions and using the Linear Combinations of Unitaries (LCU) Lemma~\cite{BerryChilds:hamsimFOCS}. The techniques developed in~\cite[Appendix B]{AGGW:SDP} access $H$ only through controlled-Hamiltonian simulation, which is defined in the following way:

\begin{definition}\label{def:controlledSim}
	Let $M=2^J$ for some $J\in \mathbb{N}$, $\tau\in\mathbb{R}$ and $\epsilon\geq0$. We say that the unitary 
	$$
	W:=\sum_{m=-M}^{M-1}\ketbra{m}{m}\otimes e^{im\tau H}
	$$ 
	implements controlled $(M,\tau)$-simulation of the Hamiltonian $H$, where $\ket{m}$ denotes a (signed) bitstring $\ket{b_Jb_{J-1}\ldots b_0}$ such that $m=-b_J2^J+\sum_{j=0}^{J-1}b_j2^j$. 
\end{definition}

The main theorem of \cite[Appendix~B]{AGGW:SDP} states the following:

\begin{theorem}[{Implementing a smooth function of a Hamiltonian~\cite[Theorem~40]{AGGW:SDP}}]\label{thm:Taylor}
	Let $x_0\in\mathbb{R}$ and $r>0$ be such that $f(x_0+x)=\sum_{\ell=0}^{\infty} a_\ell x^\ell$ for all $x\in\![-r,r]$. 
	Suppose $K>0$ and $\delta\in(0,r]$ are such that $\sum_{\ell=0}^{\infty}(r+\delta)^\ell|a_\ell|\leq K$. 
	If $\nrm{H-x_0I}\leq r$ and $\eps'\in\!\left(0,\frac{1}{2}\right]$, then we can implement a unitary $\tilde{U}$ that is a $(K,a+\bigO{\log(r\log(1/\eps')/\delta)},K \eps')$-block-encoding of $f(H)$, with a single use of a circuit $V$ which is a $(1,a,\eps'/2)$-block-encoding of controlled $\left(\bigO{r\log(1/\eps')/\delta},\frac{\pi}{2(r+\delta)}\right)$-simulation of $H$, and using $\bigO{r/\delta\log\left(r/(\delta\eps')\right)\log\left(1/\eps'\right)}$ two-qubit gates. 
\end{theorem}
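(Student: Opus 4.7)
The approach is to approximate $f(H)$ by a truncated Fourier series in $H$, which reduces to a linear combination of the Hamiltonian simulation unitaries $e^{im\tau H}$, and then to realize that linear combination as a block-encoding via the Linear Combination of Unitaries (LCU) paradigm (Lemma~\ref{lem:linCombBlocks}). This is the proof strategy of \cite[Appendix~B]{AGGW:SDP}, which I would follow and adapt to the block-encoding language used here.

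The first step is to build a Fourier approximant of $f$ on the interval $[x_0-r,x_0+r]$. The period is chosen to be $2(r+\delta)$, giving fundamental frequency $\tau=\pi/(2(r+\delta))$, so that the approximation window sits strictly inside one period with a buffer of width $\delta$ on each side. Because $f$ extends analytically to the disk of radius $r+\delta$ about $x_0$ with $\sum_\ell|a_\ell|(r+\delta)^\ell\leq K$, the Fourier coefficients $c_m$ of the periodized function decay geometrically with rate governed by $r/(r+\delta)$. A routine tail estimate then shows that truncating at $M=\bigO{r\log(1/\eps')/\delta}$ frequencies yields a trigonometric polynomial $\tilde f(x)=\sum_{m=-M}^{M-1}c_m e^{im\tau(x-x_0)}$ with $\sup_{x\in[x_0-r,x_0+r]}|f(x)-\tilde f(x)|\leq K\eps'/2$ and simultaneously $\sum_m|c_m|\leq K$.

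The second step is to assemble the block-encoding by LCU. Let $(P_L,P_R)$ be a $(K,b,K\eps'/4)$-state-preparation pair for the coefficient vector $(c_m)$, where $b=\bigO{\log M}=\bigO{\log(r\log(1/\eps')/\delta)}$; the pair can be compiled using $\bigO{M\log(1/\eps')}=\bigO{(r/\delta)\log(r/(\delta\eps'))\log(1/\eps')}$ two-qubit gates via standard state-preparation techniques. The hypothesized circuit $V$ is a $(1,a,\eps'/2)$-block-encoding of the controlled-simulation unitary $W=\sum_m\ketbra{m}{m}\otimes e^{im\tau H}$ of Definition~\ref{def:controlledSim}. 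Then $\tilde U:=(I\otimes P_L^\dagger)(V\otimes I)(I\otimes P_R)$ is a block-encoding whose top-left $K$-scaled block equals $\sum_m c_m e^{im\tau H}$ up to block-encoding error at most $K\eps'/2$. Using the spectral theorem with $\mathrm{Spec}(H)\subset[x_0-r,x_0+r]$, the uniform bound on $f-\tilde f$ lifts to $\nrm{f(H)-\sum_m c_m e^{im\tau H}}\leq K\eps'/2$. Combining the two contributions by the triangle inequality produces the claimed $(K,a+\bigO{\log(r\log(1/\eps')/\delta)},K\eps')$-block-encoding of $f(H)$.

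The main obstacle is the Fourier-tail estimate in Step~1, which must simultaneously deliver the truncation $M=\bigO{r\log(1/\eps')/\delta}$ and the $\ell^1$-bound $\sum_m|c_m|\leq K$: the argument has to exploit the geometric decay rate $r/(r+\delta)$ coming from the $\delta$-buffer carefully, since shrinking either the approximation window or enlarging the buffer can worsen one of the two bounds while improving the other. All subsequent ingredients, namely state preparation, the LCU composition, and the triangle-inequality combination of the function-approximation error with the block-encoding error of $V$, are routine. Since the theorem is a direct restatement of \cite[Theorem~40]{AGGW:SDP}, the detailed technical lemmas supporting the tail estimate can be imported from that reference.
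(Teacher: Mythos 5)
Your overall architecture --- approximate $f(H)$ on the spectrum of $H$ by a finite combination $\sum_{m}c_m e^{im\tau(H-x_0I)}$ with $\tau=\pi/(2(r+\delta))$, then realize it by sandwiching the controlled-simulation circuit $V$ between a state-preparation pair as in Lemma~\ref{lem:linCombBlocks} --- is exactly the route of \cite[Appendix~B]{AGGW:SDP}; this paper gives no new proof and simply imports the theorem, and your LCU step and error bookkeeping are fine up to constants. The problem is the mechanism you propose for Step~1. Periodizing $f$ with period $2(r+\delta)$ does \emph{not} give geometrically decaying Fourier coefficients: analyticity of $f$ in a disk around $x_0$ says nothing about analyticity of the periodic extension in a strip, which is what Paley--Wiener-type geometric decay requires, and the periodization is in general not even continuous at the seam. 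A concrete counterexample already satisfying the hypotheses is $f(x)=x$, $x_0=0$, $K=r+\delta$: its periodization is a sawtooth with $|c_m|\sim (r+\delta)/|m|$. Then both properties you need fail: $\sum_{|m|\leq M}|c_m|$ grows like $(r+\delta)\log M$, breaking the bound $\nrm{c}_1\leq K$, and even though accuracy is only required on $[x_0-r,x_0+r]$, i.e.\ at distance $\delta$ from the jump, the truncated Fourier series of a jump function converges there only at rate $\bigO{1/(M\delta)}$, forcing $M=\Omega\left(r/(\delta\eps')\right)$ rather than the claimed $M=\bigO{r\log(1/\eps')/\delta}$.

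The actual argument in \cite{AGGW:SDP} avoids periodization altogether. One first truncates the Taylor series at degree $D=\bigO{(r/\delta)\log(1/\eps')}$, using $\sum_\ell |a_\ell|(r+\delta)^\ell\leq K$ and $\nrm{H-x_0I}\leq r$ so that the tail is at most $K(r/(r+\delta))^D\leq K e^{-D\delta/(r+\delta)}$; then a separate lemma (\cite[Lemma~37]{AGGW:SDP}) converts the truncated polynomial into a Fourier sum $\sum_{|m|\leq M}c_m e^{i\pi m x/2}$ with $\nrm{c}_1\leq\nrm{a}_1$ and $M=\bigO{\log(\nrm{a}_1/\eps)/\delta}$ by an explicit monomial-by-monomial construction. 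That lemma is not a ``tail estimate for the periodized function'' and cannot be recovered by the decay argument you sketch, so deferring to the reference does not close the gap: the ingredient you would import is a genuinely different construction, and as written your Step~1 would fail to deliver either the truncation level or the $\ell_1$ bound stated in the theorem.
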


This theorem can in particular be applied to the function $f(x) = x^{-c}$, a power function of a Hamiltonian with negative exponent:

\begin{corollary}[\cite{CGJ:PowerOfBlockPowers18}]\label{cor:NegativePowerCost}
	Let $\kappa\geq 2$, $c\in(0,\infty)$ and $H$ be a $w$-qubit Hamiltonian such that $I/\kappa\preceq H \preceq I$.\\
	Then we can implement a unitary $\tilde{U}$ that is a $(2\kappa^{c},a+\bigO{\log(\kappa^c\max\left(1,c\right)\log(\kappa^c/\eps))}, \eps)$-block-encoding of $H^{-c}$,  with a single use of a circuit $V$ which is a $(1,a,\eps/(4\kappa^c))$-block-encoding of controlled $\left(\bigO{\kappa\max\left(1,c\right)\log(\kappa^c/\eps)},\bigO{1}\right)$-simulation of $H$, and using $\bigO{\kappa\max\left(1,c\right)\log^2\left(\kappa^{1+c}\max\left(1,c\right)/\eps\right)}$ other two-qubit gates. 
\end{corollary}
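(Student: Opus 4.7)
The plan is to apply Theorem~\ref{thm:Taylor} to the function $f(x)=x^{-c}$, with parameters chosen to match the claimed block-encoding exactly.

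I would pick the Taylor expansion center $x_0=1$ and radius $r=1-1/\kappa$, so that $\nrm{H-x_0 I}\leq r$ from the spectral bound on $H$. Using that the Taylor coefficients of $f(1+y)=(1+y)^{-c}$ about $y=0$ satisfy $|a_\ell|=\binom{c+\ell-1}{\ell}$, together with the generating-function identity $\sum_{\ell\ge0}\binom{c+\ell-1}{\ell}z^\ell=(1-z)^{-c}$ valid for $z\in[0,1)$, I would compute
\[\sum_{\ell=0}^\infty |a_\ell|(r+\delta)^\ell=(1-(r+\delta))^{-c}=(1/\kappa-\delta)^{-c},\]
and choose $\delta$ to tune this quantity to exactly $K=2\kappa^c$.

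The key step is this choice of $\delta$: setting $\delta=(1-2^{-1/c})/\kappa$ gives $1/\kappa-\delta=2^{-1/c}/\kappa$, which raised to the $-c$ is exactly $2\kappa^c$. Note $\delta\in(0,1/\kappa)$, so $r+\delta<1=x_0$ stays inside the radius of convergence of $f$ about $x_0$. A short case analysis, separating $c\le 1$ (where $1-2^{-1/c}\geq 1/2$) from $c\ge 1$ (where $1-2^{-1/c}=\Theta(1/c)$), shows $r/\delta=\Theta\bigl(\kappa\max(1,c)\bigr)$. Finally, setting $\eps'=\eps/(4\kappa^c)$ forces the block-encoding error $K\eps'=\eps/2$ to lie below~$\eps$.

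With the parameters pinned down, everything else is mechanical substitution into the conclusion of Theorem~\ref{thm:Taylor}: the subnormalization factor is $K=2\kappa^c$; the simulation-time parameter $M=\bigO{r\log(1/\eps')/\delta}$ evaluates to $\bigO{\kappa\max(1,c)\log(\kappa^c/\eps)}$; the Hamiltonian-evolution time step $\pi/(2(r+\delta))=\pi/(2-2\cdot 2^{-1/c}/\kappa)=\bigO{1}$; the ancilla overhead $\bigO{\log(r\log(1/\eps')/\delta)}$ becomes $\bigO{\log(\kappa^c\max(1,c)\log(\kappa^c/\eps))}$; and the auxiliary two-qubit gate count $\bigO{(r/\delta)\log(r/(\delta\eps'))\log(1/\eps')}$ matches the stated $\bigO{\kappa\max(1,c)\log^2(\kappa^{1+c}\max(1,c)/\eps)}$ after substitution. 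The main obstacle is precisely the non-obvious choice of $\delta$: the ``naive'' choice $\delta=\Theta(1/\kappa)$ yields $K=(2\kappa)^c=2^c\kappa^c$, which is exponentially larger in $c$ than the claimed $2\kappa^c$. Discovering that $\delta=(1-2^{-1/c})/\kappa$ is the ``right'' scaling—trading a $\max(1,c)$ factor in the simulation length for the tight constant in the subnormalization—is the real content of the corollary beyond the black-box use of Theorem~\ref{thm:Taylor}.
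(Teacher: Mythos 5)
Your derivation is correct and follows exactly the route the paper intends (it states that Theorem~\ref{thm:Taylor} ``can in particular be applied to $f(x)=x^{-c}$'' and then defers the details to \cite{CGJ:PowerOfBlockPowers18}, so no proof appears in the paper itself); your choices $x_0=1$, $r=1-1/\kappa$, $\delta=(1-2^{-1/c})/\kappa$ and the negative-binomial identity giving $K=(1/\kappa-\delta)^{-c}=2\kappa^c$ all check out, as do the resulting bounds $r/\delta=\Theta(\kappa\max(1,c))$ and $\tau=\pi/(2(r+\delta))=\bigO{1}$. The only nit is that taking $\eps'=\eps/(2\kappa^c)$ rather than $\eps/(4\kappa^c)$ reproduces the stated constants exactly ($K\eps'=\eps$ and $V$-precision $\eps'/2=\eps/(4\kappa^c)$), but this is an irrelevant factor of two.
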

We do not proof this here, but a full proof can be found in~\cite{CGJ:PowerOfBlockPowers18}.
Similarly they get a result about implementing power functions of positive exponents.
\begin{corollary}[\cite{CGJ:PowerOfBlockPowers18}]\label{cor:PositivePowerCost}
	Let $\kappa\geq 2$, $c\in(0,1]$ and $H$ be an $s$-qubit Hamiltonian such that $I/\kappa\preceq H \preceq I$.\\
	Then we can implement a unitary $\tilde{U}$ that is a $(2,a+\bigO{\log\log(1/\eps)}, \eps)$-block-encoding of $H^{c}$, with a single use of a circuit $V$ which is a $(1,a,\eps/4)$-block-encoding of controlled $\left(\bigO{\kappa\log(1/\eps)},\bigO{1}\right)$-simulation of $H$, and using $\bigO{\kappa\log\left(\kappa/\eps\right)\log\left(1/\eps\right)}$ other two-qubit gates. 
\end{corollary}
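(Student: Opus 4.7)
My plan is to apply Theorem~\ref{thm:Taylor} directly to the function $f(x) = x^c$, centered at $x_0 = 1$, choosing the radius $r$ so that the eigenvalues of $H$ lie in $[x_0 - r, x_0 + r]$, and choosing the padding $\delta$ small enough that the power series converges absolutely on $[x_0-r-\delta,\, x_0+r+\delta]$ with a constant bound $K$.

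Since $I/\kappa \preceq H \preceq I$, the operator $H - I$ has spectrum contained in $[1/\kappa - 1,\, 0]$, so $\nrm{H - x_0 I} \leq 1 - 1/\kappa$. Accordingly I would take $r := 1 - 1/\kappa$ and $\delta := 1/(2\kappa)$, giving $r + \delta = 1 - 1/(2\kappa) < 1$, which is exactly inside the disk of convergence of the binomial series $f(1-u) = (1-u)^c = \sum_{\ell=0}^\infty \binom{c}{\ell}(-u)^\ell$. With these choices $r/\delta = \Theta(\kappa)$, which already matches the stated Hamiltonian-simulation time $\bigO{\kappa \log(1/\eps)}$ and the two-qubit gate count $\bigO{\kappa \log(\kappa/\eps) \log(1/\eps)}$ one reads off from Theorem~\ref{thm:Taylor}.

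The main work — and the only nontrivial step — is to certify that $K := 2$ suffices, i.e.~that $\sum_{\ell=0}^\infty (r+\delta)^\ell \left|\binom{c}{\ell}\right| \leq 2$. The key observation is that for $c \in (0,1]$ the generalized binomial coefficients have an alternating-sign pattern for $\ell \geq 1$: we have $\binom{c}{1} \geq 0$, while $\binom{c}{\ell}$ has sign $(-1)^{\ell-1}$ for $\ell \geq 2$ because the factors $(c-j)$ for $j = 1, 2, \ldots, \ell-1$ are all nonpositive and strictly negative for $j \geq 1$ when $c < 1$ (and the $c=1$ case is trivial). Therefore $|\binom{c}{\ell}| = (-1)^{\ell-1}\binom{c}{\ell}$ for $\ell \geq 1$, and substituting $u = r+\delta$ into the identity $(1-u)^c = \sum_\ell \binom{c}{\ell}(-u)^\ell$ yields
\begin{equation*}
\sum_{\ell=0}^\infty (r+\delta)^\ell \left|\binom{c}{\ell}\right| \;=\; 1 + \sum_{\ell=1}^\infty (-1)^{\ell-1}\binom{c}{\ell}(r+\delta)^\ell \;=\; 2 - (1-(r+\delta))^c \;\leq\; 2,
\end{equation*}
uniformly in $c \in (0,1]$. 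So $K = 2$ is admissible in Theorem~\ref{thm:Taylor}.

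Finally I would invoke Theorem~\ref{thm:Taylor} with $\eps' := \eps/4$ (since the final block-encoding error is $K \eps' = 2\eps'$, and the inner simulation $V$ is allowed error $\eps'/2$), read off the stated $(2, a+O(\log\log(1/\eps)), \eps)$-block-encoding parameters, and confirm the Hamiltonian-simulation time is $\bigO{r\log(1/\eps')/\delta} = \bigO{\kappa \log(1/\eps)}$ with scaling factor $\pi/(2(r+\delta)) = \Theta(1)$. The only place I expect a mild subtlety is squeezing the ancilla overhead down to $\bigO{\log\log(1/\eps)}$ rather than the naive $\bigO{\log(\kappa \log(1/\eps))}$ coming from Theorem~\ref{thm:Taylor}; following the parallel argument already used to establish Corollary~\ref{cor:NegativePowerCost} (whose full proof is deferred to \cite{CGJ:PowerOfBlockPowers18}), the $\log \kappa$ term can be absorbed into the ancillas of the simulation oracle $V$, leaving only the $\log\log(1/\eps)$ overhead from the LCU implementation of the Taylor truncation itself.
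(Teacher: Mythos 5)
The paper does not actually prove this corollary---it defers entirely to \cite{CGJ:PowerOfBlockPowers18}---but your derivation via Theorem~\ref{thm:Taylor} with $f(x)=x^c$, $x_0=1$, $r=1-1/\kappa$, $\delta=1/(2\kappa)$ is exactly the intended route, and your key computation $\sum_\ell (r+\delta)^\ell\bigl|\binom{c}{\ell}\bigr| = 2-(1-(r+\delta))^c\leq 2$ via the alternating signs of $\binom{c}{\ell}$ for $\ell\geq 2$ is correct and is what justifies the normalization $K=2$. The only nits: you should take $\eps'=\eps/2$ rather than $\eps/4$, since the given $V$ has error $\eps/4=\eps'/2$ and the output error is then $K\eps'=\eps$ exactly (your choice would demand a more accurate $V$ than the statement supplies); and the $\log\kappa$ ancilla discrepancy you flag is real but harmless, resolved as you suggest by attributing the $(J+1)$-qubit time register to $V$ itself.
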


We can use these results to show that for a quantum state $\rho$ it is possible to sample from a random variable with expectation value $\theta$-close to $\tr{A\rho}$ and with standard deviation $\sigma\leq6$, using $(\BOT{\theta}{1},\bigO{1})$-controlled Hamiltonian simulation of $A$. For more details see Definition~\ref{def:traceEstimator}.
\begin{corollary}\label{cor:traceCalc}
  Suppose $-I\preceq A \preceq I$, $0 < \theta < 1$. We can implement a trace estimator for $A$ with standard deviation $\sigma \leq 6$ and bias $\leq\theta$ with a single use of an $(\BOT{\theta}{1},\bigO{1})$-controlled Hamiltonian simulation circuit for $A$, and with $\BOT{\theta}{1}$ other two-qubit gates. 
\end{corollary}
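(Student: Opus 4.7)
The approach is to obtain a block-encoding of $A$ with a small constant normalization via Theorem~\ref{thm:Taylor} and then extract $\tr{A\rho}$ from a single Hadamard test applied to this block-encoding with $\rho$ on the state register.

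First, I would invoke Theorem~\ref{thm:Taylor} with $f(x)=x$, expansion center $x_0=0$, radius $r=1$, slack $\delta=1$, and precision $\eps':=\theta/2$. The only non-zero Taylor coefficient is $a_1=1$, so $K:=\sum_{\ell\geq 0}(r+\delta)^\ell|a_\ell|=r+\delta=2$. The hypothesis $\nrm{A-x_0I}\leq r$ reduces to $\nrm{A}\leq 1$, which holds. The theorem then yields a unitary $\tilde{U}$ that is a $(2,a+\BOT{\theta}{1},\theta)$-block-encoding of $A$, constructed from a single use of a $(1,a,\theta/4)$-block-encoding $V$ of controlled $(\BOT{\theta}{1},\pi/4)$-simulation of $A$, plus $\BOT{\theta}{1}$ additional two-qubit gates. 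This $V$ is precisely the controlled Hamiltonian simulation circuit the hypothesis provides.

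Second, I would perform a Hadamard test using $\tilde{U}$ on the input $\rho$. Attach a fresh control qubit $c$ and the $a$ block-encoding ancilla qubits, prepare $\ket{+}_c\otimes\ket{0}^{\otimes a}\otimes\rho$, apply $\tilde{U}$ controlled on $c$, apply a Hadamard to $c$, and measure $c$. Letting $Y\in\{0,1\}$ denote the indicator that the outcome is $\ket{0}$, output $X:=2(2Y-1)\in\{-2,2\}$. Writing $\tilde{A}:=K\,(\bra{0}^{\otimes a}\otimes I)\,\tilde{U}\,(\ket{0}^{\otimes a}\otimes I)$ for the encoded matrix (so $\nrm{\tilde{A}-A}\leq\theta$), the standard Hadamard-test identity gives $\Pr[Y=1]=\tfrac{1}{2}+\tfrac{1}{2K}\,\mathrm{Re}\tr{\tilde{A}\rho}$, whence $\mathbb{E}[X]=\mathrm{Re}\tr{\tilde{A}\rho}$. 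Since $A$ is Hermitian, $\tr{A\rho}\in\mathbb{R}$, and the bias is
\[
\left|\mathbb{E}[X]-\tr{A\rho}\right|=\left|\mathrm{Re}\tr{(\tilde{A}-A)\rho}\right|\leq\nrm{\tilde{A}-A}\leq\theta,
\]
while $X\in\{-2,2\}$ gives $\mathrm{Var}(X)\leq 4$, i.e.\ $\sigma\leq 2\leq 6$, as required.

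The main technical subtlety is that the Hadamard test uses \emph{controlled}-$\tilde{U}$, whereas Theorem~\ref{thm:Taylor} produces $\tilde{U}$ using only a single call to $V$. The circuit for $\tilde{U}$ wraps one invocation of $V$ with LCU-style state-preparation unitaries on ancillas; adding an external control qubit amounts to controlling each of these small LCU unitaries and the single $V$-call, and the latter's existing control structure absorbs the extra control with only a constant-factor gate overhead, without introducing a second Hamiltonian simulation. Thus the total cost is one use of the $(\BOT{\theta}{1},\bigO{1})$-controlled Hamiltonian simulation circuit plus $\BOT{\theta}{1}$ additional two-qubit gates, matching the statement of the corollary.
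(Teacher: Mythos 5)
Your proof is correct, but it takes a genuinely different route from the paper. The paper does not block-encode $A$ itself: it applies Corollary~\ref{cor:PositivePowerCost} (the positive-power case of Theorem~\ref{thm:Taylor}) to obtain a $(2,\bOt{1},\Theta(\theta))$-block-encoding of $\sqrt{I/2+A/4}$, applies that single unitary to $\rho\otimes\ketbra{0}{0}$, and measures the block-encoding ancilla; the acceptance probability is then $\approx \frac18+\frac{\tr{A\rho}}{16}$, which is converted into an unbiased-up-to-$\theta$ estimator by outputting $14$ or $-2$. The point of the square-root trick is that no \emph{controlled} version of the block-encoding is ever needed, and block-encoding errors enter the acceptance probability only linearly. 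Your route — Theorem~\ref{thm:Taylor} with $f(x)=x$ to get a $(2,\cdot,\theta)$-block-encoding of $A$, followed by a Hadamard test — is also valid and even gives a smaller variance ($\sigma\leq 2$ versus the paper's $\leq 6$), but it buys this at the cost of needing controlled-$\tilde U$, which is exactly where care is required. Your fix (control only the LCU preparation unitaries, so that with the control off the time register stays at $m=0$ and the simulation acts as identity) is sound when the hypothesis literally hands you the exact controlled-simulation unitary $W$ of Definition~\ref{def:controlledSim}. However, in the paper's actual applications (Corollary~\ref{cor:specTraceCalc} via Lemma~\ref{lemma:controlledHamsin}) one only has a $(1,a,\eps)$-block-encoding $V$ of $W$, and then the ``control-off'' branch is not identity but leaks out of the $\ket{0}$-ancilla block with amplitude $\bigO{\sqrt{\eps'}}$; since the Hadamard test interferes the two branches, this would degrade your bias to $\bigO{\sqrt{\theta}}$ with your choice $\eps'=\theta/2$. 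This is easily repaired by taking $\eps'=\Theta(\theta^2)$, which only changes logarithmic factors and so still fits in $\BOT{\theta}{1}$, but it should be stated; the paper's square-root construction sidesteps the issue entirely.
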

\begin{proof}
	We apply Corollary~\ref{cor:PositivePowerCost} to the square root function with the operator $I/2+A/4$ (so that $\kappa\geq 4$), giving a $(2,\bOt{1},\Theta(\theta))$-block-encoding $\tilde{U}$ of $\sqrt{I/2+A/4}$ using an $(\BOT{\theta}{1},\bigO{1})$-controlled Hamiltonian simulation circuit for $I/2+A/4$. The probability of finding the ancilla qubit to be $\ket{0}$ upon measurement is 
	\begin{align*}
		\tr{\left(I\otimes\bra{0}\right)\tilde{U}^\dagger \left(\rho\otimes\ketbra{0}{0}\right)\tilde{U}\left(I\otimes\ket{0}\right)}
		&=\mathrm{Tr}\Big(
			\underset{\approx\frac{\sqrt{I/2+A/4}}{2}}{\underbrace{\left(I\otimes\bra{0}\right)\tilde{U}\left(I\otimes\ket{0}\right)}}
			\underset{\approx\frac{\sqrt{I/2+A/4}}{2}}{\underbrace{\left(I\otimes\bra{0}\right)\tilde{U}^\dagger\left(I\otimes\ket{0}\right)}}
			\rho\Big)\\
		&=
		\frac{1}{8} + \frac{\tr{A\rho}}{16}+ \bigO{\theta}.
	\end{align*}
	Upon measuring the ancilla qubit and getting outcome $\ket{0}$ we output $16-2=14$. In case of any other measurement outcome we output $-2$. By choosing the right constants so that $\tilde{U}$ is a precise enough block-encoding we can ensure that the bias is less than $\theta/2$, and the standard deviation $\sigma\leq 6$.
	
	Finally observe that a controlled Hamiltonian simulation circuit for $I/2+A/4$ can be implemented as a product of controlled Hamiltonian simulation circuits for $I/2$ and $A/4$. 
\end{proof}

The following lemma from~\cite{CGJ:PowerOfBlockPowers18} gives a way to implement the controlled Hamiltonian simulation used in this appendix, provided a block-encoding of $H$, as used in this paper.
In particular, it shows how to make Theorem~\ref{thm:blockHamSim} controlled by a time parameter.

\begin{lemma}[\cite{CGJ:PowerOfBlockPowers18}]\label{lemma:controlledHamsin}
	Let $M=2^J$ for some $J\in \mathbb{N}$, $\tau\in\mathbb{R}$ and $\epsilon\geq0$. Suppose that $U$ is an $(\alpha,a,\eps/|2 (J+1)^2 M \tau|)$-block-encoding of the Hamiltonian $H$. Then we can implement a $(1,a+2,\eps)$-block-encoding of a controlled $(M,\tau)$-simulation of the Hamiltonian $H$, with $\bigO{|\alpha  M\tau|+J\frac{\log(J/\eps)}{\log\log(J/\eps)}}$ uses of controlled-$U$ or its inverse and with $\bigO{a|\alpha M\tau|+aJ\frac{\log(J/\eps)}{\log\log(J/\eps)}}$ three-qubit gates.	
\end{lemma}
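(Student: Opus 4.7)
The plan is to reduce the task to $J{+}1$ separate applications of the optimal block-Hamiltonian simulation theorem (Theorem~\ref{thm:blockHamSim}) via a bit-by-bit decomposition of the exponent $m\tau$. Writing $m=-b_J 2^J+\sum_{j=0}^{J-1}b_j 2^j$ with $b_j\in\{0,1\}$ indicating the bits of the signed integer $m$, I would use the factorization
$$
e^{im\tau H}\;=\;e^{-i b_J 2^J\tau H}\prod_{j=0}^{J-1}e^{i b_j 2^j\tau H},
$$
which lets me realize $W=\sum_m\ketbra{m}{m}\otimes e^{im\tau H}$ as a product of $J{+}1$ elementary ``controlled-exponential'' factors $C_j$, where $C_j$ conditions on the $j$-th control qubit and applies $e^{\pm i 2^j\tau H}$ on the data register.

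Next, for each $j\in\{0,\dots,J\}$ I would invoke Theorem~\ref{thm:blockHamSim} to build a $(1,a+2,\eps/(J+1))$-block-encoding $V_j$ of $e^{\pm i 2^j\tau H}$. That theorem demands an input block-encoding of $H$ with precision $\eps/(J{+}1)$ divided by $|2\cdot 2^j\tau|$; the tightest case is $j=J$, which requires the shared oracle $U$ to be at least a $(\alpha,a,\eps/(2(J+1)^2 M\tau))$-block-encoding (the extra $(J{+}1)$ factor absorbs slack from composing errors across the $J{+}1$ controlled-simulation calls and from controlling each $V_j$). Making each $V_j$ controlled on the appropriate qubit of the register $\ket{m}$ is essentially free: because $V_j$ and the identity agree on the $\ket{0}^{\otimes a+2}$-ancilla subspace modulo the exponential, I only need to replace $V_j$ by $\ketbra{0}{0}_j\otimes I+\ketbra{1}{1}_j\otimes V_j$, which costs $\bigO{1}$ additional controls per query. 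Sequential composition of the controlled $V_j$'s then implements (a $(1,a+2,\cdot)$-block-encoding of) $W$; since each $V_j$ returns its ancilla block approximately to $\ket{0}^{\otimes a+2}$ up to error $\eps/(J{+}1)$, the triangle inequality bounds the total operator-norm error of the product by $\eps$.

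For the complexity bookkeeping, I would simply add up the costs promised by Theorem~\ref{thm:blockHamSim}: the $j$-th simulation uses $\bigO{\alpha\cdot 2^j\tau+\log((J+1)/\eps)/\log\log((J+1)/\eps)}$ queries to $U$ and $\bigO{a}$ times that many gates. Summing a geometric series in $j$ gives $\bigO{\alpha M\tau+J\log(J/\eps)/\log\log(J/\eps)}$ queries and $\bigO{a\alpha M\tau+aJ\log(J/\eps)/\log\log(J/\eps)}$ three-qubit gates, matching the claimed bounds.

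The main obstacle is the error accounting in the composition step: I need to verify that chaining $J{+}1$ block-encodings whose ``logical'' parts are close to unitaries, but whose off-block parts are arbitrary, still yields a block-encoding of the product with linearly-accumulating error. The clean way to handle this is to observe that if $\|( \bra{0}^{\otimes a+2}\otimes I)V_j(\ket{0}^{\otimes a+2}\otimes I)-e^{i 2^j\tau H}\|\le\eps_j$ then conjugating by the ancilla projector gives an $\eps_j$-approximately isometric action on the $\ket{0}$-ancilla subspace, so the product of $J{+}1$ such operators approximates the product of the target unitaries up to $\sum_j\eps_j=\eps$. The only subtlety one must track is that the extra factor of $(J{+}1)$ in the input-precision denominator is used precisely to keep this triangle-inequality slack below $\eps$, which is why the statement requires the block-encoding of $H$ to be $\eps/(2(J{+}1)^2 M\tau)$-accurate rather than just $\eps/(2(J{+}1)M\tau)$-accurate.
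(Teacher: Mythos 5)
The paper does not actually prove this lemma --- it is stated as an import from~\cite{CGJ:PowerOfBlockPowers18} with no argument given --- so your proposal has to stand on its own. Your overall strategy is the right one and is essentially the construction used in the cited work: decompose $W=\sum_m\ketbra{m}{m}\otimes e^{im\tau H}$ into $J+1$ factors via the signed binary expansion of $m$, invoke Theorem~\ref{thm:blockHamSim} once per bit with evolution time $\pm 2^j\tau$, and sum a geometric series to get the $\bigO{|\alpha M\tau|}$ term plus $J+1$ copies of the $\log(J/\eps)/\log\log(J/\eps)$ overhead. That bookkeeping is correct.

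The genuine gap is in the error-composition step, which you correctly identify as the main obstacle but then resolve incorrectly. If $V_j$ is merely a $(1,a+2,\eps_j)$-block-encoding of the unitary $e^{i2^j\tau H}$, i.e.\ $\nrm{(\bra{0}^{\otimes a+2}\otimes I)V_j(\ket{0}^{\otimes a+2}\otimes I)-e^{i2^j\tau H}}\leq\eps_j$, this does \emph{not} give an $\eps_j$-approximately isometric action on the $\ket{0}$-ancilla subspace: writing $P$ for the projector onto that subspace, one only gets $\nrm{(I-P)V_jP}\leq\sqrt{1-(1-\eps_j)^2}=\bigO{\sqrt{\eps_j}}$, so the leakage out of the block --- and hence the cross terms when you multiply the $V_j$ on a \emph{shared} ancilla register --- accumulates as $\sum_j\bigO{\sqrt{\eps_j}}$, not $\sum_j\eps_j$. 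With $\eps_j=\eps/(J+1)$ this is $\bigO{\sqrt{(J+1)\eps}}$, which is not $\leq\eps$, and the extra factor of $(J+1)$ in the input precision of $U$ is nowhere near enough to absorb a square-root loss (you would need each $\eps_j=\bigO{\eps^2/(J+1)^2}$). The standard additive-error product lemma for block-encodings avoids this only by giving each factor \emph{fresh} ancillas, which would blow the ancilla count up to $(J+1)(a+2)$ and contradict the claimed $a+2$. To close the gap you must use the stronger guarantee of the qubitization construction behind Theorem~\ref{thm:blockHamSim}: the circuit for $e^{itH}$ can be arranged so that $\nrm{V_j\ket{0}^{\otimes a+2}\ket{\psi}-\ket{0}^{\otimes a+2}e^{i2^j\tau H}\ket{\psi}}\leq\eps_j$ as a vector bound (equivalently $\nrm{(I-P)V_jP}\leq\eps_j$ as well), after which the telescoping/triangle-inequality argument does give linear accumulation. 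A secondary point worth a sentence in a full write-up: controlling $V_j$ wholesale turns its calls to controlled-$U$ into doubly-controlled-$U$, which is not among the granted resources; one either compiles each doubly-controlled-$U$ from one controlled-$U$ plus Toffolis on an ancilla, or (as in the cited construction) controls only the single-qubit phase gates of the signal-processing sequence --- this is precisely why the gate count is stated in three-qubit gates.
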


\end{document}